\xpatchcmd{\proof}{\hskip\labelsep}{\hskip4.3\labelsep}{}{}
\newcommand{\ketbra}[2]{\ket{#1}\!\bra{#2}}
\newcommand{\tr}{\mathrm{Tr}}
\newcommand{\sysA}{A}
\newcommand{\sysB}{B}
\newcommand{\sysC}{C}
\newcommand{\probP}{\bm{P}}
\newcommand{\probrho}{\bm{P}}
  \renewcommand\@upn{\itshape}
\newtheorem{theorem}{Theorem}
\newtheorem{lemma}{Lemma}
\newtheorem{proposition}{Proposition}
\newtheorem{definition}{Definition}
\newtheorem{remark}{Remark}
\def\({\left(}
\def\){\right)}
\begin{document}
\title{Asymptotically tight security analysis of quantum key distribution based on universal source compression} 
\author{Takaya Matsuura}
 \email{takayamatsuura@gmail.com}
 \affiliation{RIKEN Center for Quantum Computing (RQC), Hirosawa 2-1, Wako, Saitama 351-0198, Japan} 
 \author{Shinichiro Yamano}
 \affiliation{Department of Applied Physics, Graduate School of Engineering, The University of Tokyo, 7-3-1 Hongo, Bunkyo-ku, Tokyo 113-8656, Japan}
 \author{Yui Kuramochi}
 %\affiliation{Department of Communication Engineering and Informatics, University of Electro-Communications, 1-5-1 Chofugaoka, Chofu, Tokyo, 182-8585, Japan}
 \affiliation{Graduate School and Faculty of Information Science and Electrical Engineering, Kyushu University, 744 Motooka, Nishi-ku, Fukuoka 819-0395, Japan}
\author{Toshihiko Sasaki}
\affiliation{Quantinuum K.K., 1-9-2 Otemachi, Chiyoda-ku, Tokyo 100-0004, Japan}
 %\affiliation{Department of Applied Physics, Graduate School of Engineering, The University of Tokyo, 7-3-1 Hongo, Bunkyo-ku, Tokyo 113-8656, Japan} 
 %\affiliation{Photon Science Center, Graduate School of Engineering, The University of Tokyo, 7-3-1 Hongo, Bunkyo-ku, Tokyo 113-8656, Japan} 
\author{Masato Koashi}
 \affiliation{Department of Applied Physics, Graduate School of Engineering, The University of Tokyo, 7-3-1 Hongo, Bunkyo-ku, Tokyo 113-8656, Japan} 
 \affiliation{Photon Science Center, Graduate School of Engineering, The University of Tokyo, 7-3-1 Hongo, Bunkyo-ku, Tokyo 113-8656, Japan}

\begin{abstract}
    Practical quantum key distribution (QKD) protocols require a finite-size security proof.  The phase error correction (PEC) approach is one of the general strategies for security analyses that has successfully proved finite-size security for many protocols.  However, the conventional PEC approach cannot achieve the asymptotically optimal key rate in general, as long as the failure probability of PEC is estimated through the phase error rate. In this work, we propose a new PEC-type strategy that can provably achieve the asymptotically optimal key rate. The key piece for this is a virtual protocol based on universal source compression with quantum side information, which is of independent interest. A universal source compression with quantum side information protocol is first constructed for fixed-length independent and identically distributed (i.i.d.)~setups and then extended to adaptive-length setups with the restrictions on possible states imposed by joint random variables.
    Combined with the reduction method to collective attacks, this enables us to tightly evaluate the failure probability of PEC for permutation-symmetric QKD protocols, and thus leads to asymptotically tight analyses. As a result, the security of any permutation-symmetrizable QKD protocol gets reduced to the estimation problem of a single conditional R\'enyi entropy, which can be efficiently solved by a convex optimization.
\end{abstract}

\maketitle

\section{Introduction}
One of the important aspects of information theory is to find a fundamental connection or a duality between different information-theoretic tasks.  An example of such a duality in quantum information theory is the security of quantum key distribution (QKD) and error correction.
In QKD, one can reduce the security of the key to the error correctability of a binary string in a basis complementary to the one that defines the key~\cite{Lo1999, Shor2000, Koashi2009}.  In this security-proof framework called phase error correction (PEC), the length of the key that needs to be shortened in privacy amplification corresponds to a required amount of virtual syndrome extraction to correct the error string in a complementary basis.
First used in the security proof of the Bennett-Brassard 1984 (BB84) protocol~\cite{Bennett1984}, PEC has succeeded in proving the security of various QKD protocols in realistic scenarios, especially with finite numbers of communication rounds (see e.g.~\cite{Shor2000, Lo2001, Tamaki2003, Boileau2005} for early studies).
Later, another approach to QKD security proofs appeared~\cite{Renner2005, Renner2008, Tomamichel2011}, which generalized the leftover hash lemma (LHL)~\cite{Stinson2001} against a classical adversary to that against a quantum adversary.  This purely information-theoretic approach to the QKD security gives a tight bound on a secure key rate while the evaluation of the relevant information-theoretic quantity, the smooth conditional min-entropy~\cite{Renner2008}, is often difficult.  On the other hand, the PEC approach is relatively easy to give a lower bound on the secure key rate since the construction of ``an'' error correction procedure automatically implies an amount of the extractable key.  Since PEC is virtually performed between a sender Alice and a receiver Bob, one can freely restrict the ability of Alice and Bob to simplify the problem at the cost of a worse key rate.  In fact, restricting Alice's and Bob's capabilities leads to the overestimation of an eavesdropper Eve's attack strategy, which thus results in a pessimistic but still secure key rate.  The duality between the security of QKD and PEC thus resembles that between the primal and dual problems of an optimization problem.

A major drawback of the PEC approach is that the conventional analysis~\cite{Shor2000, Koashi2009} may not achieve the asymptotically optimal key rate for a given QKD protocol~\cite{Devetak2005}, which has been pointed out by several works~\cite{Furrer2014, Tsurumaru2020, Tsurumaru2022}.  This is attributed to the fact that the conventional analysis estimates the phase error patterns through virtual round-wise measurements between Alice and Bob. This is to ease the evaluation of the failure probability of PEC, which is eventually characterized by the phase error rate defined through Alice's and Bob's virtual measurements. Thus, if Alice extracts a key bit from the Pauli-$Z$ basis of the system $K$ of the Alice's and Bob's joint system $KQ$, then the asymptotic rate of the syndrome they need to extract for the PEC is given by $H(X|Q)_{{\cal P}_{K\to X}\otimes {\cal M}_{Q}(\rho_{KQ})}$, where ${\cal P}_{K\to X}$ maps the Pauli-$X$ basis of the system $K$ to the classical register $X$, and ${\cal M}_{Q}$ denotes a measurement channel on Alice's and Bob's joint system $Q$ whose POVM element defines the phase error rate for a classical-quantum state ${\cal P}_{K\to X}(\rho_{KQ})$.  Any heuristic choice of ${\cal M}_{Q}$ works for a security proof, but it is better to choose ${\cal M}_{Q}$ such that $H(X|Q)_{{\cal P}_{K\to X}\otimes {\cal M}_{Q}(\rho_{KQ})}$ is smaller and easier to upper-bound.
However, for correcting an $n$-bit phase-error pattern, i.e., the $X$-basis error pattern, in the system $K^n$, this individual measurement ${\cal M}_{Q}^{\otimes n}$ is not the optimal strategy.
It is known that by performing a globally optimal measurement on $Q^n$ depending on the $X$-basis syndrome of the system $K^n$ with the rate $H(X|Q)_{{\cal P}_{K\to X}(\rho_{KQ})}$, one can uniquely identify the $X$-basis error pattern and thus correct it with unit probability in the limit $n\to \infty$~\cite{Devetak2003, Renes2010, Renes2012, Tsurumaru2020, Cheng2021, Tsurumaru2022, Renes2023}. This information-theoretic task has been studied under the name ``classical source compression with quantum side information'' or ``classical-quantum Slepian Wolf.'' The syndrome extraction of the rate $H(X|Q)_{{\cal P}_{K\to X}(\rho_{KQ})}$ leads to the final key rate $1 - H(X|Q)_{{\cal P}_{K \to X}(\rho_{KQ})}$, which is then equal to $H(Z|E)_{{\cal P}_{K \to Z}(\psi_{KQE})}$ from the entropic uncertainty relation~\cite{Coles2012,Coles2017,Renes2010,Tsurumaru2022,Renes2023}, where $\psi_{KQE}$ is a purification of $\rho_{KQ}$.  This final key rate is the same as that concluded from the LHL.

What is important here is that the gap $\min_{{\cal M}_Q} H(X|Q)_{{\cal P}_{K\to X} \otimes {\cal M}_{Q}(\rho_{KQ})} - H(X|Q)_{{\cal P}_{K\to X}(\rho_{KQ})}$ corresponds to the quantum discord~\cite{Ollivier2001} of the state ${\cal P}_{K\to X}(\rho_{KQ})$ and is thus strictly larger than zero except for the special case.
This means that even with the best choice of ${\cal M}_Q$ (i.e., the phase error), the conventional PEC analysis overestimates the required rate of syndrome extraction as $H(X|Q)_{{\cal P}_{K\to X}\otimes {\cal M}_{Q}(\rho_{KQ})}$ instead of $H(X|Q)_{{\cal P}_{K\to X}(\rho_{KQ})}$ due to the suboptimal measurement strategy, and thus it cannot achieve the asymptotically optimal key rate.

To overcome this issue, one needs to modify the PEC procedure so that it can incorporate a global measurement in the same way as is done in classical source compression with quantum side information.  This is exactly what is proposed in Refs.~\cite{Tsurumaru2020, Tsurumaru2022}, which tried to show the equivalence between PEC-based and LHL-based analyses.  However, this is not the end of the story; for the protocol in Refs.~\cite{Tsurumaru2020, Tsurumaru2022} to work, one needs to estimate an entropic quantity of a global $n$-body state from a few parameters of the state obtained through round-wise measurements in actual QKD protocols. 
Thus, their protocol essentially says that one can construct a PEC-based security analysis whenever one can construct an LHL-based security analysis, which eliminates the advantage of the PEC-based approach, namely, easier constructions of finite-size security proofs.  
To leverage the implication from Refs.~\cite{Tsurumaru2020, Tsurumaru2022} while preserving the advantage of the PEC-based approach, it is necessary to construct a classical source compression protocol with quantum side information that has a certified failure probability based solely on estimated parameters in actual QKD protocols.

Here, we first develop a universal decoder for classical source compression with quantum side information with an explicit bound on the failure probability.  For an i.i.d.~quantum state $\rho_{KQ}^{\otimes n}$, our universal decoder works even if one does not know the state $\rho_{KQ}$ itself---it can be constructed solely from (an upper bound on) a conditional R\'enyi entropy of the state ${\cal P}_{K\to X}(\rho_{KQ})$.  We then develop a PEC-type QKD security proof based on this universal source compression protocol.  Since an upper bound on the conditional R\'enyi entropy can be obtained through the estimated parameters in a QKD protocol by a convex optimization, one can construct a security proof that achieves the asymptotically optimal rate~\cite{Devetak2005} in combination with the reduction method to collective attacks~\cite{Tamaki2003, Christandl2009, Fawzi2015, Matsuura2024, Nahar2024}.
We thus obtain the PEC-based approach that can reproduce the results of the LHL-based approach at the finite-size security-proof level in a large block length limit.  Since our security proof can be completed solely with the state on Alice's and Bob's joint system, i.e., without an adversarial quantum system that is hard to characterize, our method is potentially more tractable to evaluate a necessary amount of privacy amplification than the LHL-based approach.  In fact, the non-necessity of taking an adversarial state into account has already led to better performance in the reduction to collective attacks (see Ref.~\cite{Matsuura2024}).  Furthermore, as a byproduct of this new approach, we generalize a PEC-based security proof to a general base-$p$ number while the original approach is limited to the binary number~\cite{Koashi2009}.

The paper is organized as follows.
In Sec.~\ref{sec:cq_slepian_wolf}, we develop a universal decoder for classical source compression with quantum side information, which is of independent interest.  After the problem setup in Sec.~\ref{sec:problem_setup} and the preliminaries in Sec.~\ref{sec:preliminaries}, the main result of this section is stated as Theorem~\ref{thm:cq_slepian_wolf} in Sec.~\ref{sec:construction_universal}.  For discussion on the (sub)optimality of the error exponent of our protocol, see Remark~\ref{rem:exponent}. Theorem~\ref{thm:cq_slepian_wolf} is then adjusted to the form that is directly applicable to QKD security proofs in Proposition~\ref{prop:modified_for_qkd}.
In Sec.~\ref{sec:security_proof_universal}, we develop a PEC-type security proof based on the (partially) universal decoding for classical source compression with quantum side information developed in Sec.~\ref{sec:cq_slepian_wolf}.  In fact, our security proof holds not only for qubit-based protocols but also for any finite dimensional protocols, which is also an extension of the original PEC~\cite{Shor2000, Koashi2009}. 
Section~\ref{sec:complementary_bases} is to explain the definition of the complementary bases in general dimension, and Sec.~\ref{sec:virtual_protocol} is to introduce a virtual protocol for this general dimensional protocol. Then, Sec.~\ref{sec:connection_partially_universal} is devoted to connecting the failure probability of the PEC to that of classical source compression with quantum side information, which is done through yet another protocol called the estimation protocol. Section~\ref{sec:iid_reduction} reduces the security against general attacks to that against collective attacks, and then Sec.~\ref{sec:estimation_failure_prob} estimates the failure probability of the PEC, completing the security proof. Section~\ref{sec:asymptotic_optimality} discusses the asymptotic optimality of our new PEC method by comparing our asymptotic key rate formula with the Devetak-Winter rate. A short section, Sec.~\ref{sec:parameter_choice}, explains a good initial guess of the optimization parameter.
We numerically demonstrate the improvement of the key rate with our new method in Sec.~\ref{sec:numerical_comparison} by applying the conventional PEC-based analysis (Sec.~\ref{sec:conventional_PEC}) and our new analysis (Sec.~\ref{sec:new_PEC}) to the Bennett1992 (B92) protocol~\cite{Bennett1992}. The numerical simulation of the comparison of the key rate can be found in Sec.~\ref{sec:numerical_simulation}.  Finally, in Sec.~\ref{sec:discussion}, we wrap up our paper with possible future directions.

\section{Universal decoder for classical source compression with quantum side information} \label{sec:cq_slepian_wolf}
In this section, we develop a universal decoder for classical source compression with quantum side information or the classical-quantum (c-q) Slepian-Wolf problem~\cite{Devetak2003}.  
The encoding function of our protocol is based on a random construction, which is natural in the subsequent application to the QKD, and therefore our protocol is not completely universal in the sense that we cannot find a fixed encoder independently of the information source.  Nevertheless, our random encoding function is independent of the information source, so our protocol can be regarded as universal classical source compression with quantum side information and shared randomness between an encoder and a decoder.

Due to the duality of source compression with quantum side information and c-q channel coding~\cite{Cheng2018}, the problem of universal classical source compression with quantum side information is closely related to the universal c-q channel coding, which has already been constructed~\cite{Hayashi2009}.  However, since the classical information source is also probabilistic in universal source compression with quantum side information, the fixed-type encoding used for the universal c-q channel coding in Ref.~\cite{Hayashi2009} is incompatible.  Here, we explicitly construct a random encoder and a universal decoder for source compression with quantum side information without the knowledge of the source c-q state.  The obtained error exponent is slightly better than the one naively expected from the result in Ref.~\cite{Hayashi2009} due to the improved decoder based on the result in Ref.~\cite{Beigi2023}, but worse than the case of a known c-q state source~\cite{Cheng2021, Renes2023}.  For later use in the security analysis of QKD, we also develop a partially universal decoding strategy, i.e., the classical probability distribution of the information source is known while the quantum state of the side information is unknown.
The decoding error decreases in this case compared to the fully universal case, but it is only a subexponential improvement.

\subsection{Problem setups} \label{sec:problem_setup}

\begin{figure}[t]
    \centering
    \includegraphics[width=0.9\textwidth]{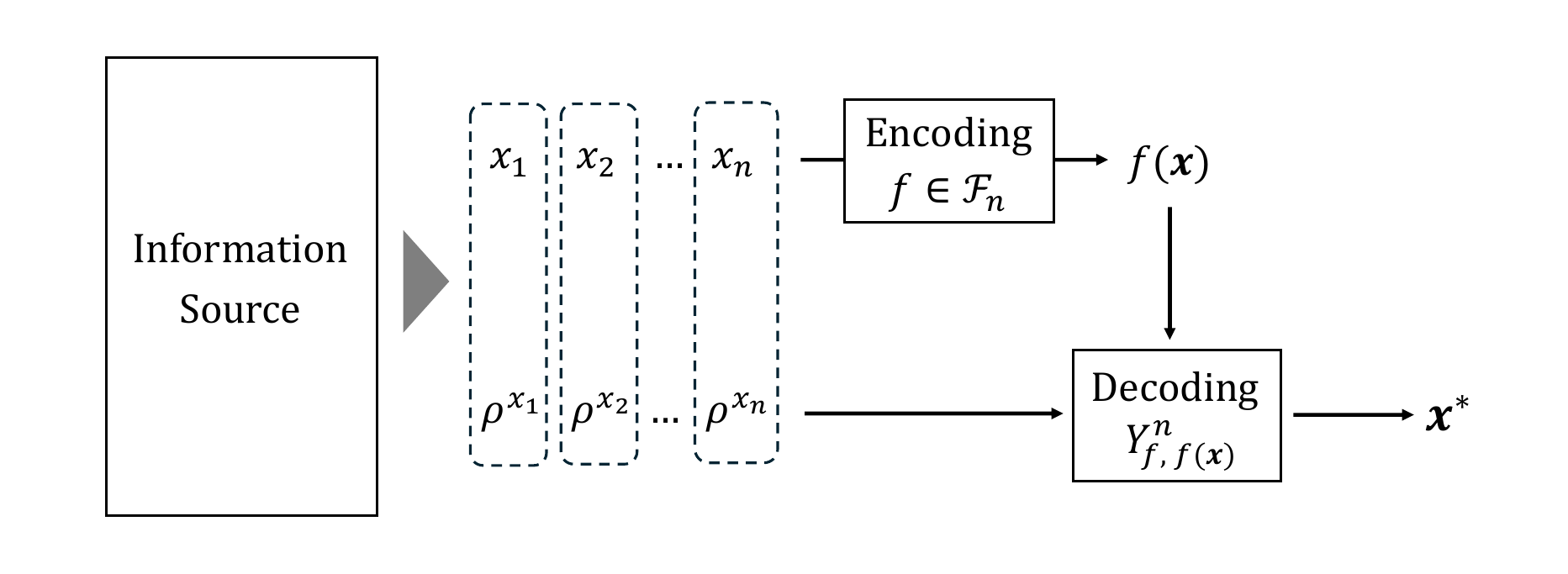}
    \caption{A schematic picture of classical source compression with quantum side information.  In our setup, the set ${\cal F}_n$ of encoding functions and the decoding POVM $Y^n$ are constructed so that they do not depend on the classical-quantum state $\rho_{XQ}$ of the information source.}
    \label{fig:cq_slepian_wolf}
\end{figure}

The information source outputs an unknown i.i.d.~c-q state $\rho_{XQ}^{\otimes n}$ with 
\begin{equation}
    \rho_{XQ}\coloneqq \sum_{x\in{\cal X}}p(x)\ketbra{x}{x}_X\otimes \rho_Q^x, \label{eq:source_state}
\end{equation}
where $X$ denotes an $|{\cal X}|$-dimensional classical system and $B$ denotes a $d$-dimensional quantum system.
An encoder has a set ${\cal F}_n$ of functions that compresses a classical random length-$n$ string $\bm{x}\in{\cal X}^n$ to a bin $b\in{\cal B}_n$.  The encoder and a decoder share randomness to specify an encoding function $f\in{\cal F}_n$.  With a randomly chosen function $f:{\cal X}^n\to{\cal B}_n$ from ${\cal F}_n$, the encoder sends the bin $f(\bm{x})$ to the decoder.  Given the bin $b\in{\cal B}_n$ and the state $\rho_{Q^n}^{\bm{x}}=\rho_Q^{x_1}\otimes\cdots\otimes\rho_Q^{x_n}$, the decoder tries to recover the length-$n$ string $\bm{x}$ in such a way that a decoding POVM does not depend on $\rho_{XB}$.  
If we denote the decoding POVM when given the encoding function $f\in{\cal F}_n$ and the bin $b\in{\cal B}_n$ as $\{Y^n_{f,b}(\bm{y})\}_{\bm{y}\in f^{-1}(b)}$, where $f^{-1}$ is the preimage of the function $f$, then the decoding error $P_{\rm err}({\cal F}_n, Y^n)_{\rho}$ of this protocol averaged over encoding functions is given by 
\begin{align}
    P_{\rm err}({\cal F}_n,Y^n)_{\rho} &= \mathbb{E}_{f\sim |{\cal F}_n|^{-1}}\!\left[\sum_{\bm{x}\in{\cal X}^n} p^n(\bm{x}) \tr\bigl[\rho_{Q^n}^{\bm{x}}(I_{Q^n} - Y^n_{f,f(\bm{x})}(\bm{x}))\bigr]\right] \label{eq:pre_def_of_error}\\
    &= \mathbb{E}_{f\sim |{\cal F}_n|^{-1}}\,\tr\left[\rho_{XQ}^{\otimes n} \left(I_{X^n Q^n} - \sum_{\bm{x}\in{\cal X}^n}\ketbra{\bm{x}}{\bm{x}}_{X^n} \otimes Y^n_{f,f(\bm{x})}(\bm{x})\right)\right], \label{eq:def_of_error_prob}
\end{align}  
where $p^n(\bm{x})\coloneqq p(x_1)\cdots p(x_n)$.
The asymptotic compression rate $R$ of this protocol is defined as 
\begin{equation}
    R\coloneqq \lim_{n\to \infty}\frac{\log|{\cal B}_n|}{n}.
\end{equation}
The protocol is said to achieve the asymptotic compression rate $R$ by the random encoders $\{{\cal F}_n\}_{n\in\mathbb{N}}$ and the universal decoder $\{Y^n\}_{n\in\mathbb{N}}$ if the error $P_{\rm err}({\cal F}_n, Y^n)_{\rho}$ is asymptotically vanishing.
The setup is illustrated in Fig.~\ref{fig:cq_slepian_wolf}.

\subsection{Preliminaries: universal symmetric state, type theory, and entropic quantities} \label{sec:preliminaries}
Throughout the paper, the base of the logarithm is taken to be $2$.
Let ${\cal H} \coloneqq (\mathbb{C}^d)^{\otimes n}$. Let ${\cal L}({\cal H})$ be the set of linear operators on ${\cal H}$, and let ${\cal D}({\cal H})$ be its subset of density operators.
Let $Y_n^d$ be the set of Young diagrams with $n$ boxes and at most $d$ rows.
An element of $Y_n^d$ can be labeled by $\bm{n}=(n_1, n_2, \ldots, n_d)$ with $n_1 \geq n_2 \geq \cdots n_d$ and $\sum_{i=1}^{d} n_i = n$.
Then, from Schur-Weyl duality, ${\cal H}$ can be decomposed into
\begin{equation}
    {\cal H} = \bigoplus_{\bm{n}\in Y_n^d} {\cal U}_{\bm{n}} \otimes {\cal V}_{\bm{n}},
\end{equation}
where ${\cal U}_{\bm{n}}$ denotes a representation space of $\mathrm{SU}(d)$ and ${\cal V}_{\bm{n}}$ denotes that of the permutation group $S_n$.  For any $U\in \mathrm{SU}(d)$, $U^{\otimes n}$ can be decomposed into 
\begin{equation}
    U^{\otimes n} = \bigoplus_{\bm{n}\in Y_n^d} \pi_{\bm{n}}(U) \otimes I_{{\cal V}_{\bm{n}}},
\end{equation}
where $\pi_{\bm{n}}$ denotes the irreducible representation of $\mathrm{SU}(d)$ on ${\cal U}_{\bm{n}}$.  Similarly, any unitary representation $V_{s}$ of $s \in S_n$ can be decomposed into 
\begin{equation}
    V_{s} = \bigoplus_{\bm{n}\in Y_n^d} I_{{\cal U}_{\bm{n}}} \otimes \zeta_{\bm{n}}(s), \label{eq:unitary_rep_perm}
\end{equation}
where $\zeta_{\bm{n}}$ denotes the irreducible representation of $S_n$ on ${\cal V}_{\bm{n}}$.  Any state that commutes with $U^{\otimes n}$ for all $U\in \mathrm{SU}(d)$ or commutes with $V_{s}$ for all $s\in S_n$ has the same block diagonal form in this Schur-Weyl basis from Schur's lemma.

Let $\Pi_{\bm{n}}$ be a projection onto the subspace labeled by $\bm{n}$.
Then, we define
\begin{align}
    \sigma_{\bm{n}} &\coloneqq \frac{\Pi_{\bm{n}}}{\dim ({\cal U}_{\bm{n}}\otimes {\cal V}_{\bm{n}})} , \label{eq:uniform_in_a_irreducible}\\
    \sigma_{U,n} &\coloneqq \sum_{\bm{n}\in Y_n^d} \frac{1}{|Y_n^d|} \sigma_{\bm{n}},
    \label{eq:uniform_state}
\end{align}
where $\sigma_{U,n}$ is called the universal symmetric state~\cite{Hayashi2009,Mosonyi2017,Hayashi2017}.  A state that commutes with both $U^{\otimes n}$ for all $U\in \mathrm{SU}(d)$ and $V_{s}$ for all $s\in S_n$ can be written as $\sum_{\bm{n}\in Y_n^d}p_{\bm{n}}\sigma_{\bm{n}}$, where $\{p_{\bm{n}}\}_{\bm{n}\in Y_n^d}$ is a probability distribution.  The state $\sigma_{\bm{n}}$ for any $\bm{n}\in Y_n^d$ commutes with any operator that has a block-diagonal form in the Schur-Weyl basis, and therefore so does $\sigma_{U, n}$.  In particular, the universal symmetric state $\sigma_{U, n}$ commutes with any operator of the form $O^{\otimes n}$, which will be used later. 
For any i.i.d.~state $\rho^{\otimes n}$, the following holds from its permutation symmetry:
\begin{equation}
    \Pi_{\bm{n}} \rho^{\otimes n} \Pi_{\bm{n}} = \rho_{{\cal U}_{\bm{n}}} \otimes \frac{I_{{\cal V}_{\bm{n}}}}{\dim {\cal V}_{\bm{n}}} \leq \frac{\Pi_{\bm{n}}}{\dim {\cal V}_{\bm{n}}}  = \dim {\cal U}_{\bm{n}} \sigma_{\bm{n}} ,\label{eq:direct_sum_decomp}
\end{equation}
where $\rho_{{\cal U}_{\bm{n}}}$ is a subnormalized density operator.
Therefore, we have
\begin{equation}
    \rho^{\otimes n} \leq \sum_{\bm{n}\in Y_n^d} \dim {\cal U}_{\bm{n}} \, \sigma_{\bm{n}} \leq \max_{\bm{n}}(\dim {\cal U}_{\bm{n}}) |Y_n^d| \sigma_{U,n}.
    \label{eq:bound_universal_symmetric}
\end{equation}
It is known that the following upper bounds hold~\cite{Itzykson1966,Hayashi2009, Mosonyi2017}:
\begin{equation}
    \forall \bm{n}\in Y_n^d,\qquad\dim {\cal U}_{\bm{n}} \leq (n+1)^{\frac{d(d-1)}{2}},
\end{equation}
and
\begin{equation}
    |Y_n^d| \leq (n+1)^{d-1}.
\end{equation}
Therefore, the coefficient of $\sigma_{U,n}$ in Eq.~\eqref{eq:bound_universal_symmetric} can be bounded from above by 
\begin{equation}
    \max_{\bm{n}}(\dim {\cal U}_{\bm{n}}) |Y_n^d|\leq (n+1)^{\frac{(d+2)(d-1)}{2}}. \label{eq:quantum_coefficient_bound}
\end{equation}

Let us now consider a string $\bm{x}\in{\cal X}^n$. For ease of discussion, let us label each element of ${\cal X}$ by an integer from $1$ to $k$, where $k=|{\cal X}|$.  
Define $P_{\bm{x}}$ as the type of the string $\bm{x}$, i.e., the probability distribution over ${\cal X}$ satisfying
\begin{equation}
    \forall y\in{\cal X},\qquad P_{\bm{x}}(y) = \frac{\left|\bigl\{i\in\{1,\ldots,n\}:x_i=y\bigr\}\right|}{n}.
\end{equation}
Let ${\cal P}_n$ be the set of types for length-$n$ strings.
It is known that the following holds~\cite{Hayashi2017}:
\begin{equation}
    |{\cal P}_n| \leq (n+1)^{k-1}. \label{eq:number_of_types}
\end{equation}
For each $P\in{\cal P}_n$, let ${\cal T}_P$ be the set of length-$n$ strings with the type $P$, i.e.,
\begin{equation}
    {\cal T}_P \coloneqq \{\bm{x}\in{\cal X}^n:P_{\bm{x}}=P\}.
\end{equation}
For each string $\bm{x}\in{\cal X}^n$, the string $\bm{\chi}(\bm{x})\in{\cal X}^n$ is defined as
\begin{equation}
    \bm{\chi}(\bm{x}) \coloneqq (\underbrace{1, \ldots, 1}_{m_1}, \underbrace{2, \ldots, 2}_{m_2}, \ldots, \underbrace{k, \ldots, k}_{m_k}),
\end{equation}
where $m_i\coloneqq n P_{\bm{x}}(i)$ for $i=1,\ldots,k$.  Then, there exists a permutation $s_{\bm{x}}\in S_n$ such that $\bm{\chi}(\bm{x})=s_{\bm{x}}(\bm{x})$. In general, $s_{\bm{x}}$ is not unique.

Let us now consider an i.i.d.~c-q state $\rho_{XQ}^{\otimes n}$ with $\rho_{XQ}$ defined in Eq.~\eqref{eq:source_state} and quantum states $\{\rho_{Q^n}^{\bm{x}}\}_{\bm{x}\in{\cal X}^n}$ therein.  Then, for each $\bm{x}$, we have 
\begin{equation}
    \rho_{Q^n}^{\bm{\chi}(\bm{x})} = (\rho_Q^{1})^{\otimes m_1}\otimes\cdots\otimes(\rho_Q^{k})^{\otimes m_k},
\end{equation}
where $m_i = n P_{\bm{x}}(i)$ for $i=1,\ldots,k$.  By applying Eqs.~\eqref{eq:bound_universal_symmetric} and~\eqref{eq:quantum_coefficient_bound} recursively~\cite{Hayashi2009,Matsuura2024}, we have 
\begin{equation}
    \rho_{Q^n}^{\bm{\chi}(\bm{x})} \leq \prod_{i=1}^k (m_i+1)^{\frac{(d+2)(d-1)}{2}} \sigma_{U,m_1}\otimes\cdots\otimes\sigma_{U,m_k} \leq (n+1)^{\frac{(d+2)(d-1)}{2}|{\cal X}|}\sigma_{U,m_1}\otimes\cdots\otimes\sigma_{U,m_k}.
\end{equation}
Furthermore, the right-hand side commutes with the left-hand side since the universal symmetric state $\sigma_{U,m_i}$ commutes with $O^{\otimes m_i}$ for any operator $O$.
We define $\sigma_{\bm{x}}$ for $\bm{x}\in{\cal X}^n$ as 
\begin{equation}
    \sigma_{\bm{x}} \coloneqq V_{s_{\bm{x}}}^{-1} \sigma_{U,m_1}\otimes\cdots\otimes\sigma_{U,m_k}V_{s_{\bm{x}}},
\end{equation}
where $V_{s}$ is the unitary representation of $s\in S_n$ as mentioned earlier.
From the relation $\bm{\chi}(\bm{x}) = s_{\bm{x}}(\bm{x})$, the following holds for any $\bm{x}\in{\cal X}^n$
\begin{equation}
    \rho_{Q^n}^{\bm{x}} = V_{s_{\bm{x}}}^{-1} \rho_{Q^n}^{\bm{\chi}(\bm{x})} V_{s_{\bm{x}}} \leq (n+1)^{\frac{(d+2)(d-1)}{2}|{\cal X}|} \sigma_{\bm{x}}, \label{eq:sequence_dependent_bound}
\end{equation}
and $\rho_{Q^n}^{\bm{x}}$ commutes with $\sigma_{\bm{x}}$.
For any type $P\in{\cal P}_n$, let $\sigma_{U,P}$ be defined as 
\begin{equation}
    \sigma_{U,P} \coloneqq \frac{1}{|{\cal T}_P|}\sum_{\bm{x}\in{\cal T}_P}  \sigma_{\bm{x}}. \label{eq:def_universal_type_symmetric}
\end{equation}
Then, the state $\sigma_{U,P}$ commutes with both $U^{\otimes n}$ for any $U\in\mathrm{SU}(d)$ and $V_{\sigma}$ for any $\sigma\in S_n$, and thus have the form $\sum_{\bm{n}\in Y_n^d}p_{\bm{n}}\sigma_{\bm{n}}$.  In particular, it commutes with $\sigma_{\bm{x}}$ for any $\bm{x}\in{\cal X}^n$.

Next, we introduce information-theoretic quantities that characterize this task.  For $\alpha\in [0, 1) \cup(1, \infty)$, let $D_{\alpha}(\rho\|\sigma)$ be the $\alpha$-R\'enyi divergence defined as 
\begin{equation}
    D_{\alpha}(\rho\|\sigma) \coloneqq \begin{cases}\frac{1}{\alpha-1}\log\tr[\rho^{\alpha}\sigma^{1-\alpha}] & 0\leq \alpha < 1 \text{ or } \mathrm{supp}(\rho)\subseteq\mathrm{supp}(\sigma), \\
        \infty & \text{Otherwise}.
    \end{cases} \label{eq:alpha_Renyi_divergence}
\end{equation}
As $\alpha\rightarrow 1$, $D_{\alpha}(\rho\|\sigma)$ reduces to the quantum relative entropy $D(\rho\|\sigma)$ given by 
\begin{equation}
    D(\rho\|\sigma)\coloneqq\begin{cases} \tr[\rho\log\rho - \rho\log\sigma] & \mathrm{supp}(\rho)\subseteq\mathrm{supp}(\sigma), \\
        \infty & \text{Otherwise}.
    \end{cases}
\end{equation}
For later use, we also introduce the relative entropy variance $V(\rho\|\sigma)$ defined as 
\begin{equation}
    V(\rho\|\sigma) \coloneqq \tr[\rho(\log\rho - \log\sigma)^2] - \left(\tr[\rho(\log\rho-\log\sigma)]\right)^2. \label{eq:relative_entropy_variance}
\end{equation}

The conditional $\alpha$-R\'enyi entropy $H_{\alpha}^{\uparrow}(A|B)$ is defined for any $\alpha\in [0, \infty)$ as 
\begin{equation}
    H_{\alpha}^{\uparrow}(A|B)_{\rho} \coloneqq \max_{\sigma_B\in{\cal D}({\cal H}_B)} -D_{\alpha}(\rho_{AB}\|I_A\otimes \sigma_B). \label{eq:conditional_alpha_Renyi}
\end{equation}
The maximum in Eq.~\eqref{eq:conditional_alpha_Renyi} is attained by the following unique maximizer $\sigma_{\alpha}^*$ for any $\alpha\in(0, \infty)$~\cite{Sharma2013}:
\begin{equation}
    \sigma_{\alpha}^* = \frac{\left(\tr_A[\rho_{AB}^{\alpha}]\right)^{\frac{1}{\alpha}}}{\tr\!\left[\left(\tr_A[\rho_{AB}^{\alpha}]\right)^{\frac{1}{\alpha}}\right]}, \label{eq:maximizer_unique}
\end{equation}
which then leads to the following analytical expression for $\alpha\in(0,1)\cup(1,\infty)$~\cite{Koenig2009, Sharma2013, Mosonyi2017, Cheng2022}:
\begin{equation}
    H_{\alpha}^{\uparrow}(A|B)_{\rho} = -\frac{\alpha}{\alpha-1} \log \tr\Bigl[\bigl(\tr_A[\rho_{AB}^{\alpha}]\bigr)^{\frac{1}{\alpha}}\Bigr]. \label{eq:Sibson_identity}
\end{equation}
As $\alpha\rightarrow 1$, $H_{\alpha}^{\uparrow}(A|B)$ reduces to the von Neumann conditional entropy $H(A|B)_{\rho}$ given by 
\begin{equation}
H(A|B)_{\rho}\coloneqq -D(\rho_{AB}\|I_A\otimes\rho_B). \label{eq:vN_conditional}
\end{equation} 
When $\alpha\in[0,1]$, the conditional $\alpha$-R\'enyi entropy $H_{\alpha}^{\uparrow}(A|B)_{\rho}$ is concave for $\rho$, which follows directly from the joint convexity of the $\alpha$-R\'enyi divergence Eq.~\eqref{eq:alpha_Renyi_divergence} for $\alpha\in[0,1]$~\cite{Mosonyi2011}.  Furthermore, the function $\alpha\mapsto H_{\alpha}^{\uparrow}(A|B)_{\rho}$ is continuous and monotonically non-increasing on $\alpha\in[0,1]$ for any $\rho_{AB}$~\cite{Cheng2021}.

For a subnormalized positive operator $\tilde{\rho}_{AB}$, we define $H^{\uparrow, \leq}_{\alpha}(A|B)_{\tilde{\rho}}$ as
\begin{equation}
    H^{\uparrow, \leq}_{\alpha}(A|B)_{\tilde{\rho}} \coloneqq H^{\uparrow}_{\alpha}(A'|B')_{\tilde{\rho}\, \oplus \,(1 - \tr[\tilde{\rho}])\bm{1}\otimes \bm{1}},
    \label{eq:subnormalized_Renyi}
\end{equation}
where ${\cal H}_{A'(B')}={\cal H}_{A(B)}\oplus\mathbb{C}$, and $\bm{1}$ is the unique state on the trivial Hilbert space $\mathbb{C}$. From Eq.~\eqref{eq:Sibson_identity}, we have 
\begin{align}
    H^{\uparrow, \leq}_{\alpha}(A|B)_{\tilde{\rho}} &= -\frac{\alpha}{\alpha-1}\log \tr\left[\bigl(\tr_{{\cal H}_A\oplus\mathbb{C}}[\tilde{\rho}_{AB}^{\alpha}\oplus (1 - \tr[\tilde{\rho}_{AB}])^{\alpha}\bm{1}\otimes \bm{1}]\bigr)^{\frac{1}{\alpha}}\right] \\
    &= -\frac{\alpha}{\alpha-1}\log \tr\left[\bigl(\tr_{{\cal H}_A}[\tilde{\rho}_{AB}^{\alpha}] \oplus (1 - \tr[\tilde{\rho}_{AB}])^{\alpha}\bm{1}\bigr)^{\frac{1}{\alpha}}\right] \\
    &= -\frac{\alpha}{\alpha-1}\log \tr\left[(\tr_{{\cal H}_A}[\tilde{\rho}_{AB}^{\alpha}])^{\frac{1}{\alpha}}\oplus (1 - \tr[\tilde{\rho}_{AB}]) \bm{1} \right] \\
    &= -\frac{\alpha}{\alpha-1}\log \left( \tr\left[(\tr_{{\cal H}_A}[\tilde{\rho}_{AB}^{\alpha}])^{\frac{1}{\alpha}}\right] + (1 - \tr[\tilde{\rho}_{AB}])\right). \label{eq:expression_subnormalized}
\end{align}
Since $\tilde{\rho}_{AB}\mapsto \tilde{\rho}_{AB}\oplus (1 - \tr[\tilde{\rho}_{AB}])\bm{1}\otimes\bm{1}$ is affine for $\tilde{\rho}_{AB}$, $H^{\uparrow, \leq}_{\alpha}(A|B)_{\tilde{\rho}}$ is concave for $\tilde{\rho}$ from the concavity of $H_{\alpha}^{\uparrow}(A|B)_{\rho}$ for $\rho\in{\cal D}({\cal H})$.
Using this definition, we can extend the definition of the von Neumann conditional entropy in Eq.~\eqref{eq:vN_conditional} to a subnormalized positive operator $\tilde{\rho}_{AB}$ as 
\begin{align}
    H(A|B)_{\tilde{\rho}} &\coloneqq \lim_{\alpha \to 1} H_{\alpha}^{\uparrow,\leq}(A|B)_{\tilde{\rho}} \\
    &= -D(\tilde{\rho}_{AB}\|I_A\otimes \tilde{\rho}_B). \label{eq:subnormalized_cond_ent}
\end{align}

For a string $\bm{x}\in{\cal X}^n$, let $H(\bm{x})$ be the empirical entropy defined as 
\begin{equation}
    H(\bm{x}) \coloneqq -\sum_{y\in{\cal X}}P_{\bm{x}}(y)\log P_{\bm{x}}(y) = H(X)_{P_{\bm{x}}}. \label{eq:def_empirical_entropy}
\end{equation}
The empirical entropy thus depends only on the type of the string $\bm{x}$.
For any type $P\in{\cal P}_n$, the following is known to hold~\cite{Cover2006}:
\begin{equation}
    |{\cal T}_P|\leq 2^{n H(X)_P}. \label{eq:type_cardinality_bound}
\end{equation}
Furthermore, for any i.i.d.~probability distribution $p^n$ over ${\cal X}^n$, the empirical entropy $H(\bm{x})$ satisfies~\cite{Cover2006} 
\begin{equation}
    \log p^n(\bm{x}) \leq -n H(\bm{x}). \label{eq:empirical_distribution_most_probable}
\end{equation}

\subsection{Construction of a random encoder and a universal decoder} \label{sec:construction_universal}
In this section, we explicitly construct a random encoder and a universal decoder for classical source compression with quantum side information described in Sec.~\ref{sec:problem_setup}.
For an encoder ${\cal F}_n$ in Sec.~\ref{sec:problem_setup}, we choose a 2-universal family of hash functions, which is defined as follows.

\begin{definition}[2-universal family of hash functions]\label{def:2-universal_hash}
    For finite sets ${\cal A}$ and ${\cal B}$, a family of hash functions ${\cal H} = \{h:{\cal A}\to{\cal B}\}$ is called 2-universal if it satisfies
    \begin{equation}
        \forall x,y\in{\cal A}, x\neq y,\quad |\{h\in{\cal H}:h(x)=h(y)\}| \leq \frac{|{\cal H}|}{|{\cal B}|}. \label{eq:universal_2}
    \end{equation} 
If we define a function $\bm{1}(E)$ of an expression $E$ as 
\begin{equation}
    \bm{1}(E)=\begin{cases}
        1 & \text{ if } E \text{ is satisfied}, \\
        0 & \text{ Otherwise}, 
    \end{cases} \label{eq:def_indicator}
\end{equation}
then, the condition~\eqref{eq:universal_2} can alternatively be written as 
\begin{equation}
    \forall x,y\in{\cal A}, x\neq y, \quad \sum_{h\in{\cal H}} \bm{1}(h(x)=h(y)) \leq \frac{|{\cal H}|}{|{\cal B}|}. \label{eq:alternative_universal_2}
\end{equation}
\end{definition}

For a decoder $Y^n$ in Sec.~\ref{sec:problem_setup}, we choose the following universal likelihood decoder:
\begin{equation}
    \forall \bm{x}\in h^{-1}(b),\qquad Y^n_{h, b}(\bm{x}) \coloneqq \frac{2^{-n H(\bm{x})}\sigma_{\bm{x}}}{\sum_{\bm{y}\in h^{-1}(b)}2^{-n H(\bm{y})}\sigma_{\bm{y}}} = \frac{2^{-n H(\bm{x})}\sigma_{\bm{x}}}{\sum_{\bm{y}\in {\cal X}^n}\bm{1}(h(\bm{y})=b) \, 2^{-n H(\bm{y})}\sigma_{\bm{y}}}, \label{eq:fully_universal_decoder}
\end{equation}
where the division $\frac{A}{B}$ of two positive operators $A$ and $B>0$ is defined as~\cite{Beigi2023} 
\begin{equation}
    \frac{A}{B} \coloneqq \int_{0}^{\infty}d\lambda\, (B+\lambda)^{-1}A(B+\lambda)^{-1}, \label{eq:operator_division}
\end{equation}
and satisfies the followings~\cite{Beigi2023}:
\begin{align}
    &\text{For any } A\geq 0 \text{ and } B > 0,  & \frac{A}{B} &\geq 0, \\
    &\text{For any } A\geq 0 \text{ and } B > 0 \text{ with } [A, B]=0, & \frac{A}{B} &= A B^{-1} = B^{-1}A, \\
    &\text{For any } A, B \geq 0 \text{ and } C > 0, &  \frac{A}{C} + \frac{B}{C} &= \frac{A+B}{C}, \label{eq:common_denominator}\\
    &\text{For any } A\geq 0 \text{ and } B > 0, &  \frac{A}{A + B} &\leq \frac{A}{B}. \label{eq:larger_denominator}
\end{align}
A similar decoding strategy has been studied in classical information theory~\cite{Merhave2017}.
In the partially universal setup in which the encoder and the decoder know the probability distribution $p(\cdot)$ of the c-q state $\rho_{XB}$, the encoder uses the same random encoding, but the decoder uses the following $\tilde{Y}^n$ instead of $Y^n$ introduced above:
\begin{equation}
    \forall \bm{x}\in h^{-1}(b),\qquad\tilde{Y}^n_{h, b}(\bm{x})\coloneqq  \frac{p^n(\bm{x})\sigma_{\bm{x}}}{\sum_{\bm{y}\in h^{-1}(b)}p^n(\bm{y})\sigma_{\bm{y}}} = \frac{p^n(\bm{x})\sigma_{\bm{x}}}{\sum_{\bm{y}\in {\cal X}^n}\bm{1}(h(\bm{y})=b) \, p^n(\bm{y})\sigma_{\bm{y}}} . \label{eq:partially_universal_decoder}
\end{equation}
For these encoding and decoding strategies, we prove the following.

\begin{theorem}\label{thm:cq_slepian_wolf}
    Using a 2-universal family ${\cal H}_n\coloneqq \{h:{\cal X}^n\to {\cal B}_n\}$ of hash functions as a random encoder and the POVM $Y^n$ in Eq.~\eqref{eq:fully_universal_decoder} as a (universal) decoder, the universal source compression with quantum side information described in Sec.~\ref{sec:problem_setup} is achievable with the non-asymptotic error exponent $-\frac{1}{n}\log P_{\rm err}({\cal H}_n, Y^n)_{\rho}$ bounded from below by 
    \begin{equation}
        -\frac{1}{n}\log P_{\rm err}({\cal H}_n,Y^n)_{\rho} \geq \max_{\alpha\in[0,1]} \alpha \left(\frac{\log|{\cal B}_n|}{n} - H_{1-\alpha}^{\uparrow}(X|Q)_{\rho} - \frac{[|{\cal X}|(d^2+d)-2]\log (n+1)}{2n}\right). \label{eq:error_exponent_universal}
    \end{equation}
    Thus, any rate $R>H(X|Q)_{\rho}$ is achievable with asymptotically vanishing errors.
    
    For the partially universal source compression with quantum side information, the following non-asymptotic error exponent $-\frac{1}{n}\log P'_{\rm err}({\cal H}_n, \tilde{Y}^n)_{\rho}$ is achievable using the 2-universal family ${\cal H}_n$ of hash functions as a random encoder and the POVM $\tilde{Y}^n$ in Eq.~\eqref{eq:partially_universal_decoder} as a (universal) decoder:
    \begin{equation}
        -\frac{1}{n}\log P'_{\rm err}({\cal H}_n,\tilde{Y}^n)_{\rho} \geq \max_{\alpha\in[0,1]} \alpha \left(\frac{\log|{\cal B}_n|}{n} - H_{1-\alpha}^{\uparrow}(X|Q)_{\rho} - \frac{|{\cal X}|(d+2)(d-1)\log (n+1)}{2n}\right). \label{eq:error_exponent_partially_universal}
    \end{equation}
\end{theorem}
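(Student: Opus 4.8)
The plan is to run a Gallager-type estimate directly on the error expression~\eqref{eq:def_of_error_prob}, with the operator-division likelihood decoder~\eqref{eq:fully_universal_decoder} playing the role usually taken by a pretty-good measurement, and then to collapse the resulting $n$-letter trace to a single-letter quantum R\'enyi quantity by means of the universal symmetric state. Write $S_{\bm{x}}\coloneqq w(\bm{x})\sigma_{\bm{x}}$ with $w(\bm{x})=2^{-nH(\bm{x})}$ and $T_{h,\bm{x}}\coloneqq\sum_{\bm{y}\in h^{-1}(h(\bm{x})),\,\bm{y}\neq\bm{x}}w(\bm{y})\sigma_{\bm{y}}$, so that $Y_{h,h(\bm{x})}(\bm{x})=\frac{S_{\bm{x}}}{S_{\bm{x}}+T_{h,\bm{x}}}$. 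First I would apply, for each $\alpha\in[0,1]$, the decoder inequality for the division~\eqref{eq:operator_division} from Ref.~\cite{Beigi2023}, which bounds $I-\frac{A}{A+B}\preceq\bigl(A^{-1/2}BA^{-1/2}\bigr)^{\alpha}$, with $A=S_{\bm{x}}$ and $B=T_{h,\bm{x}}$; this gives $\tr[\rho_{B^n}^{\bm{x}}(I-Y_{h,h(\bm{x})}(\bm{x}))]\le\tr\bigl[\rho_{B^n}^{\bm{x}}(S_{\bm{x}}^{-1/2}T_{h,\bm{x}}S_{\bm{x}}^{-1/2})^{\alpha}\bigr]$. Since $T\mapsto(S_{\bm{x}}^{-1/2}TS_{\bm{x}}^{-1/2})^{\alpha}$ is operator concave (a positive linear map followed by the operator-concave $t\mapsto t^{\alpha}$, $\alpha\in[0,1]$), the operator Jensen inequality together with operator monotonicity of $t\mapsto t^{\alpha}$ and the 2-universality~\eqref{eq:alternative_universal_2} of ${\cal H}$ (which gives $\mathbb{E}_h[T_{h,\bm{x}}]\preceq|{\cal B}_n|^{-1}W$ with $W\coloneqq\sum_{\bm{y}\in{\cal X}^n}w(\bm{y})\sigma_{\bm{y}}$) yield
\[
\mathbb{E}_h\,\tr\bigl[\rho_{B^n}^{\bm{x}}(I-Y_{h,h(\bm{x})}(\bm{x}))\bigr]\ \le\ |{\cal B}_n|^{-\alpha}\,\tr\bigl[\rho_{B^n}^{\bm{x}}\,(S_{\bm{x}}^{-1/2}WS_{\bm{x}}^{-1/2})^{\alpha}\bigr].
\]

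Next I would exploit the commutation structure. Because $W$ commutes with $U^{\otimes n}$ for all $U\in\mathrm{SU}(d)$ and with every permutation representation $V_s$, it is a scalar on each Schur--Weyl block, hence block-diagonal and in particular commuting with $\sigma_{\bm{x}}$, so $(S_{\bm{x}}^{-1/2}WS_{\bm{x}}^{-1/2})^{\alpha}=w(\bm{x})^{-\alpha}W^{\alpha}\sigma_{\bm{x}}^{-\alpha}$. Also $\rho_{B^n}^{\bm{x}}$ commutes with $\sigma_{\bm{x}}$ and satisfies $\rho_{B^n}^{\bm{x}}\preceq c_n\,\sigma_{\bm{x}}$ with $c_n\coloneqq(n+1)^{\frac{(d+2)(d-1)}{2}|{\cal X}|}$ by~\eqref{eq:sequence_dependent_bound}; diagonalizing $\rho_{B^n}^{\bm{x}}$ and $\sigma_{\bm{x}}$ simultaneously, this upgrades eigenvalue-wise to $\sigma_{\bm{x}}^{-\alpha}\rho_{B^n}^{\bm{x}}\preceq c_n^{\alpha}\,(\rho_{B^n}^{\bm{x}})^{1-\alpha}$. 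Hence $\tr[\rho_{B^n}^{\bm{x}}\,W^{\alpha}\sigma_{\bm{x}}^{-\alpha}]=\tr[W^{\alpha}\,\sigma_{\bm{x}}^{-\alpha}\rho_{B^n}^{\bm{x}}]\le c_n^{\alpha}\,\tr[W^{\alpha}(\rho_{B^n}^{\bm{x}})^{1-\alpha}]$. Keeping the factor $(\rho_{B^n}^{\bm{x}})^{1-\alpha}$ rather than replacing $\rho_{B^n}^{\bm{x}}$ by $c_n\sigma_{\bm{x}}$ outright, and feeding~\eqref{eq:sequence_dependent_bound} in only at the $\alpha$-th power after the hash average, is exactly what makes the true source state survive and what produces the prefactor $c_n^{\alpha}$ instead of $c_n$.

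Finally I would single-letterize. Using $p^n(\bm{x})\,w(\bm{x})^{-\alpha}\le p^n(\bm{x})^{1-\alpha}$ from~\eqref{eq:empirical_distribution_most_probable}, summing over $\bm{x}$, and using that $p^n(\bm{x})\rho_{B^n}^{\bm{x}}=\bigotimes_i(p(x_i)\rho_B^{x_i})$ factorizes so that $\sum_{\bm{x}\in{\cal X}^n}(p^n(\bm{x})\rho_{B^n}^{\bm{x}})^{1-\alpha}=\Lambda^{\otimes n}$ with $\Lambda\coloneqq\sum_{x\in{\cal X}}(p(x)\rho_B^x)^{1-\alpha}$, one obtains $P_{\rm err}({\cal H},Y)\le(c_n/|{\cal B}_n|)^{\alpha}\,\tr[W^{\alpha}\Lambda^{\otimes n}]$. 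H\"older's inequality for Schatten norms with conjugate exponents $1/\alpha$ and $1/(1-\alpha)$ gives $\tr[W^{\alpha}\Lambda^{\otimes n}]\le(\tr W)^{\alpha}\,\tr[\Lambda^{1/(1-\alpha)}]^{\,n(1-\alpha)}$, and by the quantum Sibson identity~\eqref{eq:Sibson_identity} the second factor equals $2^{\,n\alpha H_{1-\alpha}^{\uparrow}(X|B)_{\rho}}$; moreover $\tr W=\sum_{P\in{\cal P}_n}|{\cal T}_P|\,2^{-nH(X)_P}\le|{\cal P}_n|\le(n+1)^{d-1}$ by~\eqref{eq:type_cardinality_bound} and~\eqref{eq:number_of_types}. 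Collecting the polynomial factor $(c_n(n+1)^{d-1})^{\alpha}$ and maximizing over $\alpha\in[0,1]$ reproduces~\eqref{eq:error_exponent_universal}. For the partially universal decoder~\eqref{eq:partially_universal_decoder} the only changes are that $w$ is replaced everywhere by $p^n$, so that $p^n(\bm{x})\,p^n(\bm{x})^{-\alpha}=p^n(\bm{x})^{1-\alpha}$ holds with no loss and $\tr W'=\sum_{\bm{y}}p^n(\bm{y})=1$; this deletes the $(n+1)^{(d-1)\alpha}$ factor and gives~\eqref{eq:error_exponent_partially_universal}.

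The hard part will be the ordering of these manipulations rather than any single estimate: one must introduce the free parameter $\alpha$ through the Beigi-type decoder inequality, average over the hash while the sum over $\bm{y}$ is still inside the operator power (using operator concavity), and only then invoke the universal-symmetric-state domination~\eqref{eq:sequence_dependent_bound} at power $\alpha$, relying on the fact that $W$ is a block scalar so that all the noncommuting pieces align and the final trace collapses through $\Lambda^{\otimes n}$ to the single Sibson quantity. Performing the substitution $\rho_{B^n}^{\bm{x}}\to c_n\sigma_{\bm{x}}$ prematurely would both destroy the dependence on the true R\'enyi entropy and inflate $c_n^{\alpha}$ to $c_n$; and obtaining the sharp, constant-free, $\alpha$-parametrized form of the decoder inequality for the division~\eqref{eq:operator_division} is precisely what keeps the prefactors polynomial and explains why the resulting error exponent is only subexponentially worse than the known-state case.
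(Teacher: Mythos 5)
Your overall architecture --- introduce $\alpha$ at the decoder, average over the hash while the sum over $\bm{y}$ is still inside the operator power via operator concavity, invoke the universal-symmetric-state domination \eqref{eq:sequence_dependent_bound} only at power $\alpha$, and single-letterize through $\Lambda^{\otimes n}$ and Sibson's identity \eqref{eq:Sibson_identity} --- is the same as the paper's, and everything from the hash average onward is correct (your direct H\"older step is an equivalent substitute for the paper's use of $\max_{\tau}\tr[X\tau^{\alpha}]=(\tr X^{1/(1-\alpha)})^{1-\alpha}$). The gap is your very first step: the claimed bound $I-\frac{A}{A+B}\preceq\bigl(A^{-1/2}BA^{-1/2}\bigr)^{\alpha}$ is not a valid operator inequality, and it is not what Ref.~\cite{Beigi2023} proves (that reference gives $\frac{A}{A+B}\preceq\frac{A}{B}$ with the operator division \eqref{eq:operator_division} on both sides, not the sandwiched quotient). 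Indeed $I-\frac{A}{A+B}=\frac{B}{A+B}$, and taking $B=\ketbra{v}{v}$ not commuting with $A$, the right-hand side $(A^{-1/2}BA^{-1/2})^{\alpha}$ is rank one with kernel containing any $u\perp A^{-1/2}v$, whereas $\bra{u}\tfrac{B}{A+B}\ket{u}=\int_{0}^{\infty}\bigl|\bra{u}(A+B+\lambda)^{-1}\ket{v}\bigr|^{2}d\lambda>0$ for generic $A$ (by Sherman--Morrison it vanishes only when $v$ is an eigenvector of $A$). So the inequality fails for every $\alpha\in(0,1]$, and by continuity also for full-rank $B$ near such a $B$; in your application $B=T_{h,\bm{x}}$ is a sum of $\sigma_{\bm{y}}$'s that do not commute with $\sigma_{\bm{x}}$, so nothing in the structure rescues the step.

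The repair is the paper's two-step argument, and it leaves the rest of your proof untouched: since $0\preceq\frac{T_{h,\bm{x}}}{S_{\bm{x}}+T_{h,\bm{x}}}\preceq I$ you may use $M\preceq M^{\alpha}$, and then the inequality $\frac{T}{S+T}\preceq\frac{T}{S}$ of Ref.~\cite{Beigi2023} combined with operator monotonicity of $t\mapsto t^{\alpha}$ gives $I-Y_{h,h(\bm{x})}(\bm{x})\preceq\bigl(\tfrac{T_{h,\bm{x}}}{S_{\bm{x}}}\bigr)^{\alpha}$, with the division \eqref{eq:operator_division} in place of $S_{\bm{x}}^{-1/2}T_{h,\bm{x}}S_{\bm{x}}^{-1/2}$. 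The map $T\mapsto\frac{T}{S_{\bm{x}}}$ is positive and linear, so your Jensen plus 2-universality step applies verbatim, and after the hash average the operator $W$ commutes with $\sigma_{\bm{x}}$ (it is a scalar on each Schur--Weyl block), whence $\frac{W}{S_{\bm{x}}}=S_{\bm{x}}^{-1/2}WS_{\bm{x}}^{-1/2}$ and your remaining chain --- the eigenvalue-wise bound $\sigma_{\bm{x}}^{-\alpha}\rho_{B^n}^{\bm{x}}\preceq c_n^{\alpha}(\rho_{B^n}^{\bm{x}})^{1-\alpha}$, the bound $p^n(\bm{x})w(\bm{x})^{-\alpha}\le p^n(\bm{x})^{1-\alpha}$ from \eqref{eq:empirical_distribution_most_probable}, H\"older, $\tr W\le|{\cal P}_n|$, and the partially universal modification with $\tr W'=1$ --- coincides with the paper's derivation and yields \eqref{eq:error_exponent_universal} and \eqref{eq:error_exponent_partially_universal}.
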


To prove the above theorem, we first show the following lemma.
\begin{lemma}\label{lem:decoder_reformulation}
    Let $Y^n_{h, b}$ and $\tilde{Y}^n_{h,b}$ be defined as in Eqs.~\eqref{eq:fully_universal_decoder} and \eqref{eq:partially_universal_decoder}, respectively. Let ${\cal H}_n\coloneqq \{h:{\cal X}^n\to{\cal B}_n\}$ be 2-universal hash functions. Then, for any $\alpha\in[0,1]$, we have 
    \begin{equation}
        \mathbb{E}_{h\sim |{\cal H}_n|^{-1}}\!\left[I_{Q^n} - Y^n_{h,h(\bm{x})}(\bm{x}) \right] \leq |{\cal B}_n|^{-\alpha}\left(2^{-nH(\bm{x})}\sigma_{\bm{x}}\right)^{-\alpha} \left(\sum_{P\in{\cal P}_n}|{\cal T}_{P}| 2^{-n H(X)_P}\sigma_{U,P}\right)^{\alpha}, \label{eq:operator_ineq_for_fully_universal}
    \end{equation}
    and 
    \begin{equation}
        \mathbb{E}_{h\sim |{\cal H}_n|^{-1}}\!\left[I_{Q^n} - \tilde{Y}^n_{h,h(\bm{x})}(\bm{x}) \right] \leq |{\cal B}_n|^{-\alpha}\left(p^n(\bm{x})\sigma_{\bm{x}}\right)^{-\alpha} \left(\sum_{P\in{\cal P}_n}|{\cal T}_{P}| \prod_{x\in{\cal X}}p(x)^{nP(x)}\sigma_{U,P}\right)^{\alpha},\label{eq:operator_ineq_for_partially_universal}
    \end{equation}
    where $p(x)$ denotes the probability distribution of the classical source.
\end{lemma}
\begin{proof}
    From the fact that $T \leq T^\alpha$ holds for $0\leq T \leq I$, we have
    \begin{equation}
        I_{Q^n} - Y^n_{h,h(\bm{x})}(\bm{x}) \leq \left(I_{Q^n} - Y^n_{h,h(\bm{x})}(\bm{x})\right)^{\alpha}.
    \end{equation}
    It is known from L\"owner-Heinz theorem that the function $t\mapsto t^\alpha$ of $t\in[0, \infty)$ is operator monotone for $\alpha\in[0, 1]$. Combining this with the following operator inequality, which holds from Eqs.~\eqref{eq:common_denominator} and \eqref{eq:larger_denominator},
    \begin{equation}
        I-\frac{B}{A+B}=\frac{A}{A+B}\leq \frac{A}{B},
    \end{equation}
    we have 
    \begin{align}
        \left(I_{Q^n} - Y^n_{h,h(\bm{x})}(\bm{x})\right)^{\alpha} &= \left(I_{Q^n} - \frac{2^{-nH(\bm{x})}\sigma_{\bm{x}}}{\sum_{\bm{y}\in{\cal X}^n}\bm{1}(h(\bm{y})=h(\bm{x}))2^{-nH(\bm{y})}\sigma_{\bm{y}}}\right)^{\alpha} \\
        &\leq \left(\frac{\sum_{\bm{y}\in{\cal X}^n\setminus\{\bm{x}\}}\bm{1}(h(\bm{y})=h(\bm{x}))2^{-nH(\bm{y})}\sigma_{\bm{y}}}{2^{-nH(\bm{x})}\sigma_{\bm{x}}}\right)^{\alpha}. \label{eq:monotone_ineq}
    \end{align}
    Since the function $t\mapsto t^\alpha$ of $t\in[0, \infty)$ is also operator concave for $\alpha\in[0, 1]$ from L\"owner-Heinz theorem, we have 
    \begin{align}
        \mathbb{E}_{h\sim |{\cal H}_n|^{-1}}\!\left[I_{Q^n} - Y^n_{h,h(\bm{x})}(\bm{x}) \right] &\leq \mathbb{E}_{h\sim |{\cal H}_n|^{-1}}\!\left[\left(\frac{\sum_{\bm{y}\in{\cal X}^n\setminus\{\bm{x}\}}\bm{1}(h(\bm{y})=h(\bm{x}))2^{-nH(\bm{y})}\sigma_{\bm{y}}}{2^{-nH(\bm{x})}\sigma_{\bm{x}}}\right)^{\alpha}\right] \\
        &\leq \left(\frac{\sum_{\bm{y}\in{\cal X}^n\setminus\{\bm{x}\}}\mathbb{E}_{h\sim |{\cal H}_n|^{-1}}[\bm{1}(h(\bm{y})=h(\bm{x}))]2^{-nH(\bm{y})}\sigma_{\bm{y}}}{2^{-nH(\bm{x})}\sigma_{\bm{x}}}\right)^{\alpha} \\
        &\leq \left(\frac{\sum_{\bm{y}\in{\cal X}^n\setminus\{\bm{x}\}}|{\cal B}_n|^{-1}2^{-nH(\bm{y})}\sigma_{\bm{y}}}{2^{-nH(\bm{x})}\sigma_{\bm{x}}}\right)^{\alpha}\\
        &\leq \left(\frac{\sum_{\bm{y}\in{\cal X}^n}|{\cal B}_n|^{-1}2^{-nH(\bm{y})}\sigma_{\bm{y}}}{2^{-nH(\bm{x})}\sigma_{\bm{x}}}\right)^{\alpha}, \label{eq:before_grouped}
    \end{align}
    where we used Eq.~\eqref{eq:monotone_ineq} in the first inequality, operator concavity of $t\mapsto t^\alpha$ in the second inequality, Eq.~\eqref{eq:alternative_universal_2} in the third inequality, and the operator monotonicity of $t\mapsto t^\alpha$ in the last inequality. Now, we group strings with the same type in the numerator of Eq.~\eqref{eq:before_grouped} to have 
    \begin{align}
        \left(\frac{\sum_{\bm{y}\in{\cal X}^n}|{\cal B}_n|^{-1}2^{-nH(\bm{y})}\sigma_{\bm{y}}}{2^{-nH(\bm{x})}\sigma_{\bm{x}}}\right)^{\alpha} &= |{\cal B}_n|^{-\alpha}\left(\frac{\sum_{P\in{\cal P}_n}\sum_{\bm{y}\in{\cal T}_P}2^{-nH(X)_P}\sigma_{\bm{y}}}{2^{-nH(\bm{x})}\sigma_{\bm{x}}}\right)^{\alpha} \\
        &= |{\cal B}_n|^{-\alpha}\left(\frac{\sum_{P\in{\cal P}_n}|{\cal T}_P|2^{-nH(X)_P}\sigma_{U,P}}{2^{-nH(\bm{x})}\sigma_{\bm{x}}}\right)^{\alpha}, \label{eq:after_grouped}
    \end{align}
    where $\sigma_{U,P}$ is defined in Eq.~\eqref{eq:def_universal_type_symmetric}. Thus, Eq.~\eqref{eq:operator_ineq_for_fully_universal} is proved by combining Eqs.~\eqref{eq:before_grouped} and \eqref{eq:after_grouped} with the fact that $\sigma_{U,P}$ for any $P\in{\cal P}_n$ commutes with $\sigma_{\bm{x}}$ for any $\bm{x}\in{\cal X}^n$.

    The same derivation holds for the partially universal decoder $\tilde{Y}^n_{h,h(\bm{x})}$ by replacing $2^{-nH(\bm{x})}$ in the above with $p^n(\bm{x})$. Then, for any string $\bm{y}\in{\cal T}_P$, we have 
    \begin{equation}
        p^n(\bm{y}) = \prod_{y\in{\cal X}} p(y)^{nP(y)},
    \end{equation}
    which thus proves Eq.~\eqref{eq:operator_ineq_for_partially_universal}.
\end{proof}
Using the above lemma, we show Theorem~\ref{thm:cq_slepian_wolf} as follows.
\begin{proof}[Proof of Theorem~\ref{thm:cq_slepian_wolf}]
    By substituting ${\cal H}_n$ and $Y^n$ given in the theorem to the definition of $P_{\rm err}$ in Eq.~\eqref{eq:pre_def_of_error}, we have 
    \begin{align}
        P_{\rm err}({\cal H}_n, Y^n)_{\rho} &= \mathbb{E}_{h\sim |{\cal H}_n|^{-1}}\left[\sum_{\bm{x}\in{\cal X}^n}p^n(\bm{x})\tr\left[\rho_{Q^n}^{\bm{x}}\left(I_{Q^n} - Y^n_{h,h(\bm{x})}(\bm{x})\right)\right]\right] \\
        &= \tr\left[\sum_{\bm{x}\in{\cal X}^n}p^n(\bm{x})\rho_{Q^n}^{\bm{x}}\;\mathbb{E}_{h\sim |{\cal H}_n|^{-1}}\!\left[I_{Q^n} - Y^n_{h,h(\bm{x})}(\bm{x})\right]\right]. \label{eq:reformulation_first}
    \end{align}
    Applying Eq.~\eqref{eq:operator_ineq_for_fully_universal} in Lemma~\ref{lem:decoder_reformulation}, we have 
    \begin{equation}
        P_{\rm err}({\cal H}_n, Y^n)_{\rho} \leq  |{\cal B}_n|^{-\alpha} \mathrm{Tr}\left[\sum_{\bm{x}\in{\cal X}^n}  p^n(\bm{x})\rho_{Q^n}^{\bm{x}}\left(2^{-n H(\bm{x})}\sigma_{\bm{x}}\right)^{-\alpha}\left(\sum_{P\in {\cal P}_n} |{\cal T}_P| 2^{-n H(X)_P}\sigma_{U, P}\right)^{\alpha}\right]. \label{eq:just_before_final_eq}
    \end{equation}
    Combining Eqs.~\eqref{eq:sequence_dependent_bound}, \eqref{eq:def_empirical_entropy}, and \eqref{eq:empirical_distribution_most_probable} with the fact that $\rho_{Q^n}^{\bm{x}}$ commutes with $\sigma_{\bm{x}}$ and that the function $t\mapsto -t^{-\alpha}$ of $t\in(0,\infty)$ is operator monotone for $\alpha\in[0,1]$, we have, for any $\bm{x}\in{\cal X}^n$,
    \begin{equation}
        p^n(\bm{x})\rho_{Q^n}^{\bm{x}}\left(2^{-n H(\bm{x})}\sigma_{\bm{x}}\right)^{-\alpha} \leq (p^n(\bm{x}))^{1-\alpha} \rho_{Q^n}^{\bm{x}}\sigma_{\bm{x}}^{-\alpha} \leq (n+1)^{\alpha |{\cal X}| \frac{(d+2)(d-1)}{2}}\left(p^n(\bm{x})\rho_{Q^n}^{\bm{x}}\right)^{1-\alpha}. \label{eq:1-alpha_bound}
    \end{equation} 
    Substituting this into Eq.~\eqref{eq:just_before_final_eq}, we have 
    \begin{align}
        P_{\rm err}({\cal H}_n, Y^n)_{\rho} &\leq \left(|{\cal B}_n|^{-1} (n+1)^{|{\cal X}| \frac{(d+2)(d-1)}{2}}\right)^{\alpha}\mathrm{Tr}\!\left[\sum_{\bm{x}\in{\cal X}^n}  \left(p^n(\bm{x})\rho_{Q^n}^{\bm{x}}\right)^{1-\alpha}\left(\sum_{P\in {\cal P}_n} |{\cal T}_P| 2^{-n H(X)_P}\sigma_{U, P}\right)^{\alpha}\right] \\
        \begin{split}
        &\leq \left(|{\cal B}_n|^{-1} (n+1)^{|{\cal X}| \frac{(d+2)(d-1)}{2}}\right)^{\alpha} \left(\mathrm{Tr}\left[\sum_{P\in {\cal P}_n} |{\cal T}_P| 2^{-n H(X)_P}\sigma_{U, P}\right]\right)^{\alpha} \\
        &\hspace{5cm} \max_{\tau\in{\cal D}({\cal H}_Q^{\otimes n})}\mathrm{Tr}\left[ \left(\sum_{x\in{\cal X}} \bigl(p(x)\rho_{Q}^{x}\bigr)^{1-\alpha}\right)^{\otimes n}\tau^{\alpha}\right],
        \end{split} \label{eq:branching}
    \end{align}
    where we used $\tr[X Y^{\alpha}] \leq (\tr[Y])^{\alpha}\max_{Z\in{\cal D}({\cal H})}\tr[X Z^{\alpha}]$ for any positive operators $X$ and $Y$.
    From Eq.~\eqref{eq:type_cardinality_bound}, we have 
    \begin{equation}
        \mathrm{Tr}\left[\sum_{P\in {\cal P}_n} |{\cal T}_P| 2^{-n H(X)_P}\sigma_{U, P}\right] \leq \sum_{P\in{\cal P}_n} 1 = |{\cal P}_n|. \label{eq:num_of_type_bound}
    \end{equation}
    Furthermore, we use Lemma 2 in Ref.~\cite{Hayashi2009} stating that the following holds for any positive operator $X$ and $\alpha\in[0,1]$:
    \begin{equation}
        \max_{\tau\in{\cal D}({\cal H})} \mathrm{Tr}[X\tau^\alpha] = \left(\mathrm{Tr}X^{\frac{1}{1-\alpha}}\right)^{1-\alpha}. \label{eq:Hayashi_lemma}
    \end{equation}
    Combining Eqs.~\eqref{eq:branching}, \eqref{eq:num_of_type_bound} and \eqref{eq:Hayashi_lemma}, we have 
    \begin{align}
        P_{\rm err}({\cal H}_n, Y^n)_{\rho} \leq \left(|{\cal B}_n|^{-1} (n+1)^{|{\cal X}| \frac{(d+2)(d-1)}{2}}\right)^{\alpha} |{\cal P}_n|^{\alpha} \left(\mathrm{Tr}\left[\left(\sum_{x\in{\cal X}} \bigl(p(x)\rho_{Q}^{x}\bigr)^{1-\alpha}\right)^{\frac{1}{1-\alpha}}\right]\right)^{n(1-\alpha)}. \label{eq:Hayashi_lemma_applied}
    \end{align}
    Using Eqs.~\eqref{eq:number_of_types} and \eqref{eq:Sibson_identity}, we therefore have, for any $\alpha\in[0,1]$,
    \begin{equation}
        -\frac{1}{n}\log P_{\rm err}({\cal H}_n, Y^n)_{\rho} \geq \alpha\left(\frac{\log|{\cal B}_n|}{n} - H_{1-\alpha}^{\uparrow}(X|Q)_{\rho} - \frac{\left[|{\cal X}|(d^2+d)-2\right]\log (n+1)}{2n} \right), \label{eq:proof_error_exponent}
    \end{equation}
    which proves Eq.~\eqref{eq:error_exponent_universal}. 
    
    Since $\alpha\mapsto H^{\uparrow}_{\alpha}(X|Q)_{\rho}$ is continuous and monotonically decreasing on $\alpha\in[0,1]$, we find that for any $R>H(X|Q)_{\rho}$, there exists a sufficiently small $\alpha$ such that the right-hand side of Eq.~\eqref{eq:proof_error_exponent} is positive as $n\to\infty$.  Thus, any rate $R>H(X|Q)_{\rho}$ is achievable with asymptotically vanishing errors.
    
    For the partially universal setup, we have from Eq.~\eqref{eq:operator_ineq_for_partially_universal} in Lemma~\ref{lem:decoder_reformulation} that 
    \begin{align}
        P_{\rm err}({\cal H}_n, \tilde{Y}^n)_{\rho} &\leq  |{\cal B}_n|^{-\alpha} \mathrm{Tr}\left[\sum_{\bm{x}\in{\cal X}^n}  p^n(\bm{x})\rho_{Q^n}^{\bm{x}}\left(p^n(\bm{x})\sigma_{\bm{x}}\right)^{-\alpha} \left(\sum_{P\in{\cal P}_n}|{\cal T}_{P}| \prod_{x\in{\cal X}}p(x)^{nP(x)}\sigma_{U,P}\right)^{\alpha}\right] \\
        \begin{split}
        &\leq \left(|{\cal B}_n|^{-1} (n+1)^{|{\cal X}| \frac{(d+2)(d-1)}{2}}\right)^{\alpha} \left(\mathrm{Tr}\left[\sum_{P\in {\cal P}_n} |{\cal T}_P| \prod_{x\in{\cal X}}p(x)^{nP(x)}\sigma_{U, P}\right]\right)^{\alpha} \\
        &\hspace{5cm} \max_{\tau\in{\cal D}({\cal H}_Q^{\otimes n})}\mathrm{Tr}\left[ \left(\sum_{x\in{\cal X}} \bigl(p(x)\rho_{Q}^{x}\bigr)^{1-\alpha}\right)^{\otimes n}\tau^{\alpha}\right],
        \end{split}
    \end{align}
    where we followed the same derivation as the fully universal case.
    Here, we have 
    \begin{equation}
        \mathrm{Tr}\left[\sum_{P\in{\cal P}_n}|{\cal T}_P|\prod_{y\in{\cal X}}p(y)^{nP(y)}\sigma_{U, P}\right] = \sum_{P\in{\cal P}_n}\sum_{\bm{y}\in{\cal T}_P}p^n(\bm{y}) = \sum_{\bm{y}\in{\cal X}^n}p^n(\bm{y})=1. \label{eq:normalized_partially_universal}
    \end{equation}
    Thus, we obtain Eq.~\eqref{eq:error_exponent_partially_universal} for this case.
\end{proof}

\begin{remark}\label{rem:exponent}
    References~\cite{Renes2023} and~\cite{Renes2024} showed with a rather indirect argument that the achievable error exponent (i.e., $\lim_{n\to\infty}-\frac{1}{n}\log P_{\rm err}({\cal H}_n, Y^n)_{\rho}$) of classical source compression with quantum side information when one knows the c-q state $\rho_{XB}$ is given by 
    \begin{equation}
        \max_{\alpha\in[0,1]} \alpha (R- H_{\frac{1}{1+\alpha}}^{\uparrow}(X|Q)_{\rho}). \label{eq:random_coding_exponent}
    \end{equation}
    Since the function $\alpha\mapsto H_{\alpha}^{\uparrow}(X|Q)_{\rho}$ is monotonically decreasing for $\alpha\in[0,1]$, $H_{\frac{1}{1+\alpha}}^{\uparrow}(X|Q)_{\rho}$ is smaller than $H_{1-\alpha}^{\uparrow}(X|Q)_{\rho}$, which leads to a better error exponent.  
    Whether this error exponent is achievable even with a universal decoder is an open problem.
    An upper bound on the error exponent when the encoder and the decoder know the state $\rho_{XQ}$ (and thus also an upper bound on the error exponent for the universal setup as well) is derived in Ref.~\cite{Cheng2021} as
    \begin{equation}
        \max_{\alpha\geq 0} \alpha (R- H_{\frac{1}{1+\alpha}}^{\uparrow}(X|Q)_{\rho}).
        \label{eq:sphere_packing_bound}
    \end{equation} 
    Note that even though the obtained error exponent in Theorem~\ref{thm:cq_slepian_wolf} may not be optimal, we later focus on minimizing the rate $R$ with a fixed error probability when $n\gg 1$.  In this regime, the choice $\alpha\sim n^{-1/2}$ may be optimal, and the difference between Eq.~\eqref{eq:sphere_packing_bound} and ours is small.

    The single-shot error bound on classical source compression with quantum side information is also given in Ref.~\cite{Renes2012} in terms of the smooth conditional max entropy, and its error exponent can be obtained using the asymptotic expansion of the smooth max entropy~\cite{Tomamichel2013}. 
\end{remark}
\begin{remark}
    The results in this section can easily be generalized to the case where one uses more general hash function families such as the $\delta$-almost 2-universal hash function family~\cite{Tsurumaru2013}.  In the $\delta$-almost 2-universal hash function family, the factor $\delta\geq 1$ is multiplied to the right-hand side of Eq.~\eqref{eq:universal_2}, i.e., the usual 2-universal hash function family is the 1-almost 2-universal hash function family.  When one uses the $\delta$-almost 2-universal hash function family as a random encoder, the term $\frac{\alpha\log\delta}{n}$ is further subtracted from the right-hand sides of Eqs.~\eqref{eq:error_exponent_universal} and \eqref{eq:error_exponent_partially_universal}, which does not change the asymptotically achievable compression rate as long as $\delta={\cal O}(\mathrm{poly}(n))$.
\end{remark}

We prove the following proposition for later application to QKD security proofs.
\begin{proposition} \label{prop:modified_for_qkd}
    Let $n$ be a positive integer and $q$ be a real number satisfying $0\leq q\leq 1$. Let ${\rm Binom}[n,q](m)$ be a binomial distribution defined as  
    \begin{equation}
        {\rm Binom}[n, q](m)\coloneqq {n \choose m}q^m (1-q)^{n-m}. \label{eq:def_of_binom}
    \end{equation}
    Let $\rho_{XQ}$ be a c-q state with a uniform marginal on the system $X$, i.e., $\rho_{XQ}=\sum_{x\in{\cal X}}|{\cal X}|^{-1}\ketbra{x}{x}_X\otimes 
    \rho^x_{Q}$ with a set of normalized states $\{\rho^x_{Q}\}_{x\in{\cal X}}$.  Let $\probP$ be a joint probability distribution of two random variables $\hat{m}$ and $\hat{\omega}$ with the ranges $\{0,\ldots,n\}$ and $\Omega$, respectively, and, for each $m\in\{0,\ldots,n\}$, let $\{\rho_{Q^m}^{\bm{x},\omega}\}_{\bm{x}\in{\cal X}^m, \omega\in\Omega}$ be a set of density operators such that 
    \begin{equation}
        \sum_{\omega\in\Omega}\probP(\hat{m}=m,\hat{\omega}=\omega) \sum_{\bm{x}\in{\cal X}^m}|{\cal X}|^{-m}\ketbra{\bm{x}}{\bm{x}}_{X^m}\otimes \rho_{Q^m}^{\bm{x},\omega} = {\rm Binom}[n, q](m)\, \rho_{XQ}^{\otimes m}. \label{eq:marginal_iid}
    \end{equation}
    For each $m$ and $\omega$, we perform a partially universal source compression with quantum side information for $\sum_{\bm{x}\in{\cal X}^m}|{\cal X}|^{-m}\ketbra{\bm{x}}{\bm{x}}_{X^m}\otimes \rho_{Q^m}^{\bm{x},\omega}$ with the following: the encoder uses a 2-universal family ${\cal H}_{\hat{m},\hat{\omega}}$ of hash functions with the range ${\cal B}_{\hat{m},\hat{\omega}}$ depending on $\hat{m}$ and $\hat{\omega}$, and the decoder uses the POVM $\tilde{Y}^n$ in Eq.~\eqref{eq:partially_universal_decoder}. Let ${\cal W}\subseteq \{0,\ldots,n\}\times \Omega$ be a set of values for the pair of random variables $(\hat{m},\hat{\omega})$.  
    Then, the probability $P_{\rm err}(\probP,\{(\{\rho_{Q^m}^{\bm{x},\omega}\}_{\bm{x}\in{\cal X}^m}, {\cal H}_{m,\omega})\}_{m,\omega},{\cal W})$ that the pair $(\hat{m},\hat{\omega})$ takes a value in ${\cal W}$ according to $\probP$ and the partially universal source compression with quantum side information protocol performed on $\sum_{\bm{x}\in{\cal X}^{\hat{m}}}|{\cal X}|^{-\hat{m}}\ketbra{\bm{x}}{\bm{x}}_{X^{\hat{m}}}\otimes \rho_{Q^{\hat{m}}}^{\bm{x},\hat{\omega}}$ fails is given by 
    \begin{align}
        \begin{split}
        &P_{\rm err}(\probP,\{(\{\rho_{Q^m}^{\bm{x},\omega}\}_{\bm{x}\in{\cal X}^m}, {\cal H}_{m,\omega})\}_{(m,\omega)},{\cal W}) \\
        &\coloneqq \sum_{(m,\omega)\in{\cal W}} \probP(\hat{m}=m,\hat{\omega}=\omega)\,\mathbb{E}_{h\sim |{\cal H}_{m,\omega}|^{-1}}\!\left[\sum_{\bm{x}\in{\cal X}^m}|{\cal X}|^{-m} \tr\!\left[\rho^{\bm{x},\omega}_{Q^m}\left(I_{Q^m} - \tilde{Y}^m_{h,h(\bm{x})}(\bm{x})\right)\right]\right] 
        \end{split}\label{eq:defining_rel_err}\\
        &\leq \overline{P}_{\rm err}(q\rho_{XQ},\{{\cal B}_{m,\omega}\}_{(m,\omega)},{\cal W}), \label{eq:variable_length_renyi}
    \end{align} 
    where 
    \begin{equation}
        \overline{P}_{\rm err}(\sigma_{XQ},\{{\cal B}_{m,\omega}\}_{(m,\omega)},{\cal W})\coloneqq \min_{\alpha\in[0,1]}\max_{(m,\omega)\in{\cal W}} 2^{-\alpha \bigl(\log|{\cal B}_{m,\omega}|-n H^{\uparrow,\leq}_{1-\alpha}(X|Q)_{\sigma}-[|{\cal X}|(d+2)(d-1)\log(m+1)]/ 2\bigr)}, \label{eq:def_overline_P}
    \end{equation}
    for a subnormalized $\sigma_{XQ}$, and 
    $H_{1-\alpha}^{\uparrow, \leq}(X|Q)_{\sigma}$ is defined in Eq.~\eqref{eq:subnormalized_Renyi}. 
\end{proposition}
\begin{proof}
    From Eq.~\eqref{eq:operator_ineq_for_partially_universal} in Lemma~\ref{lem:decoder_reformulation} and the defining equation Eq.~\eqref{eq:defining_rel_err}, we have for any $\alpha\in[0,1]$, 
    \begin{equation}
        \begin{split}
        P_{\rm err}(\probP,\{(\{\rho_{Q^m}^{\bm{x},\omega}\}_{\bm{x}\in{\cal X}^m}, {\cal H}_{m,\omega})\}_{(m,\omega)},{\cal W}) &\leq \sum_{(m,\omega)\in{\cal W}} \probP(\hat{m}=m,\hat{\omega}=\omega)\,\sum_{\bm{x}\in{\cal X}^m}|{\cal X}|^{-m}|{\cal B}_{m,\omega}|^{-\alpha} \\
        &\qquad \tr\!\left[\rho^{\bm{x},\omega}_{Q^m} (|{\cal X}|^{-m}\sigma_{\bm{x}})^{-\alpha}\!\left(\sum_{P\in{\cal P}_m}|{\cal T}_P|\prod_{x\in{\cal X}}|{\cal X}|^{-m P(x)}\sigma_{U,P}\right)^{\alpha}\right].
        \end{split} \label{eq:bound_with_B}
    \end{equation}
    For any $(m,\omega)\in{\cal W}$, we trivially have
    \begin{equation}
        |{\cal B}_{m,\omega}|^{-\alpha} \leq (m+1)^{-\alpha|{\cal X}|(d+2)(d-1)/2}\max_{(m^*,\omega^*)\in{\cal W}} 2^{-\alpha\bigl(\log|{\cal B}_{m^*,\omega^*}| - [|{\cal X}|(d+2)(d-1)\log(m^*+1)]/2\bigr)}. \label{eq:bin_size_bound}
    \end{equation} 
    Furthermore, we have 
    \begin{equation}
        \sum_{P\in{\cal P}_m}|{\cal T}_P|\prod_{x\in{\cal X}}|{\cal X}|^{-m P(x)}\sigma_{U,P} = \sum_{P\in{\cal P}_m}|{\cal T}_P||{\cal X}|^{-m}\sigma_{U,P} =\sum_{\bm{x}\in{\cal X}^m}|{\cal X}|^{-m} \sigma_{\bm{x}} \eqqcolon \tau_m, \label{eq:uniform_sum_sequence}
    \end{equation}
    where $\tr[\tau_m]=1$.
    Applying Eqs.~\eqref{eq:bin_size_bound} and~\eqref{eq:uniform_sum_sequence} to Eq.~\eqref{eq:bound_with_B}, we have 
    \begin{align}
        \begin{split}
        &P_{\rm err}(\probP,\{(\{\rho_{Q^m}^{\bm{x},\omega}\}_{\bm{x}\in{\cal X}^m}, {\cal H}_{m,\omega})\}_{(m,\omega)},{\cal W}) \\ &\leq \max_{(m^*,\omega^*)\in{\cal W}} 2^{-\alpha\bigl(\log|{\cal B}_{m^*,\omega^*}| - [|{\cal X}|(d+2)(d-1)\log(m^*+1)]/2\bigr)} \\
        &\quad \sum_{m=0}^n\sum_{\omega\in\Omega}(m+1)^{-\alpha|{\cal X}|(d+2)(d-1)/2} \probP(\hat{m}=m, \hat{\omega}=\omega)\sum_{\bm{x}\in{\cal X}^m} \tr\!\left[|{\cal X}|^{-m}\rho^{\bm{x},\omega}_{Q^m} (|{\cal X}|^{-m}\sigma_{\bm{x}})^{-\alpha}\tau_m^{\alpha}\right]
        \end{split} \label{eq:before_alpha_bound}
    \end{align} 
    Now, notice that the following equalities hold:
    \begin{align}    
        &\sum_{\omega\in\Omega} \probP(\hat{m}=m, \hat{\omega}=\omega)\sum_{\bm{x}\in{\cal X}^m} \tr\!\left[|{\cal X}|^{-m}\rho^{\bm{x},\omega}_{Q^m} (|{\cal X}|^{-m}\sigma_{\bm{x}})^{-\alpha}\tau_m^{\alpha}\right] \nonumber \\
        &= \sum_{\omega\in\Omega} \probP(\hat{m}=m, \hat{\omega}=\omega) \sum_{\bm{x},\bm{y}\in{\cal X}^m} \tr\!\left[|{\cal X}|^{-m}\ketbra{\bm{x}}{\bm{x}}_{X^m}\otimes\rho^{\bm{x},\omega}_{Q^m} (|{\cal X}|^{-m}\ketbra{\bm{y}}{\bm{y}}_{X^m}\otimes \sigma_{\bm{y}})^{-\alpha}\tau_m^{\alpha}\right] \\
        &= {\rm Binom}[n,q](m)\sum_{\bm{y}\in{\cal X}^m} \tr\!\left[\rho_{XQ}^{\otimes m} \bigl(|{\cal X}|^{-m}\ketbra{\bm{y}}{\bm{y}}_{X^m}\otimes \sigma_{\bm{y}}\bigr)^{-\alpha}\tau_m^{\alpha}\right] \\
        &= {\rm Binom}[n,q](m) \sum_{\bm{y}\in{\cal X}^m} \tr\!\left[|{\cal X}|^{-m}\rho_{Q^m}^{\bm{y}} \bigl(|{\cal X}|^{-m}\sigma_{\bm{y}}\bigr)^{-\alpha}\tau_m^{\alpha}\right],
    \end{align}
    where $\rho_{Q^m}^{\bm{y}}\coloneqq \rho_Q^{y_1}\otimes\cdots\otimes\rho_Q^{y_m}$, and the second equality follows from Eq.~\eqref{eq:marginal_iid}.
    From this, the summation over $m$ and $\omega$ in Eq.~\eqref{eq:before_alpha_bound} can be taken as 
    \begin{align}
        &\sum_{m=0}^n\sum_{\omega\in\Omega}(m+1)^{-\alpha|{\cal X}|(d+2)(d-1)/2} \,\probP(\hat{m}=m, \hat{\omega}=\omega)\sum_{\bm{x}\in{\cal X}^m} \tr\!\left[|{\cal X}|^{-m}\rho^{\bm{x},\omega}_{Q^m} (|{\cal X}|^{-m}\sigma_{\bm{x}})^{-\alpha}\tau_m^{\alpha}\right] \nonumber \\
        &= \sum_{m=0}^n(m+1)^{-\alpha|{\cal X}|(d+2)(d-1)/2} {\rm Binom}[n,q](m) \sum_{\bm{y}\in{\cal X}^m} \tr\!\left[|{\cal X}|^{-m}\rho_{Q^m}^{\bm{y}} \bigl(|{\cal X}|^{-m}\sigma_{\bm{y}}\bigr)^{-\alpha}\tau_m^{\alpha}\right]\\
        &\leq \sum_{m=0}^n {\rm Binom}[n,q](m)\, \tr\!\left[\sum_{\bm{y}\in{\cal X}^m}\left(|{\cal X}|^{-m}\rho_{Q^m}^{\bm{y}}\right)^{1-\alpha}\tau_m^{\alpha}\right] \\
        &\leq \sum_{m=0}^n {\rm Binom}[n,q](m)\,\left(\tr\left[\left(\left(\tr_X[\rho_{XQ}^{1-\alpha}]\right)^{\otimes m}\right)^{\frac{1}{1-\alpha}}\right]\right)^{1-\alpha}, \label{eq:convex_combination_of_renyi}
    \end{align} 
    where the first inequality follows from Eq.~\eqref{eq:1-alpha_bound}, and the second inequality follows from Eq.~\eqref{eq:Hayashi_lemma}.
    Since the function $t\mapsto t^{1-\alpha}$ is concave, we have 
    \begin{align}
        &\sum_{m=0}^n {\rm Binom}[n,q](m)\,\left(\tr\left[\left(\left(\tr_X[\rho_{XQ}^{1-\alpha}]\right)^{\otimes m}\right)^{\frac{1}{1-\alpha}}\right]\right)^{1-\alpha} \nonumber \\
        &\leq \left(\sum_{m=0}^n {\rm Binom}[n,q](m)\,\tr\left[\left(\left(\tr_X[\rho_{XQ}^{1-\alpha}]\right)^{\frac{1}{1-\alpha}}\right)^{\otimes m}\right]\right)^{1-\alpha} \\
        &= \left(\sum_{m=0}^n {n\choose m}(1-q)^{n-m}\,\left(\tr\left[\left(\tr_X[(q\rho_{XQ})^{1-\alpha}]\right)^{\frac{1}{1-\alpha}}\right]\right)^{m}\right)^{1-\alpha} \\
        &= \left(\tr\left[\left(\tr_X[(q\rho_{XQ})^{1-\alpha}]\right)^{\frac{1}{1-\alpha}}\right] + (1-q)\right)^{n(1-\alpha)} \\
        &= 2^{n\alpha H^{\uparrow,\leq}_{1-\alpha}(X|Q)_{q\rho}}, \label{eq:bound_by_subnormalized_renyi}
    \end{align}
    where we used Eq.~\eqref{eq:expression_subnormalized} in the last equality. Combining Eqs.~\eqref{eq:before_alpha_bound}, \eqref{eq:convex_combination_of_renyi}, and \eqref{eq:bound_by_subnormalized_renyi}, we prove the statement.
\end{proof}

Proposition~\ref{prop:modified_for_qkd} states that the failure probability of the partially universal source compression with quantum side information performed on the non-i.i.d.~state $\bigoplus_{(m,\omega)\in{\cal W}}\probP(\hat{m}=m,\hat{\omega}=\omega)\sum_{\bm{x}\in{\cal X}^m}|{\cal X}|^{-m}\rho_{Q^m}^{\bm{x},\omega}$, which has an i.i.d.~marginal as in Eq.~\eqref{eq:marginal_iid}, can be bounded from above by $\overline{P}_{\rm err}(q\rho_{XQ},\{{\cal B}_{m,\omega}\}_{(m,\omega)},{\cal W})$ in Eq.~\eqref{eq:def_overline_P}, which depends only on single-copy quantity $q \rho_{XQ}$. In the QKD scenario, $\hat{m}$ corresponds to the length of a sifted key and $\hat{\omega}$ corresponds to other random variables that are used for parameter estimations. Thus, $q$ corresponds to a probability of the successful sifting, which makes a quantum state corresponding to a sifted key subnormalized. The fact that $H_{1-\alpha}^{\uparrow, \leq}(X|Q)_{\tilde{\rho}}$ is concave with a subnormalized $\tilde{\rho}$ will be used later.

\section{Security proof based on phase error correction} \label{sec:security_proof_universal}
In this section, we apply universal source compression with quantum side information developed in the previous section to the security proof of QKD.  This new PEC-type security proof is applied to a key with any base-$p$ number for a prime $p$.  Thus, we first extend the definition of the complementary bases beyond the conventional Pauli-$Z$ and $X$ bases.  We then establish the security condition for such a generalized key and introduce a virtual protocol to correct an error in the complementary basis (i.e., the phase error), which is the essence of the PEC-type security proof.  Since the universal decoder developed in the previous section can only be applied to i.i.d.~sources, we need to reduce the security against general attacks to that against collective attacks.  This reduction is achieved by the recently developed i.i.d.~reduction technique~\cite{Matsuura2024}, which builds on an old idea~\cite{Tamaki2003} and can be viewed as a PEC analogue of the post-selection technique~\cite{Christandl2009}.  As a result of this reduction, together with the partially universal decoder for classical source compression with quantum side information, the security analysis of a QKD protocol boils down to estimating a single R\'enyi-entropic quantity.  This estimation can be performed via a convex optimization technique, although the resulting problem takes the form of a nonlinear convex semidefinite programming.  
We finally comment on the asymptotic optimality of our newly developed security proof based on universal decoding for source compression with quantum side information.

\subsection{Definition of the complementary bases} \label{sec:complementary_bases}
In the method of phase error correction, the bases complementary to each other play an important role.  Here, we define the complementary bases used in this article based on a prime field.  
Let $p$ be a prime number and $\mathbb{F}_p$ be the prime field with the base $p$.  We pick up a basis $\{\ket{c}:c\in \mathbb{F}_p\}$ for a $p$-dimensional Hilbert space, and define operators $X(a)$ and $Z(b)$ for $a,b\in\mathbb{F}_{p}$ as 
\begin{align}
    X(a) &\coloneqq \sum_{c\in\mathbb{F}_{p}}\ketbra{c+a}{c}, \label{eq:generalized_X} \\
    Z(b) &\coloneqq  \sum_{c\in\mathbb{F}_{p}} \chi_p(b c)\ketbra{c}{c},\label{eq:generalized_Z}
\end{align}
where the addition and multiplication above is of $\mathbb{F}_{p}$, and the additive character $\chi_p$ is defined as 
\begin{equation}
    \chi_p(a)\coloneqq\exp\!\left(\frac{2\pi i a}{p}\right). \label{eq:additive_character}
\end{equation}
They satisfy the following commutation relation:
\begin{equation}
    X(a)Z(b) = \chi_p(-ab)Z(b)X(a). \label{eq:commutation_relation}
\end{equation}
In the case $p=2$, these operators correspond to the usual Pauli-X and Z operators.

We define the $X$ (resp.~$Z$) basis as the diagonalizing basis of the operator $X(a)$ (resp.~$Z(b)$).  We put tilde to denote the $X$ basis and put nothing to denote the $Z$ basis, i.e., $\{\ket{\widetilde{c}}:c\in\mathbb{F}_{p}\}$ is the $X$ basis and $\{\ket{c}:c\in\mathbb{F}_{p}\}$ is the $Z$ basis.  From the definitions~\eqref{eq:generalized_X} and \eqref{eq:generalized_Z}, we have 
\begin{equation}
    \ket{\widetilde{c}} = \frac{1}{\sqrt{p}}\sum_{c'\in\mathbb{F}_{p}} \chi_p(-c c') \ket{c'}. \label{eq:basis_transform_finite_field}
\end{equation}
where $\ket{\widetilde{c}}$ is the eigenstate of $X(a)$ with the eigenvalue $\chi_p(a c)$ for any $a\in\mathbb{F}_{p}$. Note that $Z(b)$ shifts $\ket{\tilde{c}}$ in the minus direction, i.e.,
\begin{equation}
    Z(b)\ket{\widetilde{c}} = \ket{\widetilde{c-b}}. \label{eq:b_subtraction}
\end{equation}
The bases $\{\ket{\widetilde{c}}:c\in\mathbb{F}_{p}\}$ and $\{\ket{c}:c\in\mathbb{F}_{p}\}$ are thus mutually unbiased~\cite{Durt2010}.
Furthermore, the character $\chi_p$ satisfies~\cite{Durt2010} 
\begin{equation}
    \sum_{b\in\mathbb{F}_{p}} \chi_p(a b) =p\delta_{a, 0}, \label{eq:Kronecker_relation}
\end{equation} 
for any $a\in\mathbb{F}_{p}$, where $\delta_{a, b}$ denotes the Kronecker delta and $0$ denotes the additive identity of $\mathbb{F}_{p}$.

The formalism above can be generalized to a composite system by considering a vector space over the field $\mathbb{F}_{p}$.  Consider an $n$-qudit system, each of which has local dimension $p$. Then, the $n$-qudit $X$ and $Z$ operators are given respectively by $X(\bm{a})\coloneqq X(a_1)\otimes\cdots\otimes X(a_{n})$ and $Z(\bm{b})\coloneqq Z(b_1)\otimes\cdots\otimes Z(b_{n})$ for row vectors $\bm{a},\bm{b}\in\mathbb{F}_{p}^{n}$.  
We then have $X(\bm{a})Z(\bm{b}) = \chi_p(-\bm{a}\bm{b}^{\top})Z(\bm{b}) X(\bm{a})$, where $\top$ denotes the transposition.  Let $\{\ket{\bm{c}}\}_{\bm{c}\in\mathbb{F}_p^{n}}$ (resp.~$\{\ket{\widetilde{\bm{c}}}\}_{\bm{c}\in\mathbb{F}_p^{n}}$) be the diagonalizing basis of $X(\bm{a})$ for all $\bm{a}\in\mathbb{F}_{p}^{n}$ (resp.~$Z(\bm{b})$ for all $\bm{b}\in \mathbb{F}_{p}^{n}$), which we called the $X$ basis (resp.~the $Z$ basis) for the $n$-qudit system.
Generalization of the relation~\eqref{eq:basis_transform_finite_field} is then given by 
\begin{equation}
    \ket{\widetilde{\bm{c}}} = p^{-n/2}\sum_{\bm{c}'\in\mathbb{F}_{p}^{n}}\chi_p(-\bm{c}'\bm{c}^{\top})\ket{\bm{c}'}.\label{eq:basis_transform_n_dit}
\end{equation}
For an invertible matrix $C\in\mathbb{F}_p^{n\times n}$, there exists a unitary $U(C)$ such that 
\begin{equation}
    U(C)X(\bm{a})U(C)^{\dagger}=X(\bm{a}C^{\top}). \label{eq:X_op_transform}
\end{equation}
From the relation~\eqref{eq:basis_transform_n_dit}, the same unitary $U(C)$ transforms $Z(\bm{b})$ as 
\begin{equation}
    U(C)Z(\bm{b})U(C)^{\dagger}=Z(\bm{b}C^{-1}). \label{eq:Z_op_transform}
\end{equation}

In the following sections, an $r$ tuple of a finite field $\mathbb{F}_{p}$ is associated with alphabets ${\cal X}$ with $|{\cal X}|= p^r$; i.e., each of $\{(p_1,\ldots,p_r):p_1,\ldots,p_r\in\mathbb{F}_p\}$ is labeled by a distinct letter in ${\cal X}$. Thus, one extracts an $r$-dit sifted key per a communication round.
For protocols that primarily extract sifted keys of a base-$d$ number with non-prime-power $d$, we embed it into an $r$-tuple of a finite field $\mathbb{F}_p$ with $p^r>d$ and perform post-processing with $\mathbb{F}_{p}$. Thus, if we embed it into an $r$-tuple of $\mathbb{F}_2$, which always exists satisfying $d<2^r<2d$, then the resulting post-processing is based on conventional binary numbers.
The details will be described in the subsequent section.

\subsection{Introduction of a virtual protocol} \label{sec:virtual_protocol}

In the following sections, we denote random variables with the symbol $\hat{\cdot}$.  
We consider here a prepare-and-measure protocol, where a sender Alice randomly generates an alphabet $\hat{a}\in{\cal X}_{\sysA}$, encodes it to a quantum state, and sends it to a receiver Bob. 
Bob measures a received quantum state and probabilistically obtains an outcome $\hat{b}\in{\cal X}_{\sysB}$.  
Alice and Bob announce some random variables $\hat{\xi}_{\rm pub}=\mathfrak{t}(\hat{a},\hat{b})$, where $\mathfrak{t}:{\cal X}_{\sysA}\times{\cal X}_{\sysB}\to \Omega_{\rm pub}$ denotes a fixed map. Then, they perform sifting with maps $\mathfrak{s}[\xi]:{\rm proj}_{\sysA}(\mathfrak{t}^{-1}(\xi))\to{\cal X}_{\rm sift}$ and $\mathfrak{s}'[\xi]:{\rm proj}_{\sysB}(\mathfrak{t}^{-1}(\xi))\to{\cal X}_{\rm sift}$ for any $\xi\in\Omega_{\rm pub}$, where $\mathfrak{t}^{-1}$ denotes a preimage and ${\rm proj}_{\sysA}:{\cal X}_{\sysA}\times{\cal X}_{\sysB}\to{\cal X}_{\sysA}$ (resp.~${\rm proj}_{\sysB}:{\cal X}_{\sysA}\times{\cal X}_{\sysB}\to{\cal X}_{\sysB}$) denotes a projection of a Cartesian product, to obtain sifted keys $\hat{\bm{k}}_A^{\rm sift}, \hat{\bm{k}}_B^{\rm sift}\in{\cal X}_{\rm sift}^{\hat{n}_{\rm sift}}$ of the length $\hat{n}_{\rm sift}$.
Our following analysis applies to permutation-symmetric protocols.  The protocol can be made permutation-symmetric as long as Alice and Bob perform jointly permutation-symmetric quantum operations (e.g., i.i.d.~state preparation and measurement) and jointly permutation-symmetrize the obtained classical data. Here, we restrict our attention to the case where they perform i.i.d.~quantum operations for simplicity. For classical data, since only the frequency of each outcome is used for parameter estimations, which is automatically permutation-symmetric, they just need to reorder their sifted keys $\hat{\bm{k}}_A^{\rm sift}, \hat{\bm{k}}_B^{\rm sift}\in{\cal X}_{\rm sift}^{\hat{n}_{\rm sift}}$ joint-randomly. This requires ${\cal O}(\hat{n}_{\rm sift}\log\hat{n}_{\rm sift})$-bit randomness source and public communication.

As mentioned in the previous section, if the cardinality $|{\cal X}_{\rm sift}|$ is not a prime power, then Alice and Bob embed their (permutation-symmetrized) sifted keys to $\mathbb{F}_p^r$, where $p$ is a prime, so that an element of their final keys is in $\mathbb{F}_p$.   
By performing further classical post-processing that consists of information reconciliation and privacy amplification, Alice and Bob obtain the final keys $\hat{\bm{k}}_A^{\rm fin},\hat{\bm{k}}_B^{\rm fin}\in\mathbb{F}_p^{\hat{n}_{\rm fin}}$ of the length $\hat{n}_{\rm fin}$.
The definition of the $\varepsilon$-security is given in terms of the final classical-classical-quantum state $\rho_{K_A^n K_B^n E|\hat{n}_{\rm fin}=n}^{\rm fin}$ between Alice's and Bob's final keys and an adversary Eve as~\cite{Mueller2009} 
\begin{equation}
    \frac{1}{2}\sum_{n\geq 1}\mathrm{Pr}(\hat{n}_{\rm fin}=n)\|\rho_{K_A^n K_B^n E|\hat{n}_{\rm fin}=n}^{\rm fin} -  \rho_{K_A^n K_B^n E|\hat{n}_{\rm fin}=n}^{\rm ideal}\| \leq \varepsilon,
\end{equation}
where $\rho_{K_A^n K_B^n E|\hat{n}_{\rm fin}=n}^{\rm ideal}$ is defined as 
\begin{equation}
    \rho_{K_A^n K_B^n E|\hat{n}_{\rm fin}=n}^{\rm ideal}\coloneqq \sum_{\bm{k}\in\mathbb{F}_p^n} |{\cal X}|^{-n}\ketbra{\bm{k}}{\bm{k}}_{K_A^n}\otimes\ketbra{\bm{k}}{\bm{k}}_{K_B^n}\otimes \tr_{K_A^n K_B^n}\left[\rho_{K_A^n K_B^n E|\hat{n}_{\rm fin}=n}^{\rm fin}\right].
\end{equation}
The security of the final key is known to be split into two conditions: correctness and secrecy~\cite{Koashi2009}.  The former requires that Alice's and Bob's final keys are the same with a high probability, i.e.,
\begin{equation}
    \mathrm{Pr}(\hat{n}_{\rm fin}\geq 1, \hat{\bm{k}}_{A}^{\rm fin}\neq \hat{\bm{k}}_{B}^{\rm fin}) \leq \varepsilon_{\rm cor},
\end{equation}  
where $\hat{n}_{\rm fin}=0$ whenever the protocol is aborted.
This should be ensured by the information reconciliation step in the actual protocol.  The latter requires that Alice's final key looks almost completely random to an eavesdropper Eve, i.e.,  
\begin{equation}
    \frac{1}{2}\sum_{n\geq 1}\mathrm{Pr}(\hat{n}_{\rm fin}=n)\left\|\rho_{K_A^n E|\hat{n}_{\rm fin}=n}^{\rm fin} - \rho_{K_A^n E|\hat{n}_{\rm fin}=n}^{\rm ideal}\right\| \leq \varepsilon_{\rm sec}. \label{eq:secrecy_condition}
\end{equation}
(In the case of reverse reconciliation, the requirement is for Bob's key instead of Alice's key.)  If these conditions are met, the protocol is $(\varepsilon_{\rm cor}+\varepsilon_{\rm sec})$-secure.  The security proof of QKD protocols mainly focuses on proving the latter condition~\eqref{eq:secrecy_condition}.

In the security proof based on PEC~\cite{Lo1999, Shor2000, Koashi2009}, the secrecy is proved through the introduction of a virtual protocol.  
The goal of a virtual protocol is to establish the existence of a family $\{\rho_{K_A^nE|\hat{n}_{\rm fin}=n}^{\rm virt}\}_{n\geq 1}$ of quantum-quantum states such that the following two conditions hold, with $\hat{n}_{\rm fin}$ required to follow the same probability distribution as in the actual protocol~\cite{Matsuura2023}:
\begin{align}
    \forall n\geq 1,\qquad \sum_{\bm{k}\in\mathbb{F}_p^{n}}\ketbra{\bm{k}}{\bm{k}}_{K_A^n}\rho_{K_A^nE|\hat{n}_{\rm fin}=n}^{\rm virt}\ketbra{\bm{k}}{\bm{k}}_{K_A^n} &= \rho_{K_A^nE|\hat{n}_{\rm fin}=n}^{\rm fin}, \label{eq:compatibility} \\
    \intertext{and}
    \sum_{n\geq 1}\mathrm{Pr}(\hat{n}_{\rm fin}=n) \left(1 - \bra{\widetilde{\bm{0}}}_{K_A^n} \rho_{K_A^n|\hat{n}_{\rm fin}=n}^{\rm virt} \ket{\widetilde{\bm{0}}}_{K_A^n}\right) &\leq \frac{\varepsilon_{\rm sec}^2}{2},
    \label{eq:phase_error_corrected}
\end{align}
where $\ket{\widetilde{\bm{0}}}$ is the $+1$ eigenstate of $X(\bm{a})$ for any $\bm{a}\in \mathbb{F}_p^n$.

To construct $\{\rho_{K_A^nE|\hat{n}_{\rm fin}=n}^{\rm virt}\}_{n\geq 1}$ that satisfies Eqs.~\eqref{eq:compatibility} and \eqref{eq:phase_error_corrected} in a virtual protocol, Alice prepares an entangled state and keeps a part of it, instead of encoding classical information into a quantum state.  During $n_{\rm tot}$ rounds of quantum communication (i.e., $n_{\rm tot}$ optical pulses sent), Alice and Bob need to give Eve the same information as that in the actual protocol, but they can perform an arbitrary quantum operation as long as this requirement is fulfilled.  Finally, Alice performs a quantum version of the post-processing on the quantum systems she keeps and obtains $\rho_{K_A^nE|\hat{n}_{\rm fin}=n}^{\rm virt}$ with a probability $\mathrm{Pr}(\hat{n}_{\rm fin}=n)$. (This implicitly imposes that the probability distribution for $\hat{n}_{\rm fin}$ in a virtual protocol needs to be the same as that in the actual protocol, as mentioned above.)  These are to ensure that the condition~\eqref{eq:compatibility} is satisfied.
As the condition~\eqref{eq:phase_error_corrected} suggests, the $X$-basis error of the state at the end of the virtual protocol needs to be corrected with a high probability when averaged over $\hat{n}_{\rm fin}$.  This can be achieved by Alice's and Bob's collaborative error correction at the stage of the quantum version of the post-processing, which is the origin of the name ``phase error correction''.  For more details, see Refs.~\cite{Koashi2009, Matsuura2023}.  
It is important to point out here that Eve's system does not explicitly appear in the condition~\eqref{eq:phase_error_corrected}.  We only need to evaluate the failure probability of the phase error correction with Alice and Bob cooperating.

For simplicity, here we only consider a variable-key-length prepare-and-measure QKD protocol.
A procedure of such a QKD protocol and its corresponding virtual protocol can be described as follows.

\bigskip 
\noindent --- Actual protocol ---

Prior to the protocol, Alice and Bob agree on the protocol parameters and the total number $n_{\rm tot}$ of quantum communication rounds.

\begin{enumerate}
    \item Alice generates a random alphabet $\hat{a}\in{\cal X}_{\sysA}$ with a probability distribution $\{q_a\}_{a\in{\cal X}_{\sysA}}$ and encodes it to a quantum state $\rho_{\hat{a}}\in{\cal D}({\cal H}_{\sysC})$ on a system $\sysC$. The system $\sysC$ is sent to a quantum channel and received by Bob as a system $\sysB$. Bob performs a measurement on the system $\sysB$ to obtain an outcome $\hat{b}\in{\cal X}_{\sysB}$, with its POVM written as $\{M_{\sysB}^b\}_{b\in{\cal X}_{\sysB}}$.
    Alice and Bob repeat this quantum communication for $n_{\rm tot}$ rounds, where Alice may initiate each round before the previous round is completed. 
    \item After $n_{\rm tot}$ rounds of quantum communication, Alice and Bob publicly exchange messages based on their respective random variables. The messages for each round are collectively represented by a random variable $\hat{\xi}^{(i)}_{\rm pub}\coloneqq \mathfrak{t}(\hat{a}^{(i)},\hat{b}^{(i)})$ for $i=1,\ldots,n_{\rm tot}$ with a map $\mathfrak{t}:{\cal X}_{\sysA}\times{\cal X}_{\sysB}\to\Omega_{\rm pub}$, where $\Omega_{\rm pub}$ is a (finite) set. From $\hat{\xi}_{\rm pub}^{(i)}$, Alice and Bob determines a random variable $\hat{s}^{(i)}$ which represents the success ($\hat{s}^{(i)}=1$) and failure ($\hat{s}^{(i)}=0$) of sifting in this round. When $\hat{s}^{(i)}=1$, Alice determines her sifted value in ${\cal X}_{\rm sift}$ from $\hat{a}^{(i)}$ and $\hat{\xi}_{\rm pub}^{(i)}$, and Bob does the same from $\hat{b}^{(i)}$ and $\hat{\xi}_{\rm pub}^{(i)}$. Here, Alice's sifting procedure is represented by a partition 
    \begin{equation}
        \Omega_{\rm pub} = \Omega_{\rm pub}^{(0)} \sqcup \Omega_{\rm pub}^{(1)}
    \end{equation} 
    and a collection $\{\mathfrak{s}[\xi]\}_{\xi\in\Omega_{\rm pub}^{(1)}}$ of maps $\mathfrak{s}[\xi]:{\rm proj}_{\sysA}(\mathfrak{t}^{-1}(\xi))\to{\cal X}_{\rm sift}$, where $\mathfrak{t}^{-1}(\xi)$ denotes a preimage of $\xi$ and ${\rm proj}_{\sysA}:{\cal X}_A\times{\cal X}_B\to{\cal X}_A$ denotes a projection of a Cartesian product, such that $\hat{s}^{(i)}=1$ iff $\hat{\xi}_{\rm pub}^{(i)}=\mathfrak{t}(\hat{a}^{(i)},\hat{b}^{(i)})\in \Omega_{\rm pub}^{(1)}$ and then Alice's sifted value is $\mathfrak{s}[\hat{\xi}_{\rm pub}^{(i)}](\hat{a}^{(i)})$.
    Similarly, Bob's sifted value is represented as $\mathfrak{s}'[\hat{\xi}_{\rm pub}^{(i)}](\hat{b}^{(i)})$ with a collection $\{\mathfrak{s}'[\xi]\}_{\xi\in\Omega_{\rm pub}^{(1)}}$ of maps $\mathfrak{s}'[\xi]:{\rm proj}_{\sysB}(\mathfrak{t}^{-1}(\xi))\to{\cal X}_{\rm sift}$.
    They compute the frequency of public announcements $\hat{\Xi}_{\rm pub}\coloneqq n_{\rm tot} P_{\hat{\bm{\xi}}_{\rm pub}}$, where $\hat{\bm{\xi}}_{\rm pub}\coloneqq (\hat{\xi}_{\rm pub}^{(1)},\ldots,\hat{\xi}_{\rm pub}^{(n_{\rm tot})})$, and $P_{\bm{x}}$ is the type of $\bm{x}$.
    \item (Permutation symmetrization) By consuming ${\cal O}(\hat{n}_{\rm sift}\log\hat{n}_{\rm sift})$-bit local randomness and public communication with $\hat{n}_{\rm sift}\coloneqq \sum_{i=1}^{n_{\rm tot}}\hat{s}^{(i)}$, Alice and Bob joint-randomly reorder their sifted values they generated for the rounds with $\hat{s}^{(i)}=1$ and obtain sifted keys $\hat{\bm{k}}^{\rm sift}_A$ and $\hat{\bm{k}}^{\rm sift}_B$, respectively, where $\hat{\bm{k}}_A^{\rm sift}, \hat{\bm{k}}_B^{\rm sift}\in{\cal X}_{\rm sift}^{\hat{n}_{\rm sift}}$.
    \item (Information reconciliation) Depending on the the frequency $\hat{\Xi}_{\rm pub}$ of announcements and the length $\hat{n}_{\rm sift}$ of the sifted key, Alice computes the required number $K_{\rm EC}(\hat{n}_{\rm sift}, \hat{\Xi}_{\rm pub})$ of syndrome dits for information reconciliation and the required number $K_{\rm PA}(\hat{n}_{\rm sift}, \hat{\Xi}_{\rm pub})$ of sacrificed dits for privacy amplification.
    Alice then defines $\hat{n}_{\rm fin}\coloneqq n_{\rm fin}(\hat{n}_{\rm sift},\hat{\Xi}_{\rm pub})$ as 
    \begin{equation}
        n_{\rm fin}(\hat{n}_{\rm sift},\hat{\Xi}_{\rm pub}) = 
        \begin{cases}
            \hat{n}_{\rm sift} r - K_{\rm PA}(\hat{n}_{\rm sift}, \hat{\Xi}_{\rm pub}) & \text{if } \hat{n}_{\rm sift} r - K_{\rm PA}(\hat{n}_{\rm sift}, \hat{\Xi}_{\rm pub}) > K_{\rm EC}(\hat{n}_{\rm sift}, \hat{\Xi}_{\rm pub}), \\
            0 & \text{otherwise},
        \end{cases} \label{eq:final_key_as_function}
    \end{equation}
    which is a deterministic function of $\hat{n}_{\rm sift}$ and $\hat{\Xi}_{\rm pub}$.
    If $\hat{n}_{\rm fin}=0$, Alice aborts the protocol. Otherwise, Alice sends Bob a $K_{\rm EC}(\hat{n}_{\rm sift}, \hat{\Xi}_{\rm pub})$-dit syndrome by consuming $K_{\rm EC}(\hat{n}_{\rm sift}, \hat{\Xi}_{\rm pub})$-dit preshared secret key, and Bob performs error correction on his sifted key $\hat{\bm{k}}^{\rm sift}_B$ according to the received syndrome to obtain a reconciled key $\hat{\bm{k}}^{\rm rec}_B$. 
    \item (Privacy amplification) Alice and Bob determine an injective map $\mathfrak{e}:{\cal X}_{\rm sift}\hookrightarrow\mathbb{F}_p^r$, where $p$ is a prime number.
    Alice randomly chooses a hash function $H_{\hat{i}}$ from a dual 2-universal family ${\cal H}_d(\hat{n}_{\rm sift}r,\hat{n}_{\rm fin})\eqqcolon \{H_i\}_i$ of surjective linear hash functions over the field $\mathbb{F}_{p}$, where $\hat{i}\in\{1,\ldots,|{\cal H}_d(\hat{n}_{\rm sift}r,\hat{n}_{\rm fin})|\}$. Alice then acts it on her sifted key to obtain the final key $\hat{\bm{k}}_A^{\rm fin}\coloneqq \mathfrak{e}^{\hat{n}_{\rm sift}}(\hat{\bm{k}}_A^{\rm sift})H_{\hat{i}}^{\top} \in \mathbb{F}_p^{\hat{n}_{\rm fin}}$, where we regard $H_{\hat{i}}$ as an $\hat{n}_{\rm fin}\times \hat{n}_{\rm sift}r$ matrix over $\mathbb{F}_p$.  Alice sends the index $\hat{i}$ to Bob, who applies the hash function $H_{\hat{i}}$ to his reconciled key to obtain the final key $\hat{\bm{k}}_B^{\rm fin}\coloneqq \mathfrak{e}^{\hat{n}_{\rm sift}}(\hat{\bm{k}}_B^{\rm rec})H_{\hat{i}}^{\top}$.
\end{enumerate}

In the above actual protocol, the random variables $\hat{a}$ and $\hat{b}$ should be chosen so that the sifted value and publicly exchanged information are deterministically defined through them. For example, when Alice (resp.~Bob) randomly switches the encoding basis (resp.~the measurement basis), then $a\in{\cal X}_{\sysA}$ (resp.~$b\in{\cal X}_{\sysB}$) should contain the basis information as well. For a protocol in which new random variables, independent of Eve, are generated during the sifting, we reformulate it so that Alice or Bob generates them prior to the communication, which amounts to ${\cal X}_{\sysA}$ or ${\cal X}_{\sysB}$ containing such random variables. This also implies that some of the encoded states $\{\rho_a\}_{a\in{\cal X}_{\sysA}}$ may be the same state for different values of $a$. Note that the following analyses can be straightforwardly adapted to the reverse reconciliation scenario by simply exchanging the role of $\sysA$ and $\sysB$.

To introduce a virtual protocol corresponding to the above actual protocol, we first define the following.
For each element $H_i \in \{H_i\}_i = {\cal H}_d(\hat{n}_{\rm sift}r,\hat{n}_{\rm fin})$, let $\overline{H}_i$ be an $(\hat{n}_{\rm sift}r - \hat{n}_{\rm fin})\times \hat{n}_{\rm sift}r$ matrix such that $(H_i^{\top}~\overline{H}_i^{\top})$ is an invertible matrix. Such a matrix $\overline{H}_i$ is not uniquely determined from $H_i$, but any choice that satisfies the above condition is allowed for our purpose. Then, let $G_i$ and $\overline{G}_i$ be $(\hat{n}_{\rm sift}r - \hat{n}_{\rm fin})\times \hat{n}_{\rm sift}r$ and $\hat{n}_{\rm fin} \times \hat{n}_{\rm sift}r$ matrices, respectively, such that $(\overline{G}_i^{\top}~G_i^{\top})$ is an inverse transpose of $(H_i^{\top}~\overline{H}_i^{\top})$. Then, we have $\overline{G}_i H_i^{\top}= I_{\hat{n}_{\rm fin}}$, $G_i \overline{H}_i^{\top}= I_{\hat{n}_{\rm sift}r - \hat{n}_{\rm fin}}$, and $\overline{G}_i \overline{H}_i^{\top}=0=G_i H_i^{\top}$. From the definition of dual 2-universal family~\cite{Tsurumaru2013}, the set $\{G_i\}_i \eqqcolon {\cal H}(\hat{n}_{\rm sift}r,\hat{n}_{\rm sift}r-\hat{n}_{\rm fin})$ is nothing but a 2-universal family of surjective linear hash functions with the field $\mathbb{F}_{p}$. Note that we can use an almost dual 2-universal hash family for $\{H_i\}_i$ instead of a 2-universal hash family, which results in $\{G_i\}_i$ being an almost 2-universal family, at the price of an additional overhead in the privacy amplification. See Ref.~\cite{Tsurumaru2013} for the details of an almost (dual) 2-universal family. 

We then introduce an auxiliary quantum system and a unitary acting on it that together simulate the privacy amplification in the actual protocol. Consider an $\hat{n}_{\rm sift} r$-qudit system $K^{\hat{n}_{\rm sift}r}$ with $\dim{\cal H}_K = p$. Furthermore, consider a unitary $U(H_i, \overline{H}_i)$ such that  
\begin{equation}
    U(H_i, \overline{H}_i)\ket{\bm{z}}_{K^{\hat{n}_{\rm sift}r}} = \ket{\bm{z}(H_i^{\top}\ \overline{H}_i^{\top})}_{K^{\hat{n}_{\rm sift}r}} \label{eq:action_on_Z_basis}
\end{equation}
holds for any $\bm{z}\in\mathbb{F}_p^{\hat{n}_{\rm sift}r}$, which then implies from Eqs.~\eqref{eq:X_op_transform} and \eqref{eq:Z_op_transform} that for any $\bm{x}\in\mathbb{F}_p^{\hat{n}_{\rm sift}r}$,
\begin{equation}
    U(H_i, \overline{H}_i)\ket{\widetilde{\bm{x}}}_{K^{\hat{n}_{\rm sift}r}}= \ket{\widetilde{\bm{x}(\overline{G}_i^{\top}\ G_i^{\top})}}_{K^{\hat{n}_{\rm sift}r}}. \label{eq:action_on_X_basis}
\end{equation}

In a virtual protocol corresponding to the actual protocol mentioned earlier, there remains a quantum system $Q$ after a quantum version of the sifting operation performed at each of $n_{\rm tot}$ rounds in the actual protocol.

\bigskip 
\noindent --- Virtual protocol ---

The functions $K_{\rm EC}(\hat{n}_{\rm sift}, \hat{\Xi}_{\rm pub})$, $K_{\rm PA}(\hat{n}_{\rm sift}, \hat{\Xi}_{\rm pub})$, and $ n_{\rm fin}(\hat{n}_{\rm sift}, \hat{\Xi}_{\rm pub})$ appearing during the protocol is the same as those in the actual protocol.
\begin{enumerate}
    \item Alice prepares an entangled state $\ket{\Psi}_{\sysA\sysC}$ such that there exists a POVM $\{M_{\sysA}^a\}_{a\in{\cal X}_{\sysA}}$ that satisfies 
    \begin{equation}
        \forall a\in{\cal X}_{\sysA},\qquad q_a \rho_a = \tr_{\sysA}[M_{\sysA}^a \ketbra{\Psi}{\Psi}_{\sysA\sysC}], \label{eq:purified_state_compatible}
    \end{equation}
    where $q_a$ and $\{\rho_a\}_{a\in{\cal X}_{\sysA}}$ are as defined in the actual protocol. The system $\sysC$ of $\ket{\Psi}_{\sysA\sysC}$ is sent to a quantum channel and received by Bob as a system $\sysB$. Bob keeps a quantum state on a system $\sysB$.
    Alice and Bob repeat this quantum communication for $n_{\rm tot}$ rounds, where Alice may initiate each round before the previous round is completed.
    \item After $n_{\rm tot}$ rounds of quantum communication, for every $i\in\{1,\ldots,n_{\rm tot}\}$, Alice and Bob jointly access systems $\sysA$ and $\sysB$ at the $i$-th round for obtaining random variables $(\hat{s}^{(i)}, \hat{\xi}_{\rm pub}^{(i)})\in\{0,1\}\times \Omega_{\rm pub}$ while performing a joint operation specified by CP maps ${\cal M}_{\sysA\sysB}^{(\xi)}$ and ${\cal J}_{\sysA\sysB\to Q}^{(\xi)}$ that correspond to $(\hat{s}^{(i)}, \hat{\xi}_{\rm pub}^{(i)})=(0,\xi)$ and $(\hat{s}^{(i)}, \hat{\xi}_{\rm pub}^{(i)})=(1,\xi)$, respectively, which are defined to satisfy:
    \begin{align}
        & \forall \xi \in \Omega_{\rm pub}^{(0)}, \forall\rho_{\sysA\sysB}\in{\cal D}({\cal H}_{\sysA\sysB}),  & {\cal M}_{\sysA\sysB}^{(\xi)}(\rho_{\sysA\sysB}) &=  \tr\!\left[\rho_{\sysA\sysB}\, \sum_{(a,b): \xi=\mathfrak{t}(a,b)}M_{\sysA}^a\otimes M_{\sysB}^b\right], \label{eq:failure_map}\\
        \intertext{and}
        & \forall \xi \in \Omega_{\rm pub}^{(1)}, \forall z\in{\cal X}_{\rm sift}, & \left({\cal J}_{\sysA\sysB\to Q}^{(\xi)}\right)^{\dagger}(\Pi_Q[z]) &= \sum_{(a,b):\xi=\mathfrak{t}(a,b), z=\mathfrak{s}[\xi](a)}M_{\sysA}^a\otimes M_{\sysB}^b, \label{eq:success_map}
    \end{align}
    where $Q$ is a quantum system at the $i$-th round, $\left({\cal J}_{\sysA\sysB\to Q}^{(\xi)}\right)^{\dagger}$ denotes the adjoint map of ${\cal J}_{\sysA\sysB\to Q}^{(\xi)}$, and $\{\Pi_Q[z]\}_{z\in{\cal X}_{\rm sift}}$ denotes a projection-valued measure (PVM).
    (Note that $\sum_{\xi\in\Omega_{\rm pub}^{(0)}}{\cal M}_{\sysA\sysB}^{(\xi)}$ and $\sum_{\xi\in\Omega_{\rm pub}^{(1)}}{\cal J}_{\sysA\sysB\to Q}^{(\xi)}$ combined are trace-preserving.)
    Alice and Bob announce $\hat{\xi}_{\rm pub}^{(i)}$ for each $i=1,\ldots,n_{\rm tot}$. Let $\hat{\Xi}_{\rm pub}\coloneqq n_{\rm tot}P_{\hat{\bm{\xi}}_{\rm pub}}$.
    \item (Permutation symmetrization) Let $\hat{n}_{\rm sift}$ be the number of rounds with $\hat{s}^{(i)}=1$ as in the actual protocol. Alice and Bob randomly permute the $\hat{n}_{\rm sift}$ systems $Q^{\hat{n}_{\rm sift}}$ by consuming ${\cal O}(\hat{n}_{\rm sift}\log\hat{n}_{\rm sift})$-bit local randomness and public communication.  
    \item (Information reconciliation)  Depending on the announcement $\hat{\Xi}_{\rm pub}$ and the length $\hat{n}_{\rm sift}$ of the sifted key, Alice computes $K_{\rm EC}(\hat{n}_{\rm sift}, \hat{\Xi}_{\rm pub})$, $K_{\rm PA}(\hat{n}_{\rm sift}, \hat{\Xi}_{\rm pub})$, and $\hat{n}_{\rm fin}\coloneqq n_{\rm fin}(\hat{n}_{\rm sift}, \hat{\Xi}_{\rm pub})$.
    If $\hat{n}_{\rm fin}=0$, Alice aborts the protocol.  Otherwise, Alice sends Bob $K_{\rm EC}(\hat{n}_{\rm sift}, \hat{\Xi}_{\rm pub})$ random bits.
    \item (Privacy amplification) Alice obtains the $\hat{n}_{\rm sift}r$-qudit system $K^{\hat{n}_{\rm sift}r}$ by performing an isometry $C_{Q\to K^r Q}^{\otimes \hat{n}_{\rm sift}}$ that coherently applies $\mathfrak{e}:{\cal X}_{\rm sift}\hookrightarrow \mathbb{F}_p^r$ given in the actual protocol to the sifted value encoded in the system $Q$ and embeds it into the $Z$ bases of ${\cal H}_K^{\otimes r}$; i.e., $C_{Q\to K^r Q}$ is defined as
    \begin{equation}
        C_{Q\to K^r Q}\coloneqq \sum_{z\in{\cal X}_{\rm sift}} 
        \ket{\mathfrak{e}(z)}_{K^r}\otimes \Pi_{Q}[z]. \label{eq:key_extraction_isometry}
    \end{equation}
    Alice randomly chooses $H_{\hat{i}}\in{\cal H}_d(\hat{n}_{\rm sift}r,\hat{n}_{\rm fin})$, performs a unitary $U(H_{\hat{i}}, \overline{H}_{\hat{i}})$ on the system $K^{\hat{n}_{\rm sift} r}$, and measures the last $(\hat{n}_{\rm sift}r - \hat{n}_{\rm fin})$ qudits in the $X$ basis to obtain an outcome string $\hat{\bm{c}}\in\mathbb{F}_p^{\hat{n}_{\rm sift}r - \hat{n}_{\rm fin}}$. The unmeasured qudits are named as the system $K_A^{\hat{n}_{\rm fin}}$. 
    Alice and Bob then perform a measurement $\{\tilde{Y}^{\hat{n}_{\rm sift}}_{G_{\hat{i}},\hat{\bm{c}}}(\bm{x}^{*})\}_{\bm{x}^{*}\in (\mathbb{F}_p^r)^{\hat{n}_{\rm sift}}:\bm{x}^* G_{\hat{i}}^{\top}=\hat{\bm{c}}}$ on the system $Q^{\hat{n}_{\rm sift}}$ to obtain an estimate $\hat{\bm{x}}^{*}\in(\mathbb{F}_p^r)^{\hat{n}_{\rm sift}}$ and set $\hat{\bm{b}}=\hat{\bm{x}}^{*}\overline{G}_{\hat{i}}^{\top}\in \mathbb{F}_p^{\hat{n}_{\rm fin}}$.  Alice then performs a phase error correction $Z(\hat{\bm{b}})$ on the system $K_A^{\hat{n}_{\rm fin}}$, which leads to the final state $\rho_{K_A^{\hat{n}_{\rm fin}} E}^{\rm virt}$. See also Fig.~\ref{fig:relation_between_hashings}.
\end{enumerate}

\begin{figure}[t]
    \centering 
    \includegraphics[width=0.95\linewidth]{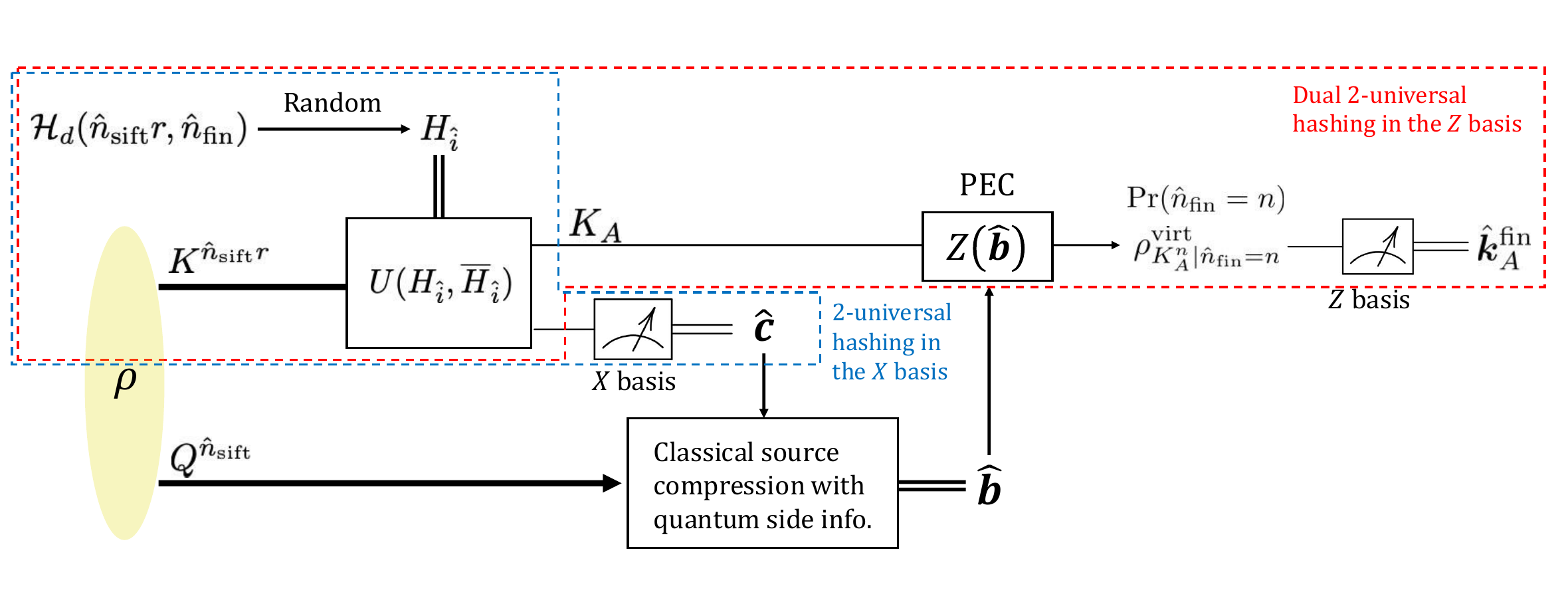}
    \caption{The schematics of how the phase error correction (PEC) protocol works and the state $\rho_{K_A^{n} E | \hat{n}_{\rm fin}=n}^{\rm virt}$ is produced with the probability $\mathrm{Pr}(\hat{n}_{\rm fin}=n)$ in the virtual protocol. The dual 2-universal hashing to obtain the final key $\hat{\bm{k}}_{A}^{\rm fin}$ from the sifted key with the length $\hat{n}_{\rm sift}$ in the actual protocol corresponds to the dual 2-universal hashing in the $Z$ basis in the virtual protocol realized through the unitary $U(H_{\hat{i}},\overline{H}_{\hat{i}})$ acting on the system $K^{\hat{n}_{\rm sift}r}$, where $H_{\hat{i}}$ is chosen randomly from a dual 2-universal family ${\cal H}_d(\hat{n}_{\rm sift}r,\hat{n}_{\rm fin})$.  At the same time, this unitary action followed by the $X$-basis measurement on the last $\hat{n}_{\rm sift}r-\hat{n}_{\rm fin}$ qudit to obtain an outcome string $\hat{\bm{c}}\in\mathbb{F}_p^{\hat{n}_{\rm sift}-\hat{n}_{\rm fin}}$ corresponds to 2-universal hashing in the $X$ basis, which enables to perform the protocol of universal classical source compression with quantum side information in Sec.~\ref{sec:cq_slepian_wolf}.}
    \label{fig:relation_between_hashings}
\end{figure}

For this virtual protocol, we can show Eq.~\eqref{eq:compatibility} by combining Eqs.~\eqref{eq:action_on_Z_basis} and \eqref{eq:purified_state_compatible}--\eqref{eq:success_map} with the fact that the announcements made by Alice and Bob at the post-processing steps (Steps~3--5) in the virtual protocol are the same as those in the actual protocol. For the detailed proof, see Appendix~\ref{sec:equivalence}.
The remaining task is to show another condition~\eqref{eq:phase_error_corrected}, which will be carried out in the subsequent sections.

In addition to the choice $\{\overline{H}_i\}_i$ for the dual 2-universal family $\{H_i\}_i$, the virtual protocol has arbitrariness in the construction of $\{{\cal J}^{(\xi)}_{\sysA\sysB\to Q}\}_{\xi\in\Omega_{\rm pub}^{(1)}}$. How the choice of $\{{\cal J}^{(\xi)}_{\sysA\sysB\to Q}\}_{\xi\in\Omega_{\rm pub}^{(1)}}$ affects the security proof and the overall performance will be revealed in the subsequent sections.

\subsection{From PEC to partially universal source compression through estimation protocol} \label{sec:connection_partially_universal}
To see the explicit relation between the failure probability of PEC given in the form of the left-hand side of Eq.~\eqref{eq:phase_error_corrected} and that of the source compression with quantum side information studied in Sec.~\ref{sec:cq_slepian_wolf}, we introduce yet another protocol called an estimation protocol. An estimation protocol introduces an auxiliary random variable $\hat{\theta}^{(i)}_{\rm aux}\in\Omega_{\rm aux}$, where $\Omega_{\rm aux}$ is a (finite) set. As will be explained later, the introduction of the variable $\hat{\theta}^{(i)}_{\rm aux}$ is often essential for establishing the security. Furthermore, instead of Steps~2 and 5 in the virtual protocol, an estimation protocol performs alternative procedures Step~2' and 5' as follows.

\bigskip 
\noindent --- Estimation protocol ---
\begin{enumerate}
    \item[1.] The same as that in the virtual protocol.
    \item[2'.] After $n_{\rm tot}$ rounds of quantum communication, for every $i\in\{1,\ldots,n_{\rm tot}\}$, Alice and Bob jointly access systems $\sysA$ and $\sysB$ at the $i$-th round for obtaining random variables $(\hat{s}^{(i)}, \hat{\xi}_{\rm pub}^{(i)}, \hat{\theta}_{\rm aux}^{(i)})\in\{0,1\}\times \Omega_{\rm pub} \times \Omega_{\rm aux}$ while performing a joint operation specified by CP maps ${\cal M}_{\sysA\sysB}^{(\xi, \theta)}$ and ${\cal J}_{\sysA\sysB\to Q}^{(\xi, \theta)}$ that correspond to $(\hat{s}^{(i)}, \hat{\xi}_{\rm pub}^{(i)}, \hat{\theta}_{\rm aux}^{(i)})=(0, \xi, \theta)$ and $(\hat{s}^{(i)}, \hat{\xi}_{\rm pub}^{(i)}, \hat{\theta}_{\rm aux}^{(i)})=(1, \xi, \theta)$, respectively, which are defined to satisfy:
    \begin{align}
        & \forall \xi \in \Omega_{\rm pub}^{(0)},  & \sum_{\theta\in\Omega_{\rm aux}}{\cal M}_{\sysA\sysB}^{(\xi, \theta)} &=  {\cal M}_{\sysA\sysB}^{(\xi)}, \label{eq:failure_map_estimation}\\
        \intertext{and}
        & \forall \xi \in \Omega_{\rm pub}^{(1)},  & \sum_{\theta\in\Omega_{\rm aux}} {\cal J}_{\sysA\sysB\to Q}^{(\xi, \theta)} &= {\cal J}_{\sysA\sysB\to Q}^{(\xi)}. \label{eq:success_map_estimation}
    \end{align}
    That is, the maps ${\cal M}_{\sysA\sysB}^{(\xi, \theta)}$ and ${\cal J}_{\sysA\sysB\to Q}^{(\xi, \theta)}$ are subdivisions of the maps ${\cal M}_{\sysA\sysB}^{(\xi)}$ and ${\cal J}_{\sysA\sysB\to Q}^{(\xi)}$ introduced in the virtual protocol. Alice and Bob announce $\hat{\xi}_{\rm pub}^{(i)}$ for each $i=1,\ldots,n_{\rm tot}$. Let $\hat{\Xi}_{\rm pub}\coloneqq n_{\rm tot}P_{\hat{\bm{\xi}}_{\rm pub}}$ and $\hat{\Theta}_{\rm aux}\coloneqq n_{\rm tot}P_{\hat{\bm{\theta}}_{\rm aux}}$, where $\hat{\bm{\theta}}_{\rm aux} \coloneqq (\hat{\theta}_{\rm aux}^{(1)},\ldots,\hat{\theta}_{\rm aux}^{(n_{\rm tot})})$.
    \item[3--4.] The same as those in the virtual protocol.
    \item[5'.] Alice performs the isometry $C_{Q\to K^r Q}^{\otimes \hat{n}_{\rm sift}}$ defined in Eq.~\eqref{eq:key_extraction_isometry}. Alice successively performs $X$-basis measurements on her qudit to obtain $\hat{\bm{x}}$. She randomly chooses $H_{\hat{i}}\in{\cal H}_d(\hat{n}_{\rm sift}r,\hat{n}_{\rm fin})$, which amounts to determining $(G_{\hat{i}}, \overline{G}_{\hat{i}})$. She then computes $\hat{\bm{c}}=\hat{\bm{x}}G_{\hat{i}}^{\top}$. Depending on $\hat{\bm{c}}$, Alice and Bob perform the partially universal decoder $\{\tilde{Y}^{\hat{n}_{\rm sift}}_{G_{\hat{i}},\hat{\bm{c}}}(\bm{x}^*)\}_{\bm{x}^*\in (\mathbb{F}_p^r)^{\hat{n}_{\rm sift}}:\bm{x}^*G_{\hat{i}}^{\top}=\hat{\bm{c}}}$ defined in Eq.~\eqref{eq:partially_universal_decoder} on a quantum state in $Q^{\hat{n}_{\rm sift}}$ to obtain an estimate $\hat{\bm{x}}^*$ of $\hat{\bm{x}}$.  Alice then sets $\hat{\bm{b}}=\hat{\bm{x}}^* \overline{G}_{\hat{i}}^{\top}$ and outputs $\hat{\bm{x}}_{\rm fin}\coloneqq \hat{\bm{x}}\overline{G}_{\hat{i}}^{\top}-\hat{\bm{b}}$.
\end{enumerate}

From Eqs.~\eqref{eq:b_subtraction} and~\eqref{eq:action_on_X_basis}, if a state $U(H_i,\overline{H}_i)\ket{\widetilde{\bm{x}}}$ is measured in the $X$ basis, then it deterministically yields the outcome $(\bm{x}\overline{G}_i^{\top}, \bm{x} G_i^{\top})$. Thus, taking Eqs.~\eqref{eq:failure_map_estimation} and \eqref{eq:success_map_estimation} into account, the probability distribution of the output $\hat{\bm{x}}_{\rm fin}$ of the estimation protocol when $\hat{\Theta}_{\rm aux}$ is ignored is the same as that of the outcome when the state at the end of the virtual protocol is measured in the $X$ basis. Thus, from Eq.~\eqref{eq:phase_error_corrected}, we have 
\begin{align}
    \sum_{n\geq 1}\mathrm{Pr}(\hat{n}_{\rm fin}=n) \left(1 - \bra{\widetilde{\bm{0}}}_{K_A^n} \rho_{K_A^n|\hat{n}_{\rm fin}=n}^{\rm virt} \ket{\widetilde{\bm{0}}}_{K_A^n}\right) &= \probrho_{\rho_{A^{n_{\rm tot}}B^{n_{\rm tot}}}}(\hat{n}_{\rm fin}\geq 1, \hat{\bm{x}}_{\rm fin}\neq \bm{0})\\ 
    &\leq \probrho_{\rho_{A^{n_{\rm tot}}B^{n_{\rm tot}}}}(\hat{n}_{\rm fin}\geq 1, \hat{\bm{x}}\neq\hat{\bm{x}}^*), \label{eq:PEC_bounded_by_cqSW}
\end{align}
where $\probrho_{\rho_{A^{n_{\rm tot}}B^{n_{\rm tot}}}}$ refers to the probability distribution of the random variables in the estimation protocol when the quantum state $\rho_{A^{n_{\rm tot}}B^{n_{\rm tot}}}$ is shared between Alice and Bob at the end of Step~1. The last inequality follows from the definition of $\hat{\bm{x}}_{\rm fin}$.
The procedure of obtaining the estimate $\hat{\bm{x}}^*$ of $\hat{\bm{x}}$ from $\hat{\bm{c}}$ and the remaining quantum state in the system $Q^{\hat{n}_{\rm sift}}$ at Step~5' is the same as the procedure of classical source compression with quantum side information studied in the previous section, and the right-hand side of Eq.~\eqref{eq:PEC_bounded_by_cqSW} corresponds to a failure probability of this estimation procedure. There is, however, a crucial difference in these two scenarios; the underlying quantum state in the estimation protocol is not an i.i.d.~state in general.  Thus, we need a reduction to an i.i.d.~scenario, which is carried out in the subsequent sections.

It is now clear that the success/failure of the PEC is directly tied to the success/failure of source compression with quantum side information. The virtual protocol can thus be interpreted as coherently performing source compression with quantum side information and a corresponding correction operation.
See also Fig.~\ref{fig:relation_between_hashings} for an overview of how the two bases $Z$ and $X$, or equivalently, how the actual and the estimation protocol are related through the virtual protocol.

In the rest of this section, we clarify how the right-hand side of Eq.~\eqref{eq:PEC_bounded_by_cqSW}, and more generally the probability distributions of other random variables appearing in the estimation protocol, depend on the state $\rho_{\sysA^{n_{\rm tot}}\sysB^{n_{\rm tot}}}$ shared between Alice and Bob at the end of Step~1. These analyses will enable us to make a reduction to an i.i.d.~scenario, where we can apply the results of Sec.~\ref{sec:construction_universal}.
Let us first define a set $\Delta_m(\Xi)$ for a frequency $\Xi$ as 
\begin{equation}
    \begin{split}
    \Delta_m(\Xi)&\coloneqq \Bigl\{\bm{\xi}=(\xi_1,\ldots,\xi_{n_{\rm tot}}):(\xi_1,\ldots,\xi_{m})\in(\Omega_{\rm pub}^{(1)})^{\times m}, (\xi_{m+1},\ldots,\xi_{n_{\rm tot}})\in(\Omega_{\rm pub}^{(0)})^{\times (n_{\rm tot} - m)}, \\
    &\hspace{5cm} n_{\rm tot}P_{\bm{\xi}}=\Xi\Bigr\}.
    \end{split}
\end{equation}
Then, the subnormalized state of the system $Q^{n_{\rm tot}}$ at the end of Step~3 of the estimation protocol with $(\hat{n}_{\rm sift},\hat{\Xi}_{\rm pub},\hat{\Theta}_{\rm aux}) = (m, \Xi, \Theta)$ is given in terms of a CP map $\mathfrak{J}_{\sysA^{n_{\rm tot}}\sysB^{n_{\rm tot}}\to Q^m}^{(m,\Xi,\Theta)}$ defined as 
\begin{align}
    \begin{split}
    &\mathfrak{J}_{\sysA^{n_{\rm tot}}\sysB^{n_{\rm tot}}\to Q^m}^{(m,\Xi,\Theta)} (\rho_{\sysA^{n_{\rm tot}}\sysB^{n_{\rm tot}}}) \\
    &= \sum_{\substack{{\cal I}\subseteq[n_{\rm tot}]\\ :|{\cal I}|=m}} \frac{1}{|S_{m}|}\sum_{\sigma\in S_{m}} \sum_{(\xi_1,\ldots,\xi_{n_{\rm tot}})\in \Delta_m(\Xi)}\,\sum_{\substack{(\theta_1,\ldots,\theta_{n_{\rm tot}})\\:n_{\rm tot}P_{\bm{\theta}}=\Theta}}\\
    &\qquad {\cal V}_{\sigma}\circ\left({\cal J}^{(\xi_1,\theta_1)}_{\sysA\sysB\to Q}\otimes \cdots \otimes {\cal J}^{(\xi_m,\theta_m)}_{\sysA\sysB\to Q} \otimes {\cal M}^{(\xi_{m+1},\theta_{m+1})}_{\sysA\sysB} \otimes \cdots \otimes {\cal M}^{(\xi_{n_{\rm tot}},\theta_{n_{\rm tot}})}_{\sysA\sysB}\right)\circ{\cal T}_{{\cal I}\to [m]}(\rho_{\sysA^{n_{\rm tot}}\sysB^{n_{\rm tot}}}),
    \end{split}
    \label{eq:instrument_global}
\end{align}
where $[n]\coloneqq \{1,\ldots,n\}$ is an ordered set, ${\cal T}_{{\cal I}\to [m]}$ denotes a unitary map that rearranges systems in ${\cal I}$ to $[m]$ (i.e., the first $m$ systems) preserving the relative order in both ${\cal I}$ and $[n_{\rm tot}]\setminus{\cal I}$, which thus rearranges $[n_{\rm tot}]\setminus{\cal I}$ to $\{m+1,\ldots,n_{\rm tot}\}$, and ${\cal V}_{\sigma}$ with $\sigma\in S_{m}$ denotes a unitary channel representation of the permutation group $S_{m}$ on the operators on ${\cal H}_{Q}^{\otimes m}$ such that ${\cal V}_{\sigma}\circ{\cal V}_{\sigma'}={\cal V}_{\sigma\circ\sigma'}$ for any $\sigma,\sigma'\in S_{m}$. Note that ${\cal V}_{\sigma}$ reflects the random permutation performed on the system $Q^m$ at Step~3 in the estimation protocol. 

As long as $\hat{n}_{\rm fin}\geq 1$ is obtained at Step~4 of the estimation protocol, a quantum state of the system $Q^m$ is further mapped by the channel ${\cal C}^{\otimes m}_{Q\to K^rQ}$ at Step~5' of the estimation protocol, where ${\cal C}_{Q\to K^rQ}$ denotes the action of an isometry $C_{Q\to K^rQ}$, as given in Eq.~\eqref{eq:key_extraction_isometry}, and then each system $K^{r}$ is measured in the $X$ basis. The resulting subnormalized state $\mathfrak{X}_{X^{m}Q^{m}}^{(m,\Xi,\Theta)}(\rho_{\sysA^{n_{\rm tot}}\sysB^{n_{\rm tot}}})$ is given by 
\begin{align}
    \mathfrak{X}_{X^{m}Q^{m}}^{(m,\Xi,\Theta)}(\rho_{\sysA^{n_{\rm tot}}\sysB^{n_{\rm tot}}})
    \coloneqq {\cal P}_{K^r\to X}^{\otimes m}\circ {\cal C}^{\otimes m}_{Q\to K^rQ}\circ\mathfrak{J}_{\sysA^{n_{\rm tot}}\sysB^{n_{\rm tot}}\to Q^m}^{(m,\Xi,\Theta)}(\rho_{\sysA^{n_{\rm tot}}\sysB^{n_{\rm tot}}}), \label{eq:def_of_projected_state}
\end{align}
where the pinching channel ${\cal P}_{K^r\to X}$ is defined as 
\begin{equation}
    {\cal P}_{K^r\to X}(\rho_{K^r})=\sum_{x\in \mathbb{F}_p^r}\ket{\widetilde{x}}_X\bra{\widetilde{x}}_{K^r}\rho_{K^r}\ket{\widetilde{x}}_{K^r}\bra{\widetilde{x}}_{X}. \label{eq:pinching_channel}
\end{equation}
We can rewrite $\mathfrak{X}_{X^{m}Q^{m}}^{(m,\Xi,\Theta)}(\rho_{\sysA^{n_{\rm tot}}\sysB^{n_{\rm tot}}})$ by using Eqs.~\eqref{eq:generalized_Z}, \eqref{eq:basis_transform_n_dit}, \eqref{eq:key_extraction_isometry}, and \eqref{eq:pinching_channel} as
\begin{equation}
    \begin{split}
    \mathfrak{X}_{X^{m}Q^{m}}^{(m,\Xi,\Theta)}(\rho_{\sysA^{n_{\rm tot}}\sysB^{n_{\rm tot}}})
    &= \sum_{\bm{x}\in (\mathbb{F}_p^r)^{m}}\! p^{-mr} \ketbra{\widetilde{\bm{x}}}{\widetilde{\bm{x}}}_{X^{m}} \otimes \mathfrak{Z}_{Q^m}(\bm{x})\,\mathfrak{J}_{\sysA^{n_{\rm tot}}\sysB^{n_{\rm tot}}\to Q^m}^{(m,\Xi,\Theta)}(\rho_{\sysA^{n_{\rm tot}}\sysB^{n_{\rm tot}}})\,\mathfrak{Z}_{Q^m}(\bm{x})^{\dagger}, 
    \end{split}\label{eq:rewritten_by_X}
\end{equation}
where the unitary $\mathfrak{Z}_{Q^m}(\bm{x})$ for $\bm{x}=(x_1,\ldots,x_m)$ with $x_i\in \mathbb{F}_p^r$ for $i=1,\ldots,m$ is defined as 
\begin{align}
    \mathfrak{Z}_{Q^m}(\bm{x})&\coloneqq \mathfrak{Z}'_{Q}(x_{1})\otimes\cdots\otimes \mathfrak{Z}'_{Q}(x_m), \label{eq:Z_list}\\
    \forall x\in \mathbb{F}_p^r, \qquad \mathfrak{Z}'_{Q}(x)&\coloneqq p^{r/2}\bra{\widetilde{x}}_{K^r}C_{Q\to K^rQ} \\
    & = \sum_{z\in{\cal X}_{\rm sift}}p^{r/2}\braket{\widetilde{x}|\mathfrak{e}(z)}\Pi_Q[z], 
    \label{eq:Z_restriction}
\end{align}
where the phase factor $p^{r/2}\braket{\widetilde{x}|\mathfrak{e}(z)}$ in Eq.~\eqref{eq:Z_restriction} can be given by Eq.~\eqref{eq:basis_transform_n_dit}.
Notice that we have 
\begin{equation}
    \tr_{Q^m}\left[\mathfrak{X}_{X^{m}Q^{m}}^{(m,\Xi,\Theta)}(\rho_{\sysA^{n_{\rm tot}}\sysB^{n_{\rm tot}}})\right] = p^{-mr}\,\tr\left[\mathfrak{J}_{\sysA^{n_{\rm tot}}\sysB^{n_{\rm tot}}\to Q^m}^{(m,\Xi,\Theta)}(\rho_{\sysA^{n_{\rm tot}}\sysB^{n_{\rm tot}}})\right] I_{X^{m}}, \label{eq:uniform_marginal}
\end{equation}
which is why we can use the partially universal decoder in the following analysis.

Now, the probability distributions of the random variables in the estimation protocol can be related to the subnormalized state $\mathfrak{X}_{X^{m}Q^{m}}^{(m,\Xi,\Theta)}(\rho_{\sysA^{n_{\rm tot}}\sysB^{n_{\rm tot}}})$ as follows. By definition, we have 
\begin{equation}
    \begin{split}
    \probrho_{\rho_{A^{n_{\rm tot}}B^{n_{\rm tot}}}}(\hat{n}_{\rm sift}=m, \hat{\Xi}_{\rm pub}=\Xi,\hat{\Theta}_{\rm aux}=\Theta) = \tr\!\left[\mathfrak{X}_{X^{m}Q^{m}}^{(m,\Xi,\Theta)}(\rho_{\sysA^{n_{\rm tot}}\sysB^{n_{\rm tot}}}) \right].
    \end{split} \label{eq:event_prob_estimation}
\end{equation} 
As for the failure of the partially universal decoder $\{\tilde{Y}^{\hat{n}_{\rm sift}}_{G_{\hat{i}},\hat{\bm{c}}}(\bm{x}^*)\}_{\bm{x}^*\in(\mathbb{F}_p^r)^{\hat{n}_{\rm sift}}:\bm{x}^*G_{\hat{i}}^{\top}=\hat{\bm{c}}}$, we have, for any $n\geq 1$,
\begin{equation}
    \begin{split}
    &\probrho_{\rho_{A^{n_{\rm tot}}B^{n_{\rm tot}}}}(\hat{n}_{\rm sift}=m, \hat{\Xi}_{\rm pub}=\Xi,\hat{\Theta}_{\rm aux}=\Theta, \hat{n}_{\rm fin}=n, \hat{\bm{x}}\neq\hat{\bm{x}}^*) \\
    &= \bm{1}(n_{\rm fin}(m,\Xi)=n)\, \tr\!\left[M[m,\Xi] \,\mathfrak{X}_{X^{mr}Q^{m}}^{(m,\Xi,\Theta)}(\rho_{\sysA^{n_{\rm tot}}\sysB^{n_{\rm tot}}}) \right],
    \end{split} \label{eq:failure_estimation_protocol}
\end{equation} 
where $\bm{1}(\cdot)$ is defined in Eq.~\eqref{eq:def_indicator}, and $M[m,\Xi]$ is defined as
\begin{align}
    M[m,\Xi] &\coloneqq \mathbb{E}_{\hat{i}\sim |{\cal H}_d(m r, n_{\rm fin}(m,\Xi))|^{-1}}\left[\sum_{\bm{x},\bm{x}^*\in (\mathbb{F}_p^r)^{m}} \bm{1}(\bm{x}\neq\bm{x}^*)\ketbra{\bm{x}}{\bm{x}}_{X^{m}} \otimes \tilde{Y}^{m}_{G_{\hat{i}},\bm{x}G_{\hat{i}}^{\top}}(\bm{x}^*)\right] \\
    &=\mathbb{E}_{\hat{i}\sim |{\cal H}_d(m r, n_{\rm fin}(m,\Xi))|^{-1}}\left[I_{X^{m}Q^m} - \sum_{\bm{x}\in(\mathbb{F}_p^r)^{m}} \ketbra{\bm{x}}{\bm{x}}_{X^{m}} \otimes \tilde{Y}^{m}_{G_{\hat{i}},\bm{x}G_{\hat{i}}^{\top}}(\bm{x})\right]. \label{eq:def_M_m_xi}
\end{align}
Equations~\eqref{eq:event_prob_estimation} and~\eqref{eq:failure_estimation_protocol} imply that the statistical behavior of random variables appeared in the estimation protocol, $(\hat{n}_{\rm sift}, \hat{\Xi}_{\rm pub},\hat{\Theta}_{\rm aux}, \hat{n}_{\rm fin}, \hat{\bm{x}},\hat{\bm{x}}^*)$, is fully characterized by the collection of subnormalized states $\{\mathfrak{X}_{X^{m}Q^{m}}^{(m,\Xi,\Theta)}(\rho_{\sysA^{n_{\rm tot}}\sysB^{n_{\rm tot}}})\}_{(m,\Xi,\Theta)}$.

\subsection{Reduction to collective attacks} \label{sec:iid_reduction}

To obtain an upper bound on the failure probability of Eq.~\eqref{eq:PEC_bounded_by_cqSW}, we apply the technique of reducing the analysis on the permutation-symmetric global quantum states to that of a mixture of i.i.d.~states, which is so-called quantum de Finetti reduction~\cite{Tamaki2003, Christandl2009, Fawzi2015, Matsuura2024, Nahar2024}. There is a caveat for this technique: it only gives us statistical statements concerning $(\hat{n}_{\rm sift}, \hat{\Xi}_{\rm pub},\hat{\Theta}_{\rm aux}, \hat{n}_{\rm fin}, \hat{\bm{x}},\hat{\bm{x}}^*)$ for `any' i.i.d.~state, which unfortunately cannot directly restrict the form of a single-round density operator characterizing an i.i.d.~state. 
The random variables $\{\hat{\theta}_{\rm aux}^{(i)}\}_{i=1}^{n_{\rm tot}}$ in the estimation protocol play an important role here.  They enable us to use restrictions on a single-round density operator for the system $\sysA\sysB$, which typically stems from the nonorthogonality among the states transmitted from Alice in the actual protocol. That is, $\hat{\theta}_{\rm aux}^{(i)}$ is chosen such that it is sensitive to the restriction of a single-round density operator stemming from the protocol itself and the protocol parameters. Then, through the statistics of the frequency $\hat{\Theta}_{\rm aux}$, we can indirectly reflect the restriction on the form of the single-round density operator. 
More explicitly, by exploiting the restriction from the protocol, we may find a subset $\Upsilon_{\epsilon_t}$ of the range of $(\hat{\Xi}_{\rm pub}, \hat{\Theta}_{\rm aux})$ such that 
\begin{align}
    \probrho_{\rho_{A^{n_{\rm tot}}B^{n_{\rm tot}}}}\left((\hat{\Xi}_{\rm pub}, \hat{\Theta}_{\rm aux})\notin \Upsilon_{\epsilon_t}\right) \leq \epsilon_t \label{eq:bound_out_of_Upsilon}
\end{align}
should hold regardless of Eve's attack strategy that determines $\rho_{A^{n_{\rm tot}}B^{n_{\rm tot}}}$.
A simple example that we can exploit to have such a nontrivial $\Upsilon_{\epsilon_t}$ is to use the relation $\tr_{\sysB^{n_{\rm tot}}}[\rho_{\sysA^{n_{\rm tot}}\sysB^{n_{\rm tot}}}] = \bigl(\tr_C[\ketbra{\Psi}{\Psi}_{\sysA\sysC}]\bigr)^{\otimes n_{\rm tot}}$ from the definition of $\ket{\Psi}_{\sysA\sysC}$ in the virtual protocol. By enforcing $\hat{\theta}_{\rm aux}^{(i)}$ to depend only on Alice's marginal state on the system $\sysA$, we can write, for any $\theta\in\Omega_{\rm aux}$,
\begin{equation}
\sum_{\xi\in\Omega_{\rm pub}^{(0)}}{\cal M}^{(\xi,\theta)}_{\sysA\sysB}(\cdot) + \sum_{\xi\in\Omega_{\rm pub}^{(1)}}\tr\bigl[{\cal J}^{(\xi,\theta)}_{\sysA\sysB\to Q}(\cdot)\bigr] = {\cal L}^{(\theta)}_{\sysA}\circ\tr_{\sysB}(\cdot),
\end{equation}
where ${\cal L}^{(\theta)}_{\sysA}:{\cal D}({\cal H}_{\sysA})\to[0,1]$ is a CP map.
Then, $\hat{\Theta}_{\rm aux}$ obeys a multinomial distribution with underlying probabilities given by $\{{\cal L}_A^{(\theta)}(\Psi_{\sysA})\}_{\theta\in\Omega_{\rm aux}}$, where $\Psi_{\sysA}=\tr_{\sysC}[\ketbra{\Psi}{\Psi}_{\sysA\sysC}]$. 
In this example, $\Upsilon_{\epsilon_t}$ is a nontrivial subset only for $\hat{\Theta}_{\rm aux}$.

Equation~\eqref{eq:bound_out_of_Upsilon} can be used to bound the right-hand side of Eq.~\eqref{eq:PEC_bounded_by_cqSW}. Since it follows that $\probrho_{\rho_{\sysA^{n_{\rm tot}}\sysB^{n_{\rm tot}}}}\left((\hat{\Xi}_{\rm pub}, \hat{\Theta}_{\rm aux})\notin \Upsilon_{\epsilon_t}, \hat{n}_{\rm fin}\geq 1, \hat{\bm{x}}\neq\hat{\bm{x}}^*\right) \leq \epsilon_t$, we have 
\begin{equation}
    \probrho_{\rho_{A^{n_{\rm tot}}B^{n_{\rm tot}}}}(\hat{n}_{\rm fin}\geq 1, \hat{\bm{x}}\neq\hat{\bm{x}}^*) \leq \epsilon_t + \probrho_{\rho_{\sysA^{n_{\rm tot}}\sysB^{n_{\rm tot}}}}\left((\hat{\Xi}_{\rm pub}, \hat{\Theta}_{\rm aux})\in \Upsilon_{\epsilon_t}, \hat{n}_{\rm fin}\geq 1, \hat{\bm{x}}\neq\hat{\bm{x}}^*\right). \label{eq:failure_phase_error_correction}
\end{equation}
Thus, our remaining task is to establish an upper bound on the right-most term in the above inequality.

Next, we invoke the permutation symmetry of the estimation protocol.
We can show that the statistical behavior of random variables $(\hat{n}_{\rm sift}, \hat{\Xi}_{\rm pub},\hat{\Theta}_{\rm aux}, \hat{n}_{\rm fin}, \hat{\bm{x}}, \hat{\bm{x}}^*)$ determined by a global state $\rho_{\sysA^{n_{\rm tot}} \sysB^{n_{\rm tot}}}$ is equal to that by a permutation-symmetrized version of $\rho_{\sysA^{n_{\rm tot}} \sysB^{n_{\rm tot}}}$. This property stems from the uniformity of the sifting process and the random permutation of the sifted rounds at Step~3. It is formally stated as 
\begin{equation}
    \mathfrak{X}_{X^{m} Q^m}^{(m,\Xi,\Theta)} (\rho_{\sysA^{n_{\rm tot}}\sysB^{n_{\rm tot}}}) = \mathfrak{X}_{X^{m} Q^m}^{(m,\Xi,\Theta)}\left({\cal R}_{\sysA^{n_{\rm tot}} \sysB^{n_{\rm tot}}}(\rho_{\sysA^{n_{\rm tot}}\sysB^{n_{\rm tot}}})\right), \label{eq:insert_random_permutation}
\end{equation}
where ${\cal R}_{\sysA^{n_{\rm tot}}}$ denotes the CPTP map for the random permutation of the $n_{\rm tot}$ systems. The proof of Eq.~\eqref{eq:insert_random_permutation} is given in Appendix~\ref{sec:permutation_symmetry}.

Now, we proceed to bound the right-most term of Eq.~\eqref{eq:failure_phase_error_correction}. From Eqs.~\eqref{eq:failure_estimation_protocol} and~\eqref{eq:insert_random_permutation}, we have 
\begin{equation}
    \begin{split}
    &\probrho_{\rho_{\sysA^{n_{\rm tot}}\sysB^{n_{\rm tot}}}}\left((\hat{\Xi}_{\rm pub}, \hat{\Theta}_{\rm aux})\in \Upsilon_{\epsilon_t}, \hat{n}_{\rm fin}\geq 1, \hat{\bm{x}}\neq\hat{\bm{x}}^*\right) \\
    &= \probrho_{{\cal R}_{\sysA^{n_{\rm tot}} \sysB^{n_{\rm tot}}}(\rho_{\sysA^{n_{\rm tot}}\sysB^{n_{\rm tot}}})}\left((\hat{\Xi}_{\rm pub}, \hat{\Theta}_{\rm aux})\in \Upsilon_{\epsilon_t}, \hat{n}_{\rm fin}\geq 1, \hat{\bm{x}}\neq\hat{\bm{x}}^*\right).
    \end{split} \label{eq:replacement_by_permutation_symmetrized}
\end{equation}
For the right-hand side, we can apply a de-Finetti-type inequality, Lemma~1 and Theorem~1 in Ref.~\cite{Matsuura2024}, to obtain 
\begin{align}
    \begin{split}
    &\probrho_{{\cal R}_{\sysA^{n_{\rm tot}} \sysB^{n_{\rm tot}}}(\rho_{\sysA^{n_{\rm tot}}\sysB^{n_{\rm tot}}})}\!\left((\hat{\Xi}_{\rm pub}, \hat{\Theta}_{\rm aux})\in \Upsilon_{\epsilon_t}, \hat{n}_{\rm fin}\geq 1, \hat{\bm{x}}\neq\hat{\bm{x}}^*\right) \\
    &\leq f_q(n_{\rm tot},d_{\sysA\sysB})\!\max_{\rho_{\sysA\sysB} \in {\cal D}({\cal H}_{\sysA\sysB})} \probrho_{\rho_{\sysA\sysB}^{\otimes n_{\rm tot}}}\!\left((\hat{\Xi}_{\rm pub}, \hat{\Theta}_{\rm aux})\in \Upsilon_{\epsilon_t}, \hat{n}_{\rm fin}\geq 1, \hat{\bm{x}}\neq\hat{\bm{x}}^*\right) 
    \end{split}\label{eq:bound_by_iid}
\end{align} 
with $d_{\sysA\sysB}\coloneqq \dim({\cal H}_{\sysA\sysB})$, where
\begin{equation}
    f_q(n,d)\coloneqq \frac{(n + d - 1)^{\frac{d^2 - 1}{2}}}{\sqrt{2\pi(d/e^2)^d}\prod_{i=0}^{d-1}i!}, \label{eq:def_f_q}
\end{equation}
Thus, if we could show
\begin{equation}
    \max_{\rho_{\sysA\sysB} \in {\cal D}({\cal H}_{\sysA\sysB})} \probrho_{\rho_{\sysA\sysB}^{\otimes n}}\!\left((\hat{\Xi}_{\rm pub}, \hat{\Theta}_{\rm aux})\in \Upsilon_{\epsilon_t}, \hat{n}_{\rm fin}\geq 1, \hat{\bm{x}}\neq\hat{\bm{x}}^*\right)
    \leq \epsilon_{\rm iid}, \label{eq:failure_prob_collective}
\end{equation}
then combining this with Eqs.~\eqref{eq:PEC_bounded_by_cqSW}, \eqref{eq:bound_out_of_Upsilon}, \eqref{eq:replacement_by_permutation_symmetrized}, and~\eqref{eq:bound_by_iid} would imply Eq.~\eqref{eq:phase_error_corrected} with 
\begin{equation}
    \frac{\varepsilon_{\rm sec}^2}{2}=\epsilon_t + \epsilon_{\rm iid}f_q(n_{\rm tot},d_{\sysA\sysB}). \label{eq:correspondence_params}
\end{equation}
This effectively reduces the PEC against general attacks to that against collective attacks at the cost of polynomial overhead on the secrecy parameter. As mentioned at the beginning of this section, the restriction from the protocol itself and the protocol parameters is indirectly reflected through the condition $(\hat{\Xi}_{\rm pub}, \hat{\Theta}_{\rm aux})\in \Upsilon_{\epsilon_t}$.

The necessity of a universal decoder becomes clear at this point. Since the state that achieves the maximum in Eq.~\eqref{eq:failure_prob_collective} is unknown and not uniquely determined by observables in the protocol, Alice and Bob need to construct a decoder that does not depend on the state itself.  To put it differently, we need to upper-bound the worst-case failure probability for a family of fixed decoding strategies $\{M[m, \Xi]\}_{m, \Xi}$ against any i.i.d.~states.  The universal decoder developed in the previous section tells us that the worst-case failure probability is determined solely by the largest conditional R\'enyi entropy over the set of possible states and nothing else.

\begin{remark}
As mentioned earlier, the random variable $\hat{\Theta}_{\rm aux}$ is introduced so that the properties determined solely by the protocol setup, such as the intensity of a light pulse Alice emits, are inherited by i.i.d.~states that upper-bound the given state in the estimation protocol as in Eq.~\eqref{eq:bound_by_iid}~\cite{Zhou2022}.  In the conventional post-selection technique, this has been achieved by showing that a permutation-invariant state with a fixed i.i.d.~marginal can be bounded from above by i.i.d.~quantum states with the same i.i.d.~marginal~\cite{Fawzi2015, Nahar2024}. 
This technique can also be applied to our setup instead of introducing $\hat{\Theta}_{\rm aux}$ for this.  However, the use of the auxiliary random variable $\hat{\Theta}_{\rm aux}$ may be more flexible than relying on the i.i.d.~marginal. For example, a light pulse Alice emits need not be i.i.d.~and can even be correlated; one may still have a nontrivial subset $\Upsilon_{\epsilon_t}$ to satisfy Eq.~\eqref{eq:bound_out_of_Upsilon} in this scenario.  Furthermore, a partial knowledge of Alice's marginal, which may be the case when a light source has imperfections, may suffice to obtain a meaningful subset $\Upsilon_{\epsilon_t}$. For some protocols, there may be a degree of freedom that cannot be controlled by Eve, given a particular realization $\Xi$ of the observable $\hat{\Xi}_{\rm pub}$, which can also be incorporated as a subset $\Upsilon_{\epsilon_t}$.
\end{remark}

\subsection{Estimation of the failure probability} \label{sec:estimation_failure_prob}
In this section, we develop a strategy to obtain an upper bound $\epsilon_{\rm iid}$ on the failure probability as given in Eq.~\eqref{eq:failure_prob_collective}.
First, recall that the random variables $\hat{\Xi}_{\rm pub}$ and $\hat{\Theta}_{\rm aux}$ are the frequencies of $\hat{\bm{\xi}}_{\rm pub}$ and $\hat{\bm{\theta}}_{\rm aux}$, respectively.  
From the definition of the CP maps ${\cal M}_{\sysA\sysB}^{(\theta,\xi)}$ and ${\cal J}_{\sysA\sysB\to Q}^{(\theta,\xi)}$ as in Eqs.~\eqref{eq:failure_map_estimation} and~\eqref{eq:success_map_estimation}, when an i.i.d.~state $\rho_{\sysA\sysB}^{\otimes n_{\rm tot}}$ is input, then the pair $(\hat{\xi}_{\rm pub}^{(i)}, \hat{\theta}_{\rm aux}^{(i)})$ of random variables obeys a joint probability distribution $P_{\rho}$ common to all $i\in\{1,\ldots,n_{\rm tot}\}$, which is defined as 
\begin{equation}
    \forall\xi\in\Omega_{\rm pub}, \forall\theta\in\Omega_{\rm aux},\qquad P_{\rho_{\sysA\sysB}}(\hat{\xi}_{\rm pub}^{(i)}=\xi, \hat{\theta}_{\rm aux}^{(i)}=\theta)\coloneqq \begin{cases} {\cal M}^{(\xi,\theta)}_{\sysA\sysB}(\rho_{\sysA\sysB}), & \text{if } \xi\in\Omega_{\rm pub}^{(0)},\\
        \tr [{\cal J}^{(\xi,\theta)}_{\sysA\sysB\to Q}(\rho_{\sysA\sysB})], & \text{if } \xi\in\Omega_{\rm pub}^{(1)}.
    \end{cases} \label{eq:iid_prob_dist}
\end{equation} 
Thus, $(\hat{\Xi}_{\rm pub}, \hat{\Theta}_{\rm aux})$ obeys a multinomial distribution given through the above probabilities, i.e., 
\begin{equation}
    \probrho_{\rho_{\sysA\sysB}^{\otimes n_{\rm tot}}}\!\left(\hat{\Xi}_{\rm pub}=\Xi, \hat{\Theta}_{\rm aux}=\Theta\right) = P_{\rho_{\sysA\sysB}}^{\times n_{\rm tot}}\!\left(\hat{\Xi}_{\rm pub}=\Xi, \hat{\Theta}_{\rm aux}=\Theta\right).
    \label{eq:xi_theta_follow_iid}
\end{equation}
Then, by applying concentration inequalities for i.i.d.~random variables, we can construct a family $\{U_{\epsilon_v}(\rho_{\sysA\sysB})\}_{\rho_{\sysA\sysB}\in{\cal D}({\cal H}_{\sysA\sysB})}$ of subsets of the range of $(\hat{\Xi}_{\rm pub},\hat{\Theta}_{\rm aux})$ that satisfy
\begin{equation}
    \forall \rho_{\sysA\sysB}\in{\cal D}({\cal H}_{\sysA\sysB}),\qquad  P_{\rho_{\sysA\sysB}}^{\times n_{\rm tot}}\left((\hat{\Xi}_{\rm pub}, \hat{\Theta}_{\rm aux})\notin U_{\epsilon_v}(\rho_{\sysA\sysB})\right) \leq \epsilon_v. \label{eq:ensured_by_concentration}
\end{equation} 
Combining this with the constraint $(\hat{\Xi}_{\rm pub}, \hat{\Theta}_{\rm aux})\in \Upsilon_{\epsilon_t}$ reflecting protocol setups, choose a family $\{V_{\epsilon_t,\epsilon_v}(\Xi)\}_{\Xi}$ of subsets of density operators so that they satisfy
\begin{equation}
    V_{\epsilon_t,\epsilon_v}(\Xi)\supseteq \bigcup_{\Theta:(\Xi,\Theta)\in\Upsilon_{\epsilon_t}}\left\{\rho_{\sysA\sysB}\in{\cal D}({\cal H}_{\sysA\sysB}):(\Xi,\Theta)\in U_{\epsilon_v}(\rho_{\sysA\sysB})\right\}. \label{eq:hull}
\end{equation}
Then, it follows that $(\Xi, \Theta)\in\Upsilon_{\epsilon_t}\cap U_{\epsilon_v}(\rho_{\sysA\sysB})$ implies $\rho_{\sysA\sysB}\in V_{\epsilon_t,\epsilon_v}(\Xi)$.
See also Remark~\ref{rem:optimization_region} and Sec.~\ref{sec:conventional_PEC} for an explicit example of the construction of $\{V_{\epsilon_t,\epsilon_v}(\Xi)\}_{\Xi}$.

Combining Eqs.~\eqref{eq:final_key_as_function}, \eqref{eq:xi_theta_follow_iid}, \eqref{eq:ensured_by_concentration}, and~\eqref{eq:hull}, we have an upper bound on the left-hand side of Eq.~\eqref{eq:failure_prob_collective} as 
\begin{align}
        &\max_{\rho_{\sysA\sysB} \in {\cal D}({\cal H}_{\sysA\sysB})} \probrho_{\rho_{\sysA\sysB}^{\otimes n_{\rm tot}}}\!\left((\hat{\Xi}_{\rm pub}, \hat{\Theta}_{\rm aux})\in \Upsilon_{\epsilon_t}, \hat{n}_{\rm fin}\geq 1, \hat{\bm{x}}\neq\hat{\bm{x}}^*\right)\nonumber\\
        &\leq \epsilon_v + \max_{\rho_{\sysA\sysB} \in {\cal D}({\cal H}_{\sysA\sysB})} \probrho_{\rho_{\sysA\sysB}^{\otimes n_{\rm tot}}}\!\left((\hat{\Xi}_{\rm pub}, \hat{\Theta}_{\rm aux})\in \Upsilon_{\epsilon_t}\cap U_{\epsilon_v}(\rho_{\sysA\sysB}), \hat{n}_{\rm fin}\geq 1, \hat{\bm{x}}\neq\hat{\bm{x}}^*\right) \\
        & \leq \epsilon_v + \max_{\rho_{\sysA\sysB} \in {\cal D}({\cal H}_{\sysA\sysB})} \probrho_{\rho_{\sysA\sysB}^{\otimes n_{\rm tot}}}\!\left( (\hat{n}_{\rm sift},\hat{\Xi}_{\rm pub})\in{\cal W}(\rho_{\sysA\sysB}), \hat{\bm{x}}\neq\hat{\bm{x}}^*\right), \label{eq:connection_to_source_compression}
\end{align}
where ${\cal W}(\rho_{\sysA\sysB})$ is defined as 
\begin{equation}
    {\cal W}(\rho_{\sysA\sysB}) \coloneqq \left\{(m,\Xi):\rho_{\sysA\sysB}\in V_{\epsilon_t,\epsilon_v}(\Xi),  n_{\rm fin}(m,\Xi)\geq 1\right\}, \label{eq:region_W}
\end{equation}
and the last inequality follows from the fact that $(\Xi, \Theta)\in\Upsilon_{\epsilon_t}\cap U_{\epsilon_v}(\rho_{\sysA\sysB})$ implies $\rho_{\sysA\sysB}\in V_{\epsilon_t,\epsilon_v}(\Xi)$ for any $\rho_{\sysA\sysB}\in{\cal D}({\cal H}_{\sysA\sysB})$.

Let us define $\mathfrak{x}_{XQ}(\rho_{\sysA\sysB})$ as 
\begin{align}
    \mathfrak{x}_{XQ}(\rho_{\sysA\sysB}) &\coloneqq \sum_{\xi\in\Omega_{\rm pub}^{(1)}}{\cal P}_{K^r\to X}\circ {\cal C}_{Q\to K^rQ}\circ {\cal J}_{\sysA\sysB\to Q}^{(\xi)}(\rho_{\sysA\sysB}) \label{eq:def_frak_x} \\
    &=\sum_{(\xi,\theta)\in\Omega_{\rm pub}^{(1)}\times\Omega_{\rm aux}}{\cal P}_{K^r\to X}\circ {\cal C}_{Q\to K^rQ}\circ {\cal J}_{\sysA\sysB\to Q}^{(\xi,\theta)}(\rho_{\sysA\sysB}) \label{eq:another_def_frak_x}\\
    &= \sum_{x\in\mathbb{F}_p^r} p^{-r}\ketbra{\widetilde{x}}{\widetilde{x}}_{X} \otimes \sum_{(\xi,\theta)\in\Omega_{\rm pub}^{(1)}\times\Omega_{\rm aux}}\mathfrak{Z}'_Q(x)\, {\cal J}_{\sysA\sysB\to Q}^{(\xi,\theta)}(\rho_{\sysA\sysB})\, \mathfrak{Z}'_Q(x)^{\dagger}, 
\end{align}
where the pinching channel ${\cal P}_{K^r\to X}$ is defined in Eq.~\eqref{eq:pinching_channel}, and $\mathfrak{Z}'_Q(x)$ is defined in Eq.~\eqref{eq:Z_restriction}. 
Then, for a constant $\epsilon_p>0$, the function $K_{\rm PA}(m, \Xi)$ adopted in the actual protocol and also used at Step~4 in the estimation protocol is set to be 
\begin{align}
    \begin{split}
    K_{\rm PA}(m, \Xi) &=  \biggl\lceil\frac{n_{\rm tot}}{\log p} \max_{\sigma_{\sysA\sysB} \in V_{\epsilon_t,\epsilon_v}(\Xi)}  H^{\uparrow,\leq}_{1-\alpha^*}(X|Q)_{\mathfrak{x}_{XQ}(\sigma_{\sysA\sysB})} \\
    &\hspace{4cm} + \frac{p^r(d_{Q}+2)(d_{Q}-1)\log_p(m+1)}{2} + \frac{\log_p(1/\epsilon_p)}{\alpha^*}\biggr\rceil, 
    \end{split} \label{eq:choice_of_final_key} 
\end{align} 
where $d_{Q}\coloneqq \dim({\cal H}_{Q})$, and $\alpha^{*}\in[0,1]$ is arbitrary but should be chosen to minimize the right-hand side to obtain a better key rate. This amounts to determining the set ${\cal W}(\rho_{\sysA\sysB})$ explicitly through the determination of $n_{\rm fin}(m,\Xi)$ as in Eq.~\eqref{eq:final_key_as_function}. 
We will see in the following that, through the use of Proposition~\ref{prop:modified_for_qkd} in Sec.~\ref{sec:construction_universal}, this choice ensures that the right-most term in Eq.~\eqref{eq:connection_to_source_compression} is bounded from above by $\epsilon_p$.

To see this, first recall the definition of $P_{\rm err}$ in Eq.~\eqref{eq:defining_rel_err}. From Eqs.~\eqref{eq:rewritten_by_X}, \eqref{eq:event_prob_estimation}--\eqref{eq:def_M_m_xi}, we have 
\begin{equation}
    \begin{split}
    &\probrho_{\rho_{\sysA\sysB}^{\otimes n_{\rm tot}}}\!\left((\hat{n}_{\rm sift},\hat{\Xi}_{\rm pub})\in{\cal W}(\rho_{\sysA\sysB}), \hat{\bm{x}}\neq\hat{\bm{x}}^* \right) \\
    &= P_{\rm err}\left(\probrho_{\rho_{\sysA\sysB}^{\otimes n_{\rm tot}}}, \left\{\left(\left\{\rho_{Q^m}^{\bm{x},\Xi}\right\}_{\bm{x}\in(\mathbb{F}_p^r)^m}\!\!,{\cal H}(mr, m r-n_{\rm fin}(m, \Xi))\right)\right\}_{(m,\Xi)}\!\!,{\cal W}(\rho_{\sysA\sysB})\right),
    \end{split} \label{eq:reinterpretation_prop}
\end{equation}
where we reinterpret ${\cal X}$ in Eq.~\eqref{eq:defining_rel_err} as $\mathbb{F}_p^r$, $\hat{m}$ as $\hat{n}_{\rm sift}$, $\hat{\omega}$ as $\hat{\Xi}_{\rm pub}$, ${\cal H}_{m,\omega}$ as ${\cal H}(mr, m r-n_{\rm fin}(m, \Xi))$, ${\cal W}$ as ${\cal W}(\rho_{\sysA\sysB})$, $\probP$ as $\probrho_{\rho_{\sysA\sysB}^{\otimes n_{\rm tot}}}$, and $\rho_{Q^m}^{\bm{x},\omega}$ as $\rho_{Q^m}^{\bm{x},\Xi}$, which is defined as 
\begin{equation}
    \rho_{Q^m}^{\bm{x},\Xi}\coloneqq \frac{\sum_{\Theta}\mathfrak{Z}_{Q^m}(\bm{x})\,\mathfrak{J}_{\sysA^{n_{\rm tot}}\sysB^{n_{\rm tot}}\to Q^m}^{(m,\Xi,\Theta)}(\rho_{\sysA^{n_{\rm tot}}\sysB^{n_{\rm tot}}})\,\mathfrak{Z}_{Q^m}(\bm{x})^{\dagger}}{\probrho_{\rho_{\sysA\sysB}^{\otimes n_{\rm tot}}}(\hat{n}_{\rm sift}=m, \hat{\Xi}=\Xi)}.
\end{equation}
Note that $\rho_{Q^m}^{\bm{x},\Xi}$ is normalized, as can be seen from Eqs.~\eqref{eq:rewritten_by_X} and \eqref{eq:event_prob_estimation}. 
Next, let $q_{\rho}$ be defined as 
\begin{equation}
    q_{\rho}\coloneqq \tr[\mathfrak{x}_{XQ}(\rho_{\sysA\sysB})].
\end{equation}
Combining $\sum_{(\xi,\theta)\in\Omega_{\rm pub}^{(0)}\times\Omega_{\rm aux}}{\cal M}_{\sysA\sysB}^{(\xi,\theta)}(\rho_{\sysA\sysB})=1-q_{\rho}$ with Eqs.~\eqref{eq:instrument_global}, \eqref{eq:def_of_projected_state}, \eqref{eq:rewritten_by_X}, and \eqref{eq:another_def_frak_x}, we have 
\begin{align}
    \sum_{\Xi} \!\probrho_{\rho_{\sysA\sysB}^{\otimes n_{\rm tot}}}(\hat{n}_{\rm sift}=m, \hat{\Xi}=\Xi)\!\! \sum_{\bm{x}\in (\mathbb{F}_p^r)^m}\! \! p^{-mr}\ketbra{\widetilde{\bm{x}}}{\widetilde{\bm{x}}}_{X^m}\otimes \rho_{Q^m}^{\bm{x},\Xi} &=\sum_{\Xi,\Theta} \mathfrak{X}_{X^{m}Q^{m}}^{(m,\Xi,\Theta)}(\rho_{\sysA\sysB}^{\otimes n_{\rm tot}}) \\
    &= {n_{\rm tot} \choose m} (1-q_{\rho})^{n_{\rm tot}-m} \mathfrak{x}_{XQ}(\rho_{\sysA\sysB})^{\otimes m} \\
    &={\rm Binom}[n_{\rm tot},q_{\rho}](m) \left(q_{\rho}^{-1}\mathfrak{x}_{XQ}(\rho_{\sysA\sysB}) \right)^{\otimes m}, \label{eq:marginal_binom}
\end{align}
where ${\rm Binom}[n_{\rm tot},q_{\rho}](m)$ is as defined in Eq.~\eqref{eq:def_of_binom}. This implies that the above reinterpretation also fulfills the condition~\eqref{eq:marginal_iid} in Proposition~\ref{prop:modified_for_qkd} with $q$ in Eq.~\eqref{eq:marginal_iid} reinterpreted as $q_{\rho}$ and $\rho_{XQ}$ reinterpreted as $q_{\rho}^{-1}\mathfrak{x}_{XQ}(\rho_{\sysA\sysB})$.
Finally, as has been shown in Sec.~\ref{sec:virtual_protocol}, ${\cal H}(mr, m r-n_{\rm fin}(m, \Xi))= \{G_i\}_{i}$ is a 2-universal family of surjective linear hash functions over $F_p$. This implies that the family of canonically induced maps $\{ G_i:(F_p^r)^{m} \to F_p^{m r - n_{\rm fin}(m,\Xi)} \}_i$ is also a 2-universal family of hash functions of Definition~\ref{def:2-universal_hash}, and hence all the prerequisites of Proposition~\ref{prop:modified_for_qkd} are fulfilled.
By combining Eqs.~\eqref{eq:reinterpretation_prop}--\eqref{eq:marginal_binom}, we have, from Eq.~\eqref{eq:variable_length_renyi} in Proposition~\ref{prop:modified_for_qkd},  
\begin{equation}
    \probrho_{\rho_{\sysA\sysB}^{\otimes n_{\rm tot}}}\!\left((\hat{n}_{\rm sift},\hat{\Xi}_{\rm pub})\in{\cal W}(\rho_{\sysA\sysB}), \hat{\bm{x}}\neq\hat{\bm{x}}^* \right) \leq \overline{P}_{\rm err}(\mathfrak{x}_{XQ}(\rho_{\sysA\sysB}), \{F_p^{m r - n_{\rm fin}(m,\Xi)}\}_{(m,\Xi)}, {\cal W}(\rho_{\sysA\sysB})),\label{eq:bounded_by_single_copy}
\end{equation}
where $\overline{P}_{\rm err}$ is defined in Eq.~\eqref{eq:def_overline_P}. Noticing that $K_{\rm PA}(m, \Xi)=mr-n_{\rm fin}(m,\Xi)$ holds when $n_{\rm fin}(m,\Xi)\geq 1$ from Eq.~\eqref{eq:final_key_as_function}, we have, from Eqs.~\eqref{eq:def_overline_P}, \eqref{eq:region_W}, and \eqref{eq:choice_of_final_key},
\begin{align}
    &\overline{P}_{\rm err}(\mathfrak{x}_{XQ}(\rho_{\sysA\sysB}), \{F_p^{m r - n_{\rm fin}(m,\Xi)}\}_{(m,\Xi)},{\cal W}(\rho_{\sysA\sysB}))\nonumber\\
    &= \min_{\alpha\in[0, 1]} \max_{(m,\Xi)\in {\cal W}(\rho_{\sysA\sysB})} 2^{-\alpha\left(K_{\rm PA}(m, \Xi)\log p - n_{\rm tot}H_{1-\alpha}^{\uparrow,\leq}(X|Q)_{\mathfrak{x}_{XQ}(\rho_{\sysA\sysB})}-[p^r(d_Q+2)(d_Q-1)\log(m+1)]/2\right)} \label{eq:applying_prop_1} \\
    &\leq \max_{(m,\Xi)\in {\cal W}(\rho_{\sysA\sysB})} \epsilon_p 2^{-\alpha^* n_{\rm tot}\left(\max_{\sigma\in V_{\epsilon_t,\epsilon_v}(\Xi)}H_{1-\alpha^*}^{\uparrow,\leq}(X|Q)_{\mathfrak{x}_{XQ}(\sigma_{\sysA\sysB})} - H_{1-\alpha^*}^{\uparrow,\leq}(X|Q)_{\mathfrak{x}_{XQ}(\rho_{\sysA\sysB})}\right)} \\
    &\leq \epsilon_p, \label{eq:remaining_term_bound}
\end{align}
where the last inequality follows from the fact that $(m,\Xi)\in{\cal W}(\rho_{\sysA\sysB})$ implies $\rho_{\sysA\sysB}\in V_{\epsilon_t,\epsilon_v}(\Xi)$.

Combining Eqs.~\eqref{eq:connection_to_source_compression}, \eqref{eq:bounded_by_single_copy}, and~\eqref{eq:remaining_term_bound}, if we can construct $\{V_{\epsilon_t,\epsilon_v}(\Xi)\}_{\Xi}$ that satisfies Eqs.~\eqref{eq:hull}, and $K_{\rm PA}(\hat{n}_{\rm sift},\hat{\Xi}_{\rm pub})$ is set to satisfy Eq.~\eqref{eq:choice_of_final_key}, then we have Eq.~\eqref{eq:failure_prob_collective} with
\begin{equation}
    \epsilon_{\rm iid} = \epsilon_v + \epsilon_p, \label{eq:epsilon_iid}
\end{equation}
which then leads to the secrecy condition with the secrecy parameter $\varepsilon_{\rm sec}$ as 
\begin{equation}
    \varepsilon_{\rm sec} = \sqrt{2}\sqrt{\epsilon_t + (\epsilon_v+\epsilon_p)f_q(n_{\rm tot},d_{\sysA\sysB})},
\end{equation}
from Eq.~\eqref{eq:correspondence_params}, where $f_q(n,d)$ is defined in Eq.~\eqref{eq:def_f_q}. A flow chart of the overall evaluation of the failure probability of PEC given from Sec.~\ref{sec:connection_partially_universal} to \ref{sec:estimation_failure_prob} is shown in Fig.~\ref{fig:flow_chart}.

\begin{figure}[t]
    \centering 
    \includegraphics[width=0.9\linewidth]{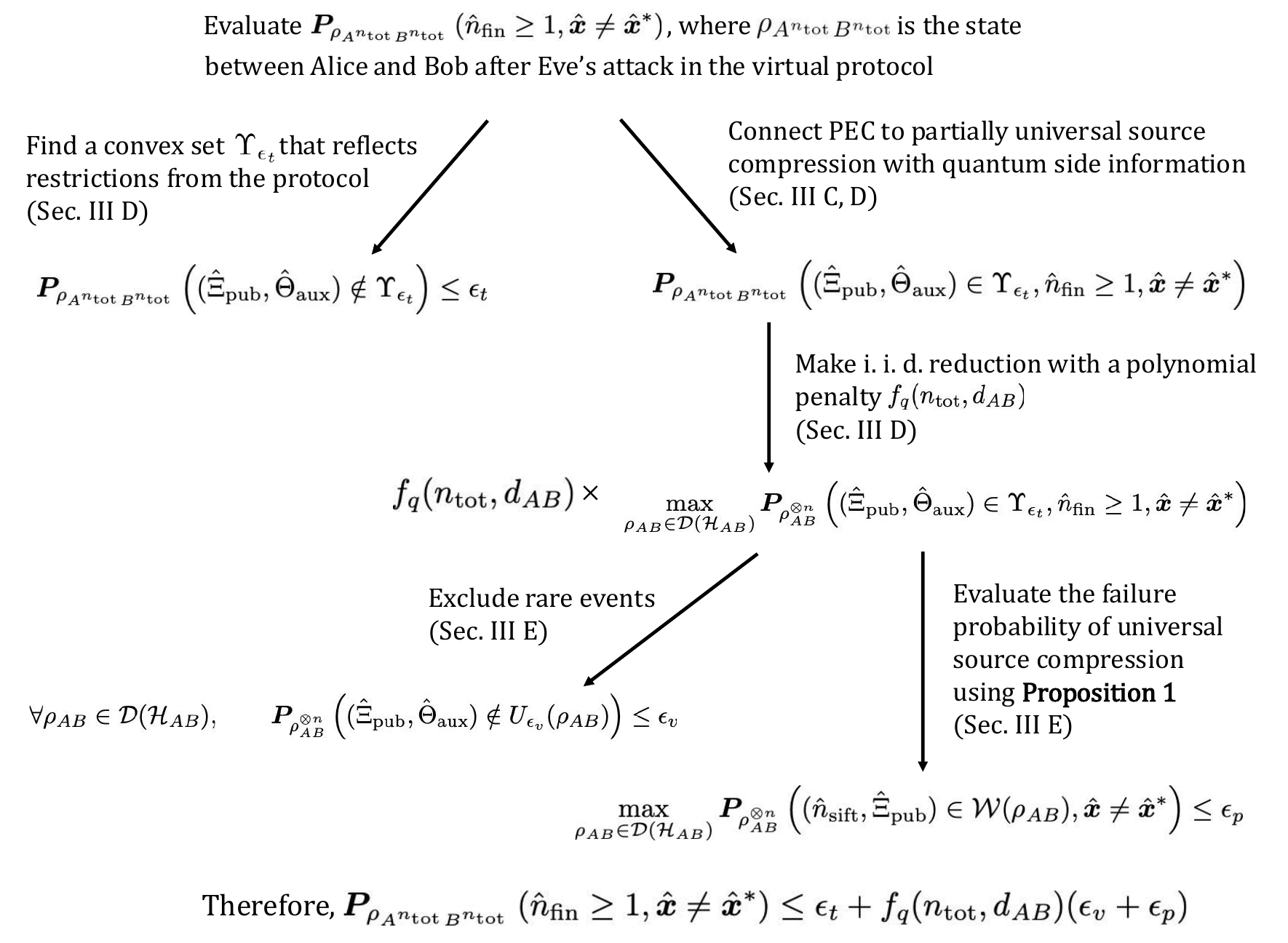}
    \caption{A flow chart of the evaluation of the failure probability of the phase error correction given from Sec.~\ref{sec:connection_partially_universal} to \ref{sec:estimation_failure_prob}. The set $\Upsilon_{\epsilon_t}$ is defined in Eq.~\eqref{eq:bound_out_of_Upsilon}, the factor $f_{q}(n_{\rm tot},d_{\sysA\sysB})$ is in Eq.~\eqref{eq:def_f_q}, and the family of sets $\{U_{\epsilon_v}(\rho_{\sysA\sysB})\}_{\rho_{\sysA\sysB}\in{\cal D}({\cal H}_{\sysA\sysB})}$ is in Eq.~\eqref{eq:ensured_by_concentration}. The set ${\cal W}(\rho_{\sysA\sysB})$ introduced in Eq.~\eqref{eq:region_W} is determined through a family of sets $\{V_{\epsilon_t,\epsilon_v}(\Xi)\}_{\Xi}$ satisfying Eq.~\eqref{eq:hull} and the function $n_{\rm fin}(\cdot,\cdot)$ satisfying Eqs.~\eqref{eq:final_key_as_function} and \eqref{eq:choice_of_final_key}.}
    \label{fig:flow_chart}
\end{figure}

We rewrite the optimization problem involved in Eq.~\eqref{eq:choice_of_final_key}:
\begin{align}
    &\text{Maximize}&   &H^{\uparrow,\leq}_{1-\alpha^*}(X|Q)_{\mathfrak{x}_{XQ}(\rho_{\sysA\sysB})},  \label{eq:optimization_to_be_solved}\\
    &\text{Subject to} & &\rho_{\sysA\sysB}\in V_{\epsilon_t,\epsilon_v}(\hat{\Xi}_{\rm pub}). \label{eq:constraint_test}
\end{align}
If $V_{\sigma_t,\sigma_v}(\Xi)$ is chosen to be a convex set for any $\Xi$, such as the convex hull of the right-hand side of Eq.~\eqref{eq:hull}, the above optimization problem is a (nonlinear) convex semidefinite programming (SDP) problem, since $H_{\alpha}^{\uparrow,\leq}(X|Q)_{\sigma}$ is concave with respect to a subnormalized state $\sigma$ as mentioned in Sec.~\ref{sec:preliminaries}.
Thus, $K_{\rm PA}(\hat{n}_{\rm sift},\hat{\Xi}_{\rm pub})$ can be rigorously computed in numerical optimization.

\begin{remark} \label{rem:optimization_region}
As mentioned, $V_{\sigma_t,\sigma_v}(\Xi)$ in Eq.~\eqref{eq:hull} needs to be convex for the optimization problem Eqs.~\eqref{eq:optimization_to_be_solved} and~\eqref{eq:constraint_test} to be a convex optimization. A sufficient condition for the set in the right-hand side of Eq.~\eqref{eq:hull} to be already convex is 1)~the section ${\cal S}_{\Xi}\coloneqq \{\Theta:(\Xi,\Theta)\in\Upsilon_{\epsilon_t}\}$ of $\Upsilon_{\epsilon_t}$ is a convex set for any $\Xi$, and 2)~for the sections ${\cal T}_{\Xi}(\rho_{\sysA\sysB})\coloneqq\{\Theta:(\Xi,\Theta)\in U_{\epsilon_v}(\rho_{\sysA\sysB})\}$, $t {\cal T}_{\Xi}(\rho_{\sysA\sysB}) + (1-t) {\cal T}_{\Xi}(\sigma_{\sysA\sysB}) \subseteq {\cal T}_{\Xi}(t\rho_{\sysA\sysB}+(1-t)\sigma_{\sysA\sysB})$ holds for any $\Xi$, any $\rho_{\sysA\sysB}, \sigma_{\sysA\sysB}\in{\cal D}({\cal H}_{\sysA\sysB})$, and $t\in[0,1]$~\cite[Chapter~5]{Rockafellar1998}, where the set in the left-hand side denotes that each element is given by the convex combination of elements of ${\cal T}_{\Xi}(\rho_{\sysA\sysB})$ and ${\cal T}_{\Xi}(\sigma_{\sysA\sysB})$. (Recall that the range of $\hat{\Theta}_{\rm aux}$ is a simplex and thus has a vector-space structure.) 
The condition~2) is called the graph convexity of the set-valued map ${\cal T}_{\Xi}(\cdot)$. 

In typical QKD protocols, ${\cal S}_{\Xi}$ takes the form ${\cal S}[a_{\Xi}] \coloneqq \{\Theta:f(\Theta)\leq a_{\Xi}\}$ and ${\cal T}_{\Xi}(\rho)$ takes the form ${\cal T}[g_{\Xi}](\rho) \coloneqq \{\Theta:f(\Theta)\geq g_{\Xi}(\rho)\}$, where $f:\mathbb{R}^{|\Omega_{\rm aux}|}\to\mathbb{R}$ is a linear functional, $a_{\Xi}\in\mathbb{R}$ is a parameter, and $g_{\Xi}:{\cal D}({\cal H}_{\sysA\sysB})\to\mathbb{R}$ is a function, due to the application of a concentration inequality. In this case, a sufficient condition for the set in Eq.~\eqref{eq:hull} to be already convex is loosened; it reduces to the quasiconvexity of the function $g_{\Xi}$ for any $\Xi$.

Practically, it may be convenient to apply different concentration inequalities for different random variables constructed from $\hat{\Xi}_{\rm pub}$ and $\hat{\Theta}_{\rm aux}$ when finding $\{V_{\epsilon_t,\epsilon_v}(\Xi)\}_{\Xi}$.
In this case, one may have several (family of) sets $\{\Upsilon^j_{\epsilon_{t_j}}\}_{j}$ and $\{\{U^j_{\epsilon_{v_j}}(\rho)\}_{\rho}\}_{j}$, each elements of which satisfy Eqs.~\eqref{eq:bound_out_of_Upsilon} and \eqref{eq:ensured_by_concentration}, respectively. Then, we can take $V_{\epsilon_t,\epsilon_v}(\Xi)$ as 
\begin{align}
    V_{\epsilon_t,\epsilon_v}(\Xi)&=\overline{{\rm conv}}\biggl(\bigcap_j V^j_{\epsilon_{t_j},\epsilon_{v_j}}(\Xi)\biggr), \label{eq:split_conditions}\\
    V^j_{\epsilon_{t_j},\epsilon_{v_j}}(\Xi) &= \bigcup_{\Theta:(\Xi,\Theta)\in\Upsilon^j_{\epsilon_{t_j}}}\{\rho\in{\cal D}({\cal H}_{\sysA\sysB}):(\Xi,\Theta)\in U^j_{\epsilon_{v_j}}(\rho)\}, \label{eq:respective_conditions}
\end{align}
where $\epsilon_t=\sum_j \epsilon_{t_j}$ and $\epsilon_v=\sum_j\epsilon_{v_j}$ from the union bound. If every $V^j_{\epsilon_{t_j},\epsilon_{v_j}}$ is a convex set, which may be easier to check, then the convex hull in Eq.~\eqref{eq:split_conditions} can be removed.
Note that the set defined as in Eq.~\eqref{eq:split_conditions} is in general larger than the one constructed from $\Upsilon_{\epsilon_{t}}=\bigcap_j \Upsilon^j_{\epsilon_{t_j}}$ and $U_{\epsilon_v}(\rho)=\bigcap_j U^j_{\epsilon_{v_j}}(\rho)$. They are, however, equal in the case ${\cal S}_j[a^j_{\Xi}]=\{\Theta:f_j(\Theta)\leq a^j_{\Xi}\}$, ${\cal T}_j[g^j_{\Xi}](\rho)=\{\Theta:f_j(\Theta)\geq g^j_{\Xi}(\rho)\}$, and $\{f_j\}_j$ are linearly independent linear functionals, which is a typical situation in QKD.
(See also the later example in Sec.~\ref{sec:numerical_comparison}.)
\end{remark}

\begin{remark} 
While our final key length is adaptively chosen for the value of $\hat{\Xi}_{\rm pub}$, we can also construct a non-adaptive protocol. 
Suppose that $\{V_{\epsilon_t,\epsilon_v}(\Xi)\}_{\Xi}$ is constructed as given in Eq.~\eqref{eq:hull}, where $\Upsilon_{\epsilon_t}$ is assumed to be a convex set, and the set-valued map $U_{\epsilon_v}(\cdot)$ is assumed to be graph-convex in the sense of Ref.~\cite{Rockafellar1998}. Then, the set-valued map $V_{\epsilon_t,\epsilon_v}(\cdot)$ is also graph-convex. Thus, for any convex subset ${\cal S}$ of the range of $\hat{\Xi}_{\rm pub}$, the set $\bigcup_{\Xi\in{\cal S}}V_{\epsilon_t,\epsilon_v}(\Xi)$ is convex. Now, choose ${\cal S}$ so that 1)~the protocol is aborted when $\hat{\Xi}_{\rm pub}\notin{\cal S}$ is observed, and 2)~the probability of the abort for an honest implementation is smaller than $\epsilon_c$ (completeness). Then, $V_{\epsilon_t,\epsilon_{v}}(\Xi)$ in Eq.~\eqref{eq:constraint_test} can be replaced with $\bigcup_{\Xi\in{\cal S}}V_{\epsilon_t,\epsilon_v}(\Xi)$, which makes the optimization problem Eq.~\eqref{eq:optimization_to_be_solved} independent of the value of $\hat{\Xi}_{\rm pub}$.
As in the previous remark, the condition of graph-convexity may be loosened when $\Upsilon_{\epsilon_t}$ and $U_{\epsilon_v}(\cdot)$ are more structured. 
\end{remark}

\subsection{Asymptotic optimality from the entropic uncertainty relation} \label{sec:asymptotic_optimality}
Given the expression of $K_{\rm PA}(\hat{n}_{\rm sift}, \hat{\Xi}_{\rm pub})$ in Eq.~\eqref{eq:choice_of_final_key}, the asymptotic key rate of our new analysis can be given as follows. First, we fix a channel model ${\cal N}_{C\to B}$, which amounts to determining $\rho_{\sysA\sysB}\in{\cal D}({\cal H}_{\sysA\sysB})$ that is shared between Alice and Bob at each round in the estimation protocol by $\rho_{\sysA\sysB} = {\cal N}_{C\to B}(\ketbra{\Psi}{\Psi}_{AC})$, where $\ket{\Psi}_{AC}$ is defined in the virtual protocol in Sec.~\ref{sec:virtual_protocol} and satisfies Eq.~\eqref{eq:purified_state_compatible}. Then, we have
\begin{align}
    &\forall \xi\in\Omega_{\xi},& \lim_{n_{\rm tot}\to \infty} \frac{\hat{\Xi}_{\rm pub}(\xi)}{n_{\rm tot}} &= P_{\rho_{\sysA\sysB}}(\hat{\xi}_{\rm pub}=\xi), \\ 
    &\forall \theta\in\Omega_{\theta}, & \lim_{n_{\rm tot}\to \infty} \frac{\hat{\Theta}_{\rm aux}(\theta)}{n_{\rm tot}} &= P_{\rho_{\sysA\sysB}}(\hat{\theta}_{\rm aux}=\theta).
\end{align}
Through the asymptotic concentrations for $\Upsilon_{\epsilon_t}$ and $U_{\epsilon_v}(\cdot)$, we can find a set $V^{\infty}(P_{\rho_{\sysA\sysB}}(\hat{\xi}_{\rm pub}))$ of density operators that is compatible with observed events in the asymptotic limit as 
\begin{equation}
    V^{\infty}(P_{\rho_{\sysA\sysB}}(\hat{\xi}_{\rm pub})) \supseteq \bigcup_{\epsilon_t,\epsilon_v>0}\bigcap_{n_{\rm tot}\in\mathbb{N}} V_{\epsilon_t,\epsilon_v}(\Xi).
\end{equation}
Since $H_{1-\alpha^*}^{\uparrow,\leq}(X|Q)_{\mathfrak{x}_{XQ}(\rho_{\sysA\sysB})}$ is continuously non-decreasing for the arbitrarily chosen parameter $\alpha^*\in[0,1]$~\cite{Cheng2021}, the asymptotic rate of sacrificing dits in the privacy amplification step can be given by
\begin{equation}
    \lim_{n_{\rm tot}\to \infty} \frac{K_{\rm PA}(\hat{n}_{\rm sift}, \hat{\Xi}_{\rm pub})}{n_{\rm tot}} = \frac{1}{\log p} \max_{\sigma_{\sysA\sysB}\in V^{\infty}(P_{\rho_{\sysA\sysB}}(\hat{\xi}_{\rm pub}))} H(X|Q)_{\mathfrak{x}_{XQ}(\sigma_{AB})},
\end{equation}
where we took the limits $\alpha^*\to 0$ and $n_{\rm tot}\alpha^*\to \infty$ for Eq.~\eqref{eq:choice_of_final_key} as $n_{\rm tot}\to \infty$ and applied Eq.~\eqref{eq:subnormalized_cond_ent}.
This immediately implies that the asymptotic key rate in bits under a channel model ${\cal N}_{C\to B}$ is given by 
\begin{equation}
    q_{\rho}r\log p- \max_{\sigma_{\sysA\sysB}\in V^{\infty}(P_{\rho_{\sysA\sysB}}(\hat{\xi}_{\rm pub}))}H(X|Q)_{\mathfrak{x}_{XQ}(\sigma_{\sysA\sysB})}, \label{eq:rate_achievable_by_new}
\end{equation}
where $q_{\rho} = \tr[\mathfrak{x}_{XQ}(\rho_{\sysA\sysB})] =\tr[\mathfrak{x}_{XQ}\circ {\cal N}_{C\to B}(\ketbra{\Psi}{\Psi}_{AC})]$. In the following, we show that this rate coincides with the optimal key rate called the Devetak-Winter rate~\cite{Devetak2005}, which is defined for the asymptotic limit of i.i.d.~quantum channels ${\cal N}_{C\to B}^{\otimes n}$. 

The Devetak-Winter rate $R_{\rm DW}$ in the context of QKD is defined for a (subnormalized) state $\Phi_{Z_A Z_B E'}$ between Alice's sifted key $Z_A$, Bob's sifted key $Z_B$, and Eve's system $E'=E R$, where $R$ denotes the classical register for the public announcement for each round as 
\begin{align}
    R_{\rm DW} &\coloneqq I(Z_A:Z_B)_{\Phi_{Z_AZ_B}} - I(Z_A:E')_{\Phi_{Z_AE'}} \\
    &= H(Z_A|E')_{\Phi_{Z_A E'}} - H(Z_A|Z_B)_{\Phi_{Z_A Z_B}}.
\end{align}
When the underlying channel in the Actual protocol in Sec.~\ref{sec:virtual_protocol} is ${\cal N}_{C\to B}$, the state $\Phi_{Z_A Z_B E'}$ in the above can explicitly be written, with a Stinespring isometry $L_{C\to BE}$ of the channel ${\cal N}_{C\to B}$ and $q_a$, $\rho_a$, $M^b_B$, $\mathfrak{t}$, $\mathfrak{s}$, and $\mathfrak{s}'$ introduced in the Actual protocol, as 
\begin{equation}
\begin{split}
    \Phi_{Z_A Z_B E'} &\coloneqq \sum_{\xi\in\Omega_{\rm pub}^{(1)}}\sum_{(a, b)\in {\cal X}_{\sysA}\times{\cal X}_{\sysB}:\xi=\mathfrak{t}(a, b)}\ketbra{\mathfrak{s}[\xi](a)}{\mathfrak{s}[\xi](a)}_{Z_A}\otimes \ketbra{\mathfrak{s}'[\xi](b)}{\mathfrak{s}'[\xi](b)}_{Z_B} \\
    &\hspace{5cm}\otimes \tr_{B}[M^{b}_B\, L_{C\to BE} (q_a\rho_a) L_{C\to BE}^{\dagger}]\otimes \ketbra{\xi}{\xi}_R.
    \end{split}
\end{equation}
Since the sifted key is then embedded by an injection $\mathfrak{e}:{\cal X}_{\rm sift} \hookrightarrow \mathbb{F}_p^r$ in the Actual protocol, we have  
\begin{equation}
    R_{\rm DW} = H(Z|E')_{\Phi_{\mathfrak{e}(Z_A) E'}} - H(Z|Z_B)_{\Phi_{\mathfrak{e}(Z_A) Z_B}},
\end{equation}
where we used the fact that the entropy stays invariant under an isometric embedding.
The right-most term in the above is an asymptotic secret-key consumption rate for the information reconciliation. Thus, if the asymptotic final key generation rate Eq.~\eqref{eq:rate_achievable_by_new} in our protocol is equal to the first term $H(Z|E')_{\Phi_{\mathfrak{e}(Z_A) E'}}$ in the right-hand side, then our protocol achieves the Devetak-Winter rate.

We first consider the scenario in which Alice and Bob can completely identify the underlying channel ${\cal N}_{\sysC\to \sysB }$ via $(\hat{\xi}_{\rm pub}, \hat{\theta}_{\rm aux})$, i.e., the set $V^{\infty}(P_{\rho_{AB}}(\hat{\xi}_{\rm pub}))$ is a singleton.
We assume that the Actual protocol in Sec.~\ref{sec:virtual_protocol} further satisfies the following two conditions:
\begin{enumerate}
    \item For any $\xi\in\Omega_{\rm pub}^{(1)}$, the Kraus rank of the CP map ${\cal J}^{(\xi)}_{\sysA\sysB\to Q}$ in Eq.~\eqref{eq:success_map} is one; i.e., ${\cal J}^{(\xi)}_{\sysA\sysB\to Q}(\cdot)$ can be written as ${\cal J}^{(\xi)}_{\sysA\sysB\to Q}(\cdot)=J_{\sysA\sysB\to Q}^{(\xi)}(\cdot)\bigl(J_{\sysA\sysB\to Q}^{(\xi)}\bigr)^{\dagger}$ for an operator $J_{\sysA\sysB\to Q}^{(\xi)}$. This also imposes the following condition on ${\cal J}^{(\xi, \theta)}_{\sysA\sysB\to Q}$ in Eq.~\eqref{eq:success_map_estimation}: 
    \begin{equation}
        \forall \xi\in\Omega_{\rm pub}^{(1)}, \qquad  {\cal J}^{(\xi, \theta)}_{\sysA\sysB\to Q} = p_{\xi}(\theta) {\cal J}^{(\xi)}_{\sysA\sysB\to Q}, \label{eq:trivial_except_one}
    \end{equation}
    where $\{p_{\xi}(\theta)\}$ is non-negative numbers with $\sum_{\theta}p_{\xi}(\theta)=1$.
    \item The CP maps $\{{\cal J}^{(\xi)}_{\sysA\sysB\to Q}\}_{\xi\in\Omega_{\rm pub}^{(1)}}$ additionally satisfy  
    \begin{equation}
        \forall \xi_1,\xi_2,\text{ s.t.\ } \xi_1\neq \xi_2, \qquad {\cal J}^{(\xi_1)}_{\sysA\sysB\to Q}(I_{\sysA\sysB})\; {\cal J}^{(\xi_2)}_{\sysA\sysB\to Q}(I_{\sysA\sysB}) =0, \label{eq:orthogonal_subspace}
    \end{equation}
    which means that the range of the CP maps $\{{\cal J}^{(\xi)}_{\sysA\sysB\to Q}\}_{\xi\in\Omega_{\rm pub}^{(1)}}$ are orthogonal to each other for different values of $\xi\in\Omega_{\rm pub}^{(1)}$. This implies that the Hilbert space ${\cal H}_Q$ can be decomposed as ${\cal H}_Q=\bigoplus_{\xi\in\Omega_{\rm pub}^{(1)}} {\cal H}_Q[\xi]$, where ${\cal J}^{(\xi)}_{\sysA\sysB\to Q}(\cdot)$ acts only on $ {\cal H}_Q[\xi]$ for any input $\cdot$.
\end{enumerate}
It is always possible to adjust the protocol to satisfy these conditions. Note, however, that this adjustment is often at the cost of sacrificing finite-size performance. (For example, one may mitigate the finite-size effect by reducing the dimension of ${\cal H}_Q$, which may violate the condition.)

Now, let $\ket{\phi}_{K^rQ E'}$ be defined as 
\begin{equation}
    \ket{\phi}_{K^rQ E'} \coloneqq \sum_{\xi\in\Omega_{\rm pub}^{(1)}} C_{Q\to K^r Q} J_{AB\to Q}^{(\xi)} L_{C\to BE} \ket{\Psi}_{AC} \otimes \ket{\xi}_{R}.
\end{equation}
Then, Combining Eqs.~\eqref{eq:key_extraction_isometry}, \eqref{eq:def_frak_x}, \eqref{eq:trivial_except_one}, and \eqref{eq:orthogonal_subspace} with the fact that $\{\Pi_Q[z]\}_{z\in{\cal X}_{\rm sift}}$ are orthogonal projections, we have 
\begin{align}
    \tr_{Q}[\ketbra{\phi}{\phi}_{K^r Q E'}] &= \sum_{z\in \mathbb{F}_p^r} \ketbra{z}{z}_{K^r} \tr_{Q}[\ketbra{\phi}{\phi}_{K^r Q E'}]  \ketbra{z}{z}_{K^r}, \label{eq:Z-diagonal_when_traced_out}\\ 
    {\cal P}_{K^r\to Z}\left(\tr_{Q}[\ketbra{\phi}{\phi}_{K^r Q E'}]\right) &= \Phi_{\mathfrak{e}(Z_A) E'}, \label{eq:compatible_Z}\\
    \intertext{and}
    {\cal P}_{K^r\to X}\left(\tr_{E'}[\ketbra{\phi}{\phi}_{K^r Q E'}]\right) &= \mathfrak{x}_{X Q}(\rho_{\sysA\sysB}). \label{eq:compatible_X}
\end{align}
Equations~\eqref{eq:Z-diagonal_when_traced_out} and \eqref{eq:compatible_Z} are exactly the conditions for the standard form defined in Ref.~\cite{Tsurumaru2022}, which implies that the normalized tripartite pure state $q_{\rho}^{-1}\ketbra{\phi}{\phi}_{K^rQE'}$ satisfies the equality condition of the entropic uncertainty relation~\cite{Coles2012, Coles2017, Tsurumaru2022}, noticing that $q_{\rho} = \tr[\mathfrak{x}_{X Q}(\rho_{\sysA\sysB})] = \braket{\phi|\phi}$.
Thus, we have 
\begin{equation}
    H(X|Q)_{{\cal P}_{K^r\to X}(q_{\rho}^{-1}\ketbra{\phi}{\phi}_{K^rQE'})} + H(Z|E')_{{\cal P}_{K^r\to Z}(q_{\rho}^{-1}\ketbra{\phi}{\phi}_{K^rQE'})} = \log p^r.
\end{equation}
Combining this with Eqs.~\eqref{eq:compatible_Z} and~\eqref{eq:compatible_X}, we have
\begin{equation}
    H(Z|E')_{\Phi_{\mathfrak{e}(Z_A) E'}} = q_{\rho} r \log p - H(X|Q)_{\mathfrak{x}_{X Q}(\rho_{\sysA\sysB})},
\end{equation}
where we used $H(A|B)_{\tilde{\rho}} = \tr[\tilde{\rho}] \, H(A|B)_{(\tr[\tilde{\rho}])^{-1}\tilde{\rho}}$ for a subnormalized $\tilde{\rho}$ from Eq.~\eqref{eq:subnormalized_cond_ent}. Thus, our security proof approach achieves the Devetak-Winter rate as long as the protocol is appropriately modified to fulfill the mentioned two conditions.

Next, we consider a more realistic scenario in which Alice and Bob can partially know the underlying channel via $(\hat{\xi}_{\rm pub}, \hat{\theta}_{\rm aux})$. In this case, they amount to have a set $\{{\cal N}^{\zeta}_{\sysC \to \sysB}\}_{\zeta}$ of possible channels, each of which corresponds to an element of $V^{\infty}(P_{\rho_{AB}}(\hat{\xi}_{\rm pub}))$. Then, the asymptotic key rate of our security analysis is given by the smallest among these possible channels. The same applies to the Devetak-Winter rate, where $R_{\rm DW}$ is given as the lowest rate among these channels. Therefore, our proof strategy can achieve the asymptotically optimal key rate in this case as well.

Unlike our new method based on universal classical source compression with quantum side information, the conventional phase error correction approach can only achieve a suboptimal key rate asymptotically, which replaces $H(X|Q)_{\mathfrak{x}_{XQ}(\rho_{\sysA\sysB})}$ with $\inf_{{\cal M}_{Q}} H(X | Q)_{{\cal M}_{Q}(\mathfrak{x}_{XQ}(\rho_{\sysA\sysB}))}$, where ${\cal M}_{Q}$ is a measurement channel on the system $Q$.  
This is because, in the conventional approach, the failure probability of PEC is evaluated through the reduction to the classical statistics of the phase error rate. (The phase error operator that is used to compute the phase error rate is usually defined as $\sum_{\xi\in\Omega_{\rm pub}^{(1)}}({\cal J}_{\sysA\sysB\to Q}^{(\xi)})^{\dagger}(E_Q)$ with a POVM element $E_Q$ of the measurement channel ${\cal M}_Q$.)
It is known that $\inf_{{\cal M}_{Q}} H(X | Q)_{{\cal M}_{Q}(\mathfrak{x}_{XQ}(\rho_{\sysA\sysB}))}$ is in general larger than $H(X|Q)_{\mathfrak{x}_{XQ}(\rho_{\sysA\sysB})}$, and the difference between these two is called the quantum discord~\cite{Ollivier2001} of the c-q state $\mathfrak{x}_{XQ}(\rho_{\sysA\sysB})$.
The necessary and sufficient condition for the quantum discord to be zero for the state $\mathfrak{x}_{XQ}(\rho_{\sysA\sysB})$ is that $\bra{\widetilde{x}}_{X}\mathfrak{x}_{XQ}(\rho_{\sysA\sysB})\ket{\widetilde{x}}_{X} = \sum_{\xi\in\Omega_{\rm pub}^{(1)}}p^{-r} \mathfrak{Z}'_{Q}(x) {\cal J}^{(\xi)}_{\sysA\sysB\to Q}(\rho_{\sysA\sysB})\mathfrak{Z}'_{Q}(x)^{\dagger}$ for every $x\in\mathbb{F}_p^r$ mutually commutes~\cite{Ollivier2001}, where $\mathfrak{Z}'_Q$ is defined in Eq.~\eqref{eq:Z_restriction}.  For the QKD protocols with this condition satisfied, the conventional PEC-type security proof can also asymptotically achieve the Devetak-Winter rate.

\subsection{A choice of the parameter $\alpha^*$ in finite size} \label{sec:parameter_choice}
Although the security proof developed in Sec.~\ref{sec:virtual_protocol}--\ref{sec:estimation_failure_prob} works for any value of the parameter $\alpha^{*}\in[0,1]$ that appears in Eq.~\eqref{eq:choice_of_final_key}, we would like to choose it so that $K_{\rm PA}(\hat{n}_{\rm sift},\hat{\Xi}_{\rm pub})$ is minimized, or equivalently, the final key length $n_{\rm fin}(\hat{n}_{\rm sift},\hat{\Xi}_{\rm pub})$ is maximized.  Since the minimization of the right-hand side of Eq.~\eqref{eq:choice_of_final_key} over $\alpha^*$ may not be a convex problem, there may need a heuristic approach to find a good initial guess of $\alpha^*$ from which $K_{\rm PA}(\hat{n}_{\rm sift},\hat{\Xi}_{\rm pub})$ is numerically minimized.  We will briefly comment on this problem in this section.

For the $\alpha$-R\'enyi divergence defined in Eq.~\eqref{eq:alpha_Renyi_divergence}, the function $\alpha\mapsto D_{\alpha}(\rho\|\sigma)$ is continuously differentiable on $(0,\infty)$ for any positive semidefinite $\rho$ and $\sigma$ with $\tr[\rho]=1$ and $\mathrm{supp}(\rho)\subseteq\mathrm{supp}(\sigma)$~\cite{Lin2015}. In fact, the derivative at $\alpha\neq 1$ is given from Eq.~\eqref{eq:alpha_Renyi_divergence} by 
\begin{equation}
    \partial_{\alpha} D_{\alpha}(\rho\|\sigma) = \frac{\log e}{\alpha - 1}\frac{\tr[\rho^{\alpha} \sigma^{1-\alpha} (\ln\rho - \ln\sigma)]}{\tr[\rho^{\alpha}\sigma^{1-\alpha}]} - \frac{1}{(\alpha - 1)^2}\log \tr[\rho^{\alpha}\sigma^{1-\alpha}], \label{eq:derivative_renyi_divergence_neq_1}
\end{equation}
and the derivative at $\alpha=1$ is given by~\cite[Theorem~5]{Lin2015} 
\begin{equation}
    \frac{\partial}{\partial \alpha} D_{\alpha}(\rho\|\sigma) \Bigr|_{\alpha=1} = \frac{1}{2\log e}V(\rho\|\sigma), \label{eq:derivative_renyi_divergence_at_1}
\end{equation}
where the relative entropy variance $V(\cdot\|\cdot)$ is defined in Eq.~\eqref{eq:relative_entropy_variance}.
From Eqs.~\eqref{eq:derivative_renyi_divergence_neq_1} and \eqref{eq:derivative_renyi_divergence_at_1}, it can be seen that the map $(\alpha, \sigma) \to \partial_{\alpha} D_{\alpha}(\rho_{XQ} \| I_X \otimes \sigma_{Q})$ is jointly continuous on $(0, \infty) \times \{\sigma_Q\in{\cal D}({\cal H}_Q):\mathrm{supp}(\sigma_Q)=\mathrm{supp}(\rho_Q)\}$. Furthermore, the maximizer $\sigma_{\alpha}^*$ in Eq.~\eqref{eq:maximizer_unique} for the conditional $\alpha$-R\'enyi entropy $H^{\uparrow}_{\alpha}(X|Q)_{\rho}$ in Eq.~\eqref{eq:conditional_alpha_Renyi} is unique and continuous for $\alpha\in(0, \infty)$. In particular, $\mathrm{supp}(\rho_Q)=\mathrm{supp}(\sigma_{\alpha}^*)$ is fulfilled.
Combining these facts, one can apply the envelope theorem~\cite[Proposition 2.1]{Oyama2018} at $\alpha=1$ to have 
\begin{align}
    \left.\frac{\partial}{\partial \alpha} H_{\alpha}^{\uparrow}(X|Q)_{\rho}\right|_{\alpha=1} &= -\left.\frac{\partial}{\partial \alpha} D_{\alpha}(\rho_{XQ}\|I_X\otimes\sigma_Q)\right|_{\alpha=1, \sigma=\sigma_1^*} \\
    & = -\frac{1}{2\log e}V(\rho_{XQ}\|I_X\otimes \rho_Q),
\end{align}
where the second equality follows from Eq.~\eqref{eq:derivative_renyi_divergence_at_1}, combined with $\sigma_1^*=\rho$.
From this and the fact that $H^{\uparrow}_{\alpha}(X|Q)_{\rho}$ is monotonically non-increasing for $\alpha\in[0, 1]$~\cite{Cheng2021}, $\alpha=1$ is the unique minimizer of $\min_{\alpha\in[0,1]} H^{\uparrow}_{\alpha}(X|Q)_{\rho}$ as long as $V(\rho_{XQ}\|I_X\otimes \rho_Q)>0$. Thus, $H^{\uparrow}_{\alpha}(X|Q)_{\rho}$ can be expanded around $\alpha = 1$ in this case as
\begin{equation}
    H^{\uparrow}_{\alpha}(X|Q)_{\rho} = H(X|Q)_{\rho} - \frac{\alpha - 1}{2\log e}V(\rho_{XQ}\|I_X\otimes \rho_Q) + o(|\alpha - 1|). \label{eq:expansion_around_1}
\end{equation} 

Now, let $\overline{\rho}_{\sysA\sysB}$ be a density operator that maximizes $\{H(X|Q)_{\mathfrak{x}_{X Q}(\sigma_{\sysA\sysB})}\}_{\sigma_{\sysA\sysB}\in V^{\infty}(P_{\rho_{AB}}(\hat{\xi}_{\rm pub}))}$ in Eq.~\eqref{eq:rate_achievable_by_new}, where $\rho_{AB}$ is determined by a channel model.  The value $H(X|Q)_{\mathfrak{x}_{X Q}(\overline{\rho}_{\sysA\sysB})}$ is thus the asymptotic rate of sacrificing bits at the privacy amplification. Then, from Eq.~\eqref{eq:subnormalized_Renyi}, \eqref{eq:subnormalized_cond_ent}, and \eqref{eq:expansion_around_1}, we have 
\begin{equation}
    H^{\uparrow,\leq}_{1-\alpha^*}(X|Q)_{\mathfrak{x}_{XQ}(\overline{\rho}_{AB})} = H(X|Q)_{\mathfrak{x}_{XQ}(\overline{\rho}_{AB})} + \frac{\alpha^*}{2\log e} V\bigl(\mathfrak{x}_{XQ}(\overline{\rho}_{AB})\|I_X\otimes \tr_X[\mathfrak{x}_{XQ}(\overline{\rho}_{AB})]\bigr) + o(\alpha^*),
\end{equation}
where we have assumed $V\bigl(\mathfrak{x}_{XQ}(\overline{\rho}_{AB})\|I_X\otimes \tr_X[\mathfrak{x}_{XQ}(\overline{\rho}_{AB})]\bigr)>0$. 
Thus, under this assumption, Eq.~\eqref{eq:choice_of_final_key} can be given for $n_{\rm tot}\gg 1$ (and thus $\alpha^* \ll 1$) by 
\begin{align}
    \begin{split}
    K_{\rm PA}(\hat{n}_{\rm sift},\hat{\Xi}_{\rm pub}) 
    &= \frac{n_{\rm tot}}{\log p} \left[ H(X|Q)_{\mathfrak{x}_{XQ}(\overline{\rho}_{\sysA\sysB})} + \frac{\alpha^*}{2\log e}V\!\left(\mathfrak{x}_{XQ}(\overline{\rho}_{\sysA\sysB})\|I_X \otimes \tr_{X}\!\bigl[\mathfrak{x}_{XQ}(\overline{\rho}_{\sysA\sysB})\bigr]\right) + o(\alpha^*)\right] \\
    & \hspace{3cm} +\frac{p^r(d_Q+2)(d_Q-1)\log_p(\hat{n}_{\rm sift}+1)}{2} +\frac{\log_p(1/\epsilon_p)}{\alpha^*}.
    \end{split}
\end{align}
The second dominant terms in the above when $n_{\rm tot}\gg 1$ thus scales as $\sim n_{\rm tot}^{1/2}$, which can be obtained by setting $\alpha^*= O(n_{\rm tot}^{-1/2})$.
In fact, the best choice of $\alpha^*$ to minimize these second dominant terms is 
\begin{equation}
    \alpha^* = \left(\frac{2\log e\log(1/\epsilon_p)}{ n_{\rm tot} V\!\left(\mathfrak{x}_{XQ}(\overline{\rho}_{\sysA\sysB})\|I_X \otimes \tr_{X}[\mathfrak{x}_{XQ}(\overline{\rho}_{\sysA\sysB})]\right)}\right)^{\frac{1}{2}}, \label{eq:good_choice_of_alpha}
\end{equation}
which gives the amount of privacy amplification dits $K_{\rm PA}(\hat{n}_{\rm sift},\hat{\Xi}_{\rm pub})$ as 
\begin{equation}
    \begin{split}
    K_{\rm PA}(\hat{n}_{\rm sift},\hat{\Xi}_{\rm pub}) &= \frac{1}{\log p} \biggl(n_{\rm tot} H(X|Q)_{\mathfrak{x}_{XQ}(\overline{\rho}_{\sysA\sysB})} \\
    &\hspace{2cm} + \sqrt{2n_{\rm tot}\ln(1/\epsilon_p)\,V\!\left(\mathfrak{x}_{XQ}(\overline{\rho}_{\sysA\sysB})\|I_X \otimes \tr_{X}[\mathfrak{x}_{XQ}(\overline{\rho}_{\sysA\sysB})]\right)} + o(n_{\rm tot}^{1/2}) \biggr). 
    \end{split}\label{eq:idealized_amount_syndrome}
\end{equation}
For a finite-size case, the optimal choice of $\alpha^*$ may deviate from the value in Eq.~\eqref{eq:good_choice_of_alpha} due to the $o(n_{\rm tot}^{1/2})$ terms, but the value in Eq.~\eqref{eq:good_choice_of_alpha} should be a good initial guess for a further numerical optimization of $\alpha^*$.

\section{Numerical comparison} \label{sec:numerical_comparison}
In this section, we compare the key rates with the conventional PEC-type security proof and with our new approach applied to an explicit problem.  
In the previous section, we discussed the conditions under which the key rates in the conventional analysis and in our new analysis differ.  
In this perspective, the Bennett 1992 (B92) protocol~\cite{Bennett1992} is particularly insightful since in the B92 protocol, $\bra{\widetilde{0}}_X\mathfrak{x}_{XQ}(\rho_{\sysA\sysB}^{\rm exp})\ket{\widetilde{0}}_X$ and $\bra{\widetilde{1}}_X\mathfrak{x}_{XQ}(\rho_{\sysA\sysB}^{\rm exp})\ket{\widetilde{1}}_X$ ($\dim {\cal H}_X=2$) commute when the channel is ideal and do not commute in general when there is a non-zero bit error rate.  Thus, one can expect that the key rates with the conventional analysis and ours are the same under zero bit errors and different under non-zero bit errors.  As can be shown later, this intuition turns out to be true.

We first define the B92 protocol as follows.

\bigskip 
\noindent --- B92 protocol ---

\noindent Alice and Bob agree on the probabilities $p_{\rm s}$, $p_{\rm t}$, and $p_{\rm d}(=1 - p_{\rm s} - p_{\rm t})$ of choosing the sifting round from which a sifted key bit is probabilistically generated, the test round in which parameters are estimated, and the discard round in which information is discarded, respectively. Define ${\cal X}_{\sysA}\coloneqq \{(a,\text{``s''}), (a,\text{``t''}),(a,\text{``d''})\}_{a\in\{0,1\}}$ and ${\cal X}_B\coloneqq\{0, 1, \text{failure}\}$. 
They also agree on the state $\ket{\psi_a}\coloneqq \beta\ket{\widetilde{0}} + (-1)^a\alpha\ket{\widetilde{1}}$ that Alice sends, where $0<\alpha<1/\sqrt{2}$ and $\beta=\sqrt{1 - \alpha^2}$. 
\begin{enumerate}
    \item Alice generates a uniform random bit $\hat{a}$ and randomly chooses one of ``s'', ``t'', and ``d'' with the probabilities $p_{\rm s}$, $p_{\rm t}$, and $p_{\rm d}$, respectively. According to a generated bit $\hat{a}$, she sends the state $\ket{\psi_{\hat{a}}}$ to Bob.  Bob performs a measurement with a POVM $\{M_{\sysB}^b\}_{b\in{\cal X}_{\sysB}}$ to obtain $\hat{b}$, where each POVM element is defined as 
    \begin{align}
        M^{0(1)}_{\sysB} &\coloneqq \frac{1}{2}\ketbra{\psi_{1(0)}^{\perp}}{\psi_{1(0)}^{\perp}}_{\sysB},\\
        M^{\text{failure}}_{\sysB} & \coloneqq I_{\sysB} - \frac{1}{2}\ketbra{\psi_{0}^{\perp}}{\psi_{0}^{\perp}}_{\sysB} - \frac{1}{2}\ketbra{\psi_{1}^{\perp}}{\psi_{1}^{\perp}}_{\sysB},
    \end{align}
    with $\ket{\psi_a^{\perp}}\coloneqq \alpha\ket{\widetilde{0}} - (-1)^a \beta\ket{\widetilde{1}}$. They repeat this quantum communication $n_{\rm tot}$ times, where Alice may initiate each round before the previous rounds have completed.  
    \item After $n_{\rm tot}$ rounds of quantum communication, Alice announces the label ``s'', ``t'', or ``d'' she initially chose for each of $n_{\rm tot}$ rounds.  Depending on the announced label ``s'', ``t'', or ``d'' at the $i$-th round, Alice and Bob further perform one of the following operations.  
    \begin{itemize}
        \item[``s''] Bob announces $\hat{s}^{(i)}=0$ or $1$ depending on whether his measurement outcome is ``failure'' or the others. The random variable $\hat{\xi}_{\rm pub}^{(i)}$ is then given by $\hat{\xi}_{\rm pub}^{(i)}=(\text{``s''}, \hat{s}^{(i)})$.  
        \item[``t''] Bob announces his measurement outcomes. Unless ``failure'' for which $\hat{\xi}^{(i)}_{\rm pub}=(\text{``t''}, \varnothing)$ is assigned, Alice announces $\hat{a}^{(i)}$, and then the random variable $\hat{\xi}_{\rm pub}^{(i)}$ is given by $\hat{\xi}^{(i)}_{\rm pub}=(\text{``t''}, (\hat{a}^{(i)},\hat{b}^{(i)}))$.
        \item[``d''] Alice and Bob discard their bits and outcomes, where the random variable $\hat{\xi}^{(i)}_{\rm pub}$ is assigned to be $\hat{\xi}^{(i)}_{\rm pub}=(\text{``d''})$.
    \end{itemize} 
    Let $\hat{\Xi}_{\rm pub} = n_{\rm tot}P_{\hat{\bm{\xi}}_{\rm pub}}$.
    Let $\hat{n}_{\rm sift}$ be the length of a key extracted from ``s'' rounds, $\hat{n}_{\rm suc}$ be the number of successful measurement in ``t'' rounds, and $\hat{n}_{\rm err}$ be the number of bit errors among $\hat{n}_{\rm suc}$ successful rounds, which are given explicitly by 
    \begin{align}
        \hat{n}_{\rm sift} &\coloneqq \left|\left\{i\in\{1,\ldots,n_{\rm tot}\}:\hat{\xi}_{\rm pub}^{(i)}=(\text{``s''}, 1)\right\}\right| = \hat{\Xi}_{\rm pub}((\text{``s''}, 1)), \label{eq:def_n_sift}\\
        \hat{n}_{\rm suc} &\coloneqq \left|\left\{i\in\{1,\ldots,n_{\rm tot}\}:\hat{\xi}_{\rm pub}^{(i)}\in \bigcup_{a,b\in\{0,1\}}\{(\text{``t''}, (a, b))\}\right\}\right|= \sum_{a, b\in\{0,1\}}\hat{\Xi}_{\rm pub}((\text{``t''}, (a, b))), \\
        \hat{n}_{\rm bit} &\coloneqq \left|\left\{i\in\{1,\ldots,n_{\rm tot}\}:\hat{\xi}_{\rm pub}^{(i)}\in \bigcup_{a,b\in\{0,1\}:a\neq b}\{(\text{``t''}, (a, b))\}\right\}\right|=\sum_{\substack{a, b\in\{0,1\}\\ :a\neq b}}\hat{\Xi}_{\rm pub}((\text{``t''}, (a, b))). \label{eq:def_n_bit}
    \end{align}
    \item (Permutation symmetrization) 
    Alice and Bob joint-randomly reorder their keys by consuming ${\cal O}(\hat{n}_{\rm sift}\log\hat{n}_{\rm sift})$-bit local randomness and public communication and obtain sifted keys $\hat{\bm{k}}^{\rm sift}_A$ and $\hat{\bm{k}}^{\rm sift}_B$, respectively.
    \item (Information reconciliation)
    From the numbers $\hat{n}_{\rm sift}$, $\hat{n}_{\rm suc}$, and $\hat{n}_{\rm err}$, Alice estimates the bit error rate.
    Depending on the estimated bit error rate, Alice calculated the required number $K_{\rm EC}(\hat{n}_{\rm sift}, \hat{\Xi}_{\rm pub})$ of syndromes for error correction and the number $K_{\rm PA}^{\rm con}(\hat{n}_{\rm sift}, \hat{\Xi}_{\rm pub})$ (resp.~$K_{\rm PA}(\hat{n}_{\rm sift}, \hat{\Xi}_{\rm pub})$) of privacy amplification for the conventional (resp.~new) analysis.  If $\hat{n}_{\rm fin}\coloneqq n_{\rm fin}(\hat{n}_{\rm sift}, \hat{\Xi}_{\rm pub})$ defined in Eq.~\eqref{eq:final_key_as_function} (with $K_{\rm PA}(\hat{n}_{\rm sift}, \hat{\Xi}_{\rm pub})$ replaced with $K_{\rm PA}^{\rm con}(\hat{n}_{\rm sift}, \hat{\Xi}_{\rm pub})$ for the conventional analysis) is nonzero, then Alice sends Bob the error syndrome encrypted by consuming $K_{\rm EC}(\hat{n}_{\rm sift}, \hat{\Xi}_{\rm pub})$-bit of secret key.
    Bob performs a bit-error correction accordingly and obtains a reconciled key $\hat{\bm{k}}^{\rm rec}_B$.
    \item (Privacy amplification) Alice performs the hash function $H_{\hat{i}}$ randomly chosen from a dual 2-universal family ${\cal H}_d(\hat{n}_{\rm sift},\hat{n}_{\rm fin})=\{H_{i}\}_{i}$ of surjective linear hash functions and obtains the final key $\hat{\bm{k}}^{\rm sift}_A H_{\hat{i}}^{\top}$.  Alice sends $\hat{i}$ to Bob, and Bob performs it on his reconciled key as well to obtain the final key $\hat{\bm{k}}^{\rm rec}_B H_{\hat{i}}^{\top}$.
\end{enumerate}
In the protocol, the random variables $\hat{n}_{\rm sift}$, $\hat{n}_{\rm err}$, and $\hat{n}_{\rm suc}$ can be reconstructed from $\hat{\Xi}_{\rm pub}$, but we explicitly wrote them for later use.
The partition $\Omega_{\rm pub}^{(0)}$ and $\Omega_{\rm pub}^{(1)}$ of the range $\Omega_{\rm pub}$ of $\hat{\xi}^{(i)}_{\rm pub}$ are given by 
\begin{align}
    \Omega_{\rm pub}^{(0)} &= \{(\text{``s''}, 0), (\text{``t''}, \varnothing), (\text{``d''})\} \cup \{(\text{``t''}, (a, b))\}_{a,b\in\{0,1\}},  \\
    \Omega_{\rm pub}^{(1)} &= \{(\text{``s''}, 1)\},
\end{align}
and thus the map $\mathfrak{t}:{\cal X}_A\times{\cal X}_B\to \Omega_{\rm pub}$ is defined accordingly.
The collections of maps $\mathfrak{s}[\cdot]$ and $\mathfrak{s}'[\cdot]$ are thus singletons and defined as
\begin{align}
    &\forall a\in\{0,1\}, & \mathfrak{s}[(\text{``s''}, 1)](a)&=a,  \\ 
    &\forall b\in\{0, 1\}, & \mathfrak{s}'[(\text{``s''}, 1)](b)&=b.
\end{align}
For brevity, we omit the dependency on $(\text{``s''}, 1)$ in the following.

There are several options to ensure the correctness of the key in the information reconciliation step.  For simplicity, we assume that Alice and Bob use the error-correcting code that succeeds in unit probability if an upper bound $\hat{r}_{\rm err}\coloneqq r_{\rm err}(\hat{n}_{\rm sift}, \hat{n}_{\rm suc},\hat{n}_{\rm err})$ on the bit error rate is given and the $h(\hat{r}_{\rm err})$-bit syndrome is sent, i.e., 
\begin{equation}
K_{\rm EC}(\hat{n}_{\rm sift}, \hat{\Xi}_{\rm pub}) = \hat{n}_{\rm sift} h(\hat{r}_{\rm err}).
\end{equation}
(Note that decoding such an error-correcting code is typically inefficient. We do not care about the practicality of the protocol here.)
Thus, for the correctness to be satisfied, Alice needs to estimate an upper bound $\hat{r}_{\rm err}$ on the bit error rate with a failure probability no larger than $\varepsilon_{\rm cor}$.
From Corollary~4.2.12 in Ref.~\cite{Matsuura2023}, we have such a function $r_{\rm err}(\hat{n}_{\rm sift}, \hat{n}_{\rm suc},\hat{n}_{\rm err};\varepsilon_{\rm cor})$ satisfying 
\begin{equation}
    D\left(\frac{\hat{n}_{\rm err}}{\hat{n}_{\rm suc}} \middle\| \frac{\hat{n}_{\rm sift}r_{\rm err}(\hat{n}_{\rm sift}, \hat{n}_{\rm suc},\hat{n}_{\rm err};\varepsilon_{\rm cor}) + \hat{n}_{\rm err}}{\hat{n}_{\rm sift} + \hat{n}_{\rm suc}}\right) = -\frac{\log \varepsilon_{\rm cor}}{\hat{n}_{\rm sift} + \hat{n}_{\rm suc}}, \label{eq:defining_r}
\end{equation}
where $D(p\|q)$ denotes the binary relative entropy given by
\begin{equation}
    D(p\|q) \coloneqq p\log p - p\log q + (1 - p)\log(1-p) - (1-q)\log(1-q).
\end{equation} 

While $K_{\rm EC}(\hat{n}_{\rm sift}, \hat{\Xi}_{\rm pub}) $ is the same between the conventional analysis and our new analysis, the function $K_{\rm PA}^{\rm con}(\hat{n}_{\rm sift}, \hat{\Xi}_{\rm pub})$ is different from $K_{\rm PA}(\hat{n}_{\rm sift}, \hat{\Xi}_{\rm pub})$, which thus changes the final key length $n_{\rm fin}(\hat{n}_{\rm sift}, \hat{\Xi}_{\rm pub})$.
To determine the length $K_{\rm PA}^{\rm con}(\hat{n}_{\rm sift}, \hat{\Xi}_{\rm pub})$ or $K_{\rm PA}(\hat{n}_{\rm sift}, \hat{\Xi}_{\rm pub})$ of the required amount of privacy amplification, we introduce a virtual protocol. We give the conventional way of defining the virtual protocol in the following. Our new way of defining a virtual protocol can easily be inferred from it and the procedure described in Sec.~\ref{sec:virtual_protocol}.

\bigskip 
\noindent --- Conventional virtual protocol of the B92 protocol --- 

\noindent For each $H_i\in{\cal H}_d(\hat{n}_{\rm sift}, \hat{n}_{\rm fin})$, $\overline{H}_i$ is chosen in the way described in Sec.~\ref{sec:virtual_protocol}.
\begin{enumerate}
    \item Alice prepares a state $\ket{\Psi}_{\sysA\sysC}$ given by 
    \begin{equation}
    \ket{\Psi}_{\sysA\sysC}\coloneqq \sum_{a\in\{0,1\}}2^{-1/2}\ket{a}_{\sysA}\ket{\psi_{a}}_{\sysC}, \label{eq:initial_state_B92}
    \end{equation}
    and keeps the system $\sysA$ while sending the system $\sysC$ to Bob through the channel, where it is received by Bob as a system $\sysB$. 
    \item After $n_{\rm tot}$ rounds of quantum communication, for every $i\in\{1,\ldots,n_{\rm tot}\}$, Alice and Bob jointly access systems $A$ and $B$ at the $i$-th round to obtain random variables for the announcements while performing a joint operation specified by CP maps ${\cal M}_{\sysA\sysB}^{(\xi)}$ and ${\cal J}_{\sysA\sysB\to Q}$, which are defined as  
    \begin{align}
        {\cal M}_{\sysA\sysB}^{(\xi)}(\rho_{\sysA\sysB})
        & \coloneqq \begin{cases}
        p_{\rm s}\tr[M_{\sysB}^{\rm failure}\rho_{\sysA\sysB}], & \xi=(\text{``s''},0), \\
        p_{\rm t}\tr[M_{\sysB}^{\rm failure}\rho_{\sysA\sysB}],& \xi=(\text{``t''},\varnothing), \\
        p_{\rm t}\tr\bigl[\bigl(\ketbra{a}{a}_{\sysA}\otimes M_{\sysB}^{b}\bigr)\rho_{\sysA\sysB}\bigr], & \xi=(\text{``t''},(a, b)) \text{ for any }(a,b)\in\{0,1\}^2, \\
        p_{\rm d}\tr[\rho_{\sysA\sysB}], & \xi=(\text{``d''}),
        \end{cases} \\
        \begin{split}
        {\cal J}_{\sysA\sysB\to Q}(\rho_{\sysA\sysB}) 
        &\coloneqq \frac{p_{\rm s}}{2}\left(\sum_{j,j'\in\{0,1\}} \ketbra{j}{j}_A \otimes \ket{j\oplus j'}_{B'}\bra{\psi_{1\oplus j'}^{\perp}}_B\right) \rho_{\sysA\sysB} \\
        &\hspace{3cm}\left(\sum_{k,k'\in\{0,1\}} \ketbra{k}{k}_A \otimes \ket{\psi_{1\oplus k'}^{\perp}}_B\bra{k\oplus k'}_{B'}\right), 
        \end{split}\label{eq:filtering_B92}
    \end{align} 
    where $Q\cong A B'$, and $\oplus$ denotes the binary addition. (Since $\Omega_{\rm pub}^{(1)}$ is a singleton, we omit the dependency of ${\cal J}_{\sysA\sysB\to Q}^{(\xi)}$ on $\xi$.)
    The PVM $\{\Pi_Q[0], \Pi_Q[1]\}$ in Eq.~\eqref{eq:success_map} is defined as
    \begin{equation}
        \forall z\in\{0,1\}, \qquad \Pi_Q[z] = \ketbra{z}{z}_A \otimes I_{B'}. 
    \end{equation}
    Alice and Bob announce $\hat{\xi}_{\rm pub}^{(i)}$ for each $i=1,\ldots, n_{\rm tot}$ in the same way as that in the actual protocol.
    Let $\hat{\Xi}_{\rm pub}$, $\hat{n}_{\rm sift}$, $\hat{n}_{\rm err}$, and $\hat{n}_{\rm suc}$ be defined in the same way as those in the actual protocol.
    \item (Permutation symmetrization) 
    Alice and Bob joint-randomly reorder the quantum systems $Q^{\hat{n}_{\rm sift}}$ by consuming ${\cal O}(\hat{n}_{\rm sift}\log\hat{n}_{\rm sift})$-bit local randomness and public communication.
    \item
    From the numbers $\hat{n}_{\rm sift}$, $\hat{n}_{\rm suc}$, and $\hat{n}_{\rm err}$, Alice estimates the bit error rate.
    Depending on the estimated bit error rate, Alice calculated $K_{\rm EC}(\hat{n}_{\rm sift}, \hat{\Xi}_{\rm pub})$, $K_{\rm PA}^{\rm con}(\hat{n}_{\rm sift}, \hat{\Xi}_{\rm pub})$, and $\hat{n}_{\rm fin} = n_{\rm fin}(\hat{n}_{\rm sift}, \hat{\Xi}_{\rm pub})$. If $\hat{n}_{\rm fin}$ is nonzero, Alice sends Bob $K_{\rm EC}(\hat{n}_{\rm sift}, \hat{\Xi}_{\rm pub})$-bit random bits.
    \item Alice randomly chooses $H_{\hat{i}}\in{\cal H}_d(\hat{n}_{\rm sift},\hat{n}_{\rm fin})$, performs the unitary $U(H_{\hat{i}}, \overline{H}_{\hat{i}})$ on the qubits $A^{\hat{n}_{\rm sift}}$ of $Q^{\hat{n}_{\rm sift}}$, and measures the last $\hat{n}_{\rm sift}-\hat{n}_{\rm fin}$ qubits in the $X$ bases to obtain $\hat{\bm{c}}_A\in\mathbb{F}_2^{\hat{n}_{\rm sift}-\hat{n}_{\rm fin}}$. The unmeasured qubits are named as the system $K_A^{\hat{n}_{\rm fin}}$. Bob performs the $Z$-basis measurement on $B'^{\hat{n}_{\rm sift}}$ to obtain $\hat{\bm{z}}_{B}$. Depending on $\hat{\bm{c}}_A$ and $\hat{\bm{z}}_B$, Alice obtains an estimate $\hat{\bm{x}}_A^*\in\mathbb{F}_2^{\hat{n}_{\rm sift}}$. Alice then performs a phase error correction $Z(\hat{\bm{b}}_A)$ with $\hat{\bm{b}}_A=\hat{\bm{x}}_A^*\overline{G}_{\hat{i}}^{\top}\in\mathbb{F}_2^{\hat{n}_{\rm fin}}$ on the system $K_{\sysA}^{\hat{n}_{\rm fin}}$, which leads to the final state $\rho_{K_{\sysA}^{\hat{n}_{\rm fin}} E}^{\rm virt}$.
\end{enumerate}

The CP map ${\cal J}_{\sysA\sysB\to Q}$ in Eq.~\eqref{eq:filtering_B92} can be regarded as a coherent extraction of Bob's sifted value followed by the CNOT action from $A$ to $B'$, which leaves the phase error to the $X$ basis of $A$ and the bit error to the $Z$ basis of $B'$.
In the conventional analysis, the $Z$-basis value of $A^{\hat{n}_{\rm sift}}$ subsystem of $Q^{\hat{n}_{\rm sift}}$ is supposed to be used directly to generate the final key, unlike our new analysis, which uses that of $K^{\hat{n}_{\rm sift}r}$ after the embedding ${\cal C}_{Q\to K^rQ}^{\otimes \hat{n}_{\rm sift}}$. Thus, the PEC in the conventional analysis is done by estimating the $X$-basis value of $A^{\hat{n}_{\rm sift}}$. 
The subscript $A$ in $\hat{\bm{c}}_A$, $\hat{\bm{x}}^*_A$, and $\hat{\bm{b}}_A$ reflects this difference. 

In the early finite-size analysis of the B92 protocol~\cite{Tamaki2003}, the sequence $\hat{\bm{z}}_B$ is not used for the phase-error estimation.  However, since the bit error and the phase error can be defined simultaneously and they may be correlated, the help of the sequence $\hat{\bm{z}}_B$ may improve the estimation of the phase-error pattern. Indeed, such a correlation has been exploited in the PEC-based security analysis~\cite{Tamaki2006} for the Scarani-Acin-Ribordy-Gisin 2004 protocol~\cite{Scarani2004} and improved its performance.
Thus, we compare our new analysis with this slightly improved version of the conventional analysis, which is expected to be nearly optimal in the conventional framework, as the B92 protocol uses only the number of bit errors and sifted keys to estimate the phase errors.

The estimation protocol of the B92 protocol given above is given as follows.

\bigskip 
\noindent --- Conventional estimation protocol of the B92 protocol ---
\begin{enumerate}
    \item The same as that in the virtual protocol.
    \item[2'.] After $n_{\rm tot}$ rounds of quantum communication, for every $i\in\{1,\ldots,n_{\rm tot}\}$, Alice and Bob jointly access the systems $A$ and $B$ at the $i$-th round for obtaining random variables $(\hat{s}^{(i)},\hat{\xi}_{\rm pub}^{(i)},\hat{\theta}_{\rm aux}^{(i)})\in\{0,1\}\times\Omega_{\rm pub}\times\{0,1\}$ while performing a joint operation specified by CP maps ${\cal M}_{\sysA\sysB}^{(\xi,\theta)}$ and ${\cal J}_{\sysA\sysB\to Q}$, where ${\cal M}_{\sysA\sysB}^{(\xi,\theta)}$ is defined as
    \begin{align}
        {\cal M}_{\sysA\sysB}^{(\xi, \theta)} \coloneqq \begin{cases}
            p_{\rm s}\tr[M_{\sysB}^{\rm failure}\rho_{\sysA\sysB}], & (\xi,\theta)=((\text{``s''},0), 0) \\
        p_{\rm t}\tr[M_{\sysB}^{\rm failure}\rho_{\sysA\sysB}],& (\xi,\theta)=((\text{``t''},\varnothing), 0), \\
        p_{\rm t}\tr\bigl[\bigl(\ketbra{a}{a}_{\sysA}\otimes M_{\sysB}^{b}\bigr)\rho_{\sysA\sysB}\bigr], & (\xi,\theta)=((\text{``t''},(a, b)), 0) \text{ for any }(a,b)\in\{0,1\}^2 \\
        p_{\rm d}\tr[(I_A - M_A^{-})\rho_{\sysA\sysB}], & (\xi, \theta)=((\text{``d''}), 0), \\
        p_{\rm d}\tr[M_A^{-}\rho_{\sysA\sysB}], & (\xi, \theta)=((\text{``d''}), 1),
        \end{cases} \label{eq:unfiltered_B92}
    \end{align}
    In the above, $M_A^-$ is defined as 
    \begin{equation}
        M_A^- \coloneqq \ketbra{\widetilde{1}}{\widetilde{1}}_{A}, \label{eq:minus_POVM}
    \end{equation}
    and $\theta$ is regarded as zero for ${\cal J}_{\sysA\sysB\to Q}$. Alice and Bob announce $\hat{\xi}_{\rm pub}^{(i)}$ for each $i=1,\ldots, n_{\rm tot}$ in the same way as that in the actual protocol. Let $\hat{\Xi}_{\rm pub}$, $\hat{n}_{\rm sift}$, $\hat{n}_{\rm suc}$, and $\hat{n}_{\rm err}$ be defined in the same way as those in the actual protocol. Let $\hat{\Theta}_{\rm aux} = n_{\rm tot}P_{\hat{\bm{\theta}}_{\rm aux}}$, and let $\hat{n}_-$ be defined as 
    \begin{equation}
        \hat{n}_- \coloneqq \left|\left\{i\in\{1,\ldots,n_{\rm tot}\}:\hat{\theta}_{\rm aux}^{(i)}=1\right\}\right| = \hat{\Theta}_{\rm aux}(1). \label{eq:def_n_minus}
    \end{equation}
    \item[3--4.] The same as those in the virtual protocol.
    \item[5'.] Alice performs the $X$-basis measurements on $A^{\hat{n}_{\rm sift}}$ to obtain $\hat{\bm{x}}_A$. Bob performs the $Z$-basis measurement
    on $B'^{\hat{n}_{\rm sift}}$ to obtain $\hat{\bm{z}}_B$. Alice randomly chooses $H_{\hat{i}}\in{\cal H}_d(\hat{n}_{\rm sift},\hat{n}_{\rm fin})$, which amounts to determining $(G_{\hat{i}}, \overline{G}_{\hat{i}})$. She then computes $\hat{\bm{c}}_A=\hat{\bm{x}}_A G_{\hat{i}}^{\top}$. Depending on $\hat{\bm{c}}_A$ and $\hat{\bm{z}}_B$, they obtain an estimate $\hat{\bm{x}}^*_A$ of $\hat{\bm{x}}_A$. Alice then sets $\hat{\bm{b}}_A=\hat{\bm{x}}^*_A \overline{G}_{\hat{i}}^{\top}$ and outputs $\hat{\bm{x}}_A\overline{G}_{\hat{i}}^{\top} - \hat{\bm{b}}_A$.
\end{enumerate}
Again, the random variable $\hat{n}_-$ can be reconstructed from $\hat{\Theta}_{\rm aux}$, but we write it for later use.

\subsection{Conventional phase error correction} \label{sec:conventional_PEC}

For the conventional analysis, we aim to obtain an upper bound on the number of phase error patterns.  For this, we define an empirical probability $\hat{\bm{P}} = (\hat{P}_{00}, \hat{P}_{10}, \hat{P}_{01}, \hat{P}_{11}, \hat{P}_{\rm other})$, where $\hat{P}_{jk}$ is defined as 
\begin{equation}
    \hat{P}_{jk} = \frac{\left|\{i\in\{1,\ldots,n_{\rm tot}\}: \hat{\xi}_{\rm pub}^{(i)}=(\text{``s''},1),\hat{x}_A^{\hat{\tau}(i)}=j, \hat{z}_B^{\hat{\tau}(i)}=k\}\right|}{n_{\rm tot}}.
\end{equation}
Here, $\hat{x}_A^{\hat{\tau}(i)}$ and $\hat{z}_B^{\hat{\tau}(i)}$ are $\hat{\tau}(i)$-th elements of $\hat{\bm{x}}_A$ and $\hat{\bm{z}}_B$, respectively, where $\hat{\tau}:\{1,\ldots,n_{\rm tot}\}\to\{1,\ldots,\hat{n}_{\rm sift}\}$ denotes the sifting followed by the random permutation at Step~3. The relative frequency of the other events is thus denoted as $\hat{P}_{\rm other}$.  From the definition, we have 
\begin{align}
    n_{\rm tot}(\hat{P}_{00} + \hat{P}_{10} + \hat{P}_{01} + \hat{P}_{11}) &= \hat{n}_{\rm sift}, \\
    n_{\rm tot}(\hat{P}_{10} + \hat{P}_{11}) &= \mathrm{wt}(\hat{\bm{x}}_A), \\
    n_{\rm tot}(\hat{P}_{01} + \hat{P}_{11}) &= \mathrm{wt}(\hat{\bm{z}}_B),
\end{align}
where $\mathrm{wt}(\cdot)$ denotes the Hamming weight.
The corresponding POVM element $M_{jk}$ on the system $\sysA\sysB$ for $\hat{P}_{jk}$ at every $i$-th round is given by 
\begin{align}
    M_{jk}&={\cal J}_{\sysA\sysB\to Q}^{\dagger}(\ketbra{j}{j}_A\otimes\ketbra{k}{k}_{B'}).
\end{align}
Note that we have ${\cal J}_{\sysA\sysB\to Q}(\ketbra{\Psi}{\Psi}_{\sysA\sysB}) \propto\ketbra{00}{00}_{AB'}$, reflecting the fact that the bit and the phase error are zero when the channel is ideal.
In relation to other relevant POVM elements introduced in the virtual and the estimation protocol, we have 
\begin{align}
    \sum_{a, b: a = b} \ketbra{a}{a}_A\otimes M_B^b &= M_{00} + M_{10}, \\ 
    \sum_{a, b: a \neq b} \ketbra{a}{a}_A\otimes M_B^b &= M_{01} + M_{11}.
\end{align}

Let $\rho_{\sysA^{n_{\rm tot}}\sysB^{n_{\rm tot}}}$ be a state shared between Alice and Bob at the end of Step~1 in the conventional estimation protocol. If the empirical probability $\hat{\bm{P}}$ is contained in a convex set ${\cal A}(\hat{\Xi}_{\rm pub};\epsilon)$ of probability distributions except for a small failure probability $\epsilon$ in the estimation protocol, i.e., 
\begin{equation}
    \probrho_{\rho_{\sysA^{n_{\rm tot}}\sysB^{n_{\rm tot}}}}\!\left(\hat{\bm{P}}\notin {\cal A}(\hat{\Xi}_{\rm pub};\epsilon)\right) \leq \epsilon, \label{eq:empirical_prob_high_prob_set}
\end{equation}
then, using Lemma 1 and Eq.~(51) of Ref.~\cite{Matsuura2019}, the cardinality of the set ${\cal T}_{\epsilon}(\hat{\bm{z}}_B)$ of phase-error patterns conditioned on a bit-error pattern $\hat{\bm{z}}_B$ is bounded from above, with a high probability, by 
\begin{equation}
    \probrho_{\rho_{\sysA^{n_{\rm tot}}\sysB^{n_{\rm tot}}}}\!\left(|{\cal T}_{\epsilon}(\hat{\bm{z}}_B)| \leq \max_{\hat{\bm{P}}\in{\cal A}(\hat{\Xi}_{\rm pub};\epsilon)} 2^{n_{\rm tot} H({\rm ph}|{\rm bit})_{\hat{\bm{P}}}}\right) \geq 1 -\epsilon, \label{eq:set_possible_phase_err_patterns}
\end{equation}
where 
\begin{equation}
    H({\rm ph}|{\rm bit})_{\hat{\bm{P}}} \coloneqq \frac{n_{\rm tot}(\hat{P}_{00} + \hat{P}_{10})}{\hat{n}_{\rm sift}} h\!\left(\frac{\hat{P}_{00}}{\hat{P}_{00} + \hat{P}_{10}}\right) + \frac{n_{\rm tot}(\hat{P}_{01} + \hat{P}_{11})}{\hat{n}_{\rm sift}} h\!\left(\frac{\hat{P}_{01}}{\hat{P}_{01} + \hat{P}_{11}}\right),
\end{equation}
with a binary entropy function $h(p)\coloneqq -p\log p - (1 - p)\log(1 - p)$.
Then, from the conventional argument of the phase error correction, setting $K_{\rm PA}^{\rm con}(\hat{n}_{\rm sift},\hat{\Xi}_{\rm pub})$ to
\begin{equation}
    K_{\rm PA}^{\rm con}(\hat{n}_{\rm sift},\hat{\Xi}_{\rm pub}) = \max_{\hat{\bm{P}}\in{\cal A}(\hat{\Xi}_{\rm pub};\epsilon)} n_{\rm tot}H({\rm ph}|{\rm bit})_{\hat{\bm{P}}} + s \label{eq:amount_conventional_PA}
\end{equation}
for the privacy amplification at Step~5' ensures 
\begin{equation}
    \probrho_{\rho_{\sysA^{n_{\rm tot}}\sysB^{n_{\rm tot}}}}(\hat{n}_{\rm fin}\geq 1, \hat{\bm{x}}_A\neq \hat{\bm{x}}_A^*) \leq \epsilon + 2^{-s}, \label{eq:conventional_PER_bound}
\end{equation}
which then leads to $\sqrt{2(\epsilon+2^{-s})}$-secrecy~\cite{Koashi2009,Matsuura2023}. 
Thus, in the following, we find a convex set ${\cal A}(\hat{\Xi}_{\rm pub};\epsilon)$ that satisfies Eq.~\eqref{eq:empirical_prob_high_prob_set}.
This can be done by finding a family $\{V_{\epsilon_t,\epsilon_v}(\Xi)\}_{\Xi}$ of subsets of density operators as in Eq.~\eqref{eq:hull}, which gives restrictions for the allowed empirical probability in expectation, and applying another concentration inequality to evaluate the deviation of the empirical probability from the expectation in combination with the i.i.d.~reduction.

Since Alice prepares the state $\ket{\Psi}_{\sysA\sysC}$ given in Eq.~\eqref{eq:initial_state_B92} at each round, which satisfies 
\begin{equation} \|\bra{\widetilde{1}}_{\sysA}\ket{\Psi}_{\sysA\sysC}\|^2=\alpha^2,
\end{equation} 
the random variable $\hat{n}_{-}$ defined in the estimation protocol through Eqs.~\eqref{eq:unfiltered_B92} and \eqref{eq:minus_POVM} satisfies, from the Chernoff-Hoeffding bound~\cite{Hoeffding1963}, 
\begin{align}
    \probrho_{\rho_{\sysA^{n_{\rm tot}}\sysB^{n_{\rm tot}}}}\!\left(\hat{n}_{-} > n_{\rm tot}(p_{\rm d}\alpha^2 + \delta_1(p_{\rm d}\alpha^2,n_{\rm tot},\varepsilon_1))\right) 
    & = P_{p_{\rm d}\alpha^2}^{\times n_{\rm tot}}\!\left(\hat{n}_{-} > n_{\rm tot}(p_{\rm d}\alpha^2 + \delta_1(p_{\rm d}\alpha^2,n_{\rm tot},\varepsilon_1))\right) \\
    & \leq \varepsilon_1, \label{eq:n_minus_out}
\end{align}
where the positive function $\delta_1(p,n,\varepsilon)$ is defined to satisfy~\cite{Matsuura2021, Matsuura2023}
\begin{equation}
    \begin{cases}
    -\log\varepsilon = 
    n D(p + \delta_1(p,n,\varepsilon) \| p) & \text{if }\varepsilon \geq p^{n},\\
        \delta_1(p,n,\varepsilon) = 1 - p & \text{Otherwise}.
    \end{cases}
\end{equation}
This allows us to restrict our attention to the case $\hat{n}_{-} \leq n_{\rm tot}(p_{\rm d}\alpha^2 + \delta_1(p_{\rm d}\alpha^2,n_{\rm tot},\varepsilon_1))$ allowing the failure probability $\varepsilon_1$, which amounts to determining $\Upsilon_{\varepsilon_1}$ in Eq.~\eqref{eq:bound_out_of_Upsilon}, i.e.,
\begin{equation}
    \Upsilon_{\varepsilon_1} = \left\{(\Xi,\Theta):\Theta(1) \leq n_{\rm tot}\!\left(p_{\rm d}\alpha^2 + \delta_1(p_{\rm d}\alpha^2,n_{\rm tot},\varepsilon_1)\right)\right\}, \label{eq:Upsilon_B92}
\end{equation}
where we used Eq.~\eqref{eq:def_n_minus}.
Then, for any set ${\cal A}$ of empirical probabilities, we have 
\begin{equation}
    \probrho_{\rho_{\sysA^{n_{\rm tot}}\sysB^{n_{\rm tot}}}}\!\left(\hat{\bm{P}}\notin{\cal A}\right) \leq \varepsilon_1 + \probrho_{\rho_{\sysA^{n_{\rm tot}}\sysB^{n_{\rm tot}}}}\!\left((\hat{\Xi}_{\rm pub},\hat{\Theta}_{\rm aux})\in\Upsilon_{\varepsilon_1},\hat{\bm{P}}\notin{\cal A}\right). \label{eq:in_Upsilon_B92}
\end{equation}

From the permutation symmetry of the protocol, one can apply Eq.~\eqref{eq:bound_by_iid} to replace the state $\rho_{\sysA^{n_{\rm tot}}\sysB^{n_{\rm tot}}}$ that Alice and Bob share with a mixture of i.i.d.~quantum states $\{\rho^{\otimes n_{\rm tot}}\}_{\rho \in {\cal D}({\cal H}_{\sysA\sysB})}$, i.e., 
\begin{equation}
    \begin{split}
    &\probrho_{\rho_{\sysA^{n_{\rm tot}}\sysB^{n_{\rm tot}}}}\!\left((\hat{\Xi}_{\rm pub},\hat{\Theta}_{\rm aux})\in\Upsilon_{\varepsilon_1},\hat{\bm{P}}\notin{\cal A}\right) \\
    &\leq f_q(n_{\rm tot}, 4) \max_{\rho\in{\cal D}({\cal H}_{\sysA\sysB})} \probrho_{\rho^{\otimes n_{\rm tot}}}\!\left((\hat{\Xi}_{\rm pub},\hat{\Theta}_{\rm aux})\in\Upsilon_{\varepsilon_1},\hat{\bm{P}}\notin{\cal A}\right). \label{eq:iid_reduction_B92}
    \end{split}
\end{equation}  
Now, we will construct the family of sets $\{V_{\epsilon_t,\epsilon_v}(\Xi)\}_{\Xi}$ by following the procedures of Eqs.~\eqref{eq:split_conditions} and \eqref{eq:respective_conditions} in Remark~\ref{rem:optimization_region}. First, to construct $(\Xi,\Theta)$-dependent sets $\{\rho\in{\cal D}({\cal H}_{AB}):(\Xi,\Theta)\in U_{\epsilon_{v_j}}^j(\rho)\}$, we use the following lemma. 
\begin{lemma} \label{lem:set_compatible_with_outcome}
    Let $\epsilon$ be a constant with $0<\epsilon<1$ and let $n$ be a natural number.  Let $\{1-M, M\}$ be a POVM and $\hat{X}\in \{0,1\}$ be the measurement outcome.  Let $\{{\cal B}(q)\}_{q > 0}$ be an increasing net of convex sets of density operators under inclusion indexed by a positive real number $q$ defined as 
    \begin{equation}
        {\cal B}(q)\coloneqq \left\{\rho\in{\cal D}({\cal H}): \tr [\rho M] \leq q \right\}. \label{eq:precondition}
    \end{equation}
    Then, there exists a positive monotone increasing function $q_{n,\epsilon}(p)$ of $p\in[0, 1)$ such that $q_{n,\epsilon}(p)>p$ and for any density operator $\rho\notin{\cal B}(q_{n,\epsilon}(p))$, we have 
    \begin{equation}
        P_{\hat{X}\sim \{\tr[\rho (1-M)],\tr[\rho M]\}}^{\times n}\left[\frac{1}{n}\sum_{i=1}^n \hat{X}_i \leq p\right] \leq \epsilon, \label{eq:prob_obtaining_type}
    \end{equation}
    where $\{\hat{X}_i\}_{i=1}^n$ denotes i.i.d.~binary random variables with each probability distribution given by $\{\tr[\rho (1-M)],\tr[\rho M]\}$.  More explicitly, the function $q_{n,\epsilon}(p)$ is defined to satisfy the following:
    \begin{equation}
        q_{n,\epsilon}(p)>p \text{ and } -\log\epsilon =  n D(p\|q_{n,\epsilon}(p)). \label{eq:defining_delta_2}
    \end{equation}
\end{lemma}
\begin{proof}
    Let $q_{n,\epsilon}(p)$ be as defined in Eq.~\eqref{eq:defining_delta_2}.
    Fix a density operator $\rho\notin{\cal B}(q_{n,\epsilon}(p))$.  Then, from a Chernoff-Hoeffding bound~\cite{Hoeffding1963}, we have 
    \begin{equation}
        P_{\hat{X}\sim \{\tr[\rho (1-M)],\tr[\rho M]\}}^{\times n}\left[\frac{1}{n}\sum_{i=1}^n \hat{X}_i \leq p\right] \leq 2^{-n D(p\|\tr[\rho M])}.
    \end{equation}
    Since the binary relative entropy $D(p\|q)$ is monotone increasing for $q$ when $q > p$ with a fixed $p$, we have 
    \begin{equation}
        P_{\hat{X}\sim \{\tr[\rho (1-M)],\tr[\rho M]\}}^{\times n}\left[\frac{1}{n}\sum_{i=1}^n \hat{X}_i \leq p\right] \leq \max_{\rho\notin{\cal B}(q_{n,\epsilon}(p))} 2^{-n D(p\|\tr[\rho M])} \leq 2^{-n D(p\|q_{n,\epsilon}(p))} = \epsilon,
    \end{equation}
    where we used Eq.~\eqref{eq:defining_delta_2} for the last equality.
    Since the above statement holds for any $\rho\notin{\cal B}(q_{n,\epsilon}(p))$, Eq.~\eqref{eq:prob_obtaining_type} holds.

    The fact that $q_{n,\epsilon}(p)$ is monotone increasing for $p$ can be shown via a proof by contradiction. For $0<p_1<p_2$, assume that $q_{n,\epsilon}(p_1) \geq q_{n,\epsilon}(p_2)$ holds. Then, from the definition of the binary relative entropy, $D(p\|q_1) < D(p\|q_2)$ for $p<q_1<q_2$ and $D(p_1\|q) > D(p_2\|q)$ for $p_1 < p_2 < q$ hold. From the chain of inequalities $0<p_1<p_2<q_{n,\epsilon}(p_2) \leq q_{n,\epsilon}(p_1)$, we have 
    \begin{equation}
        -\log\epsilon = n D(p_1\|q_{n,\epsilon}(p_1)) \geq n D(p_1\|q_{n,\epsilon}(p_2)) > nD(p_2 \| q_{n,\epsilon}(p_2)) = -\log\epsilon,
    \end{equation}
    which is the contradiction.
\end{proof}
The above lemma immediately implies that for any $\rho\in{\cal D}({\cal H})$, we have 
\begin{equation}
    P_{\hat{X}\sim \{\tr[\rho (1-M)],\tr[\rho M]\}}^{\times n}\left[\rho \notin {\cal B}\left(q_{n,\epsilon}\Bigl(\frac{1}{n}\sum_{i=1}^n \hat{X}_i\Bigr)\right)\right] \leq \epsilon, \label{eq:prob_out_of_set}
\end{equation}
since ${\cal B}(q_1)\subseteq {\cal B}(q_2)$ for $q_1 \leq q_2$ and $q_{n,\epsilon}(p)$ is monotone increasing for $p$. To apply the above lemma to the B92 protocol, suppose $M$ in the lemma as ${\cal J}_{AB\to Q}^{\dagger}(I_Q)$, $\hat{X}_i$ as $\hat{s}^{(i)}$, and $n$ as $n_{\rm tot}$. Then, we have, for any $\rho\in{\cal D}({\cal H}_{\sysA\sysB})$,
\begin{equation}
    P_{\hat{s}\sim\{\tr[\rho(1-{\cal J}_{AB\to Q}^{\dagger}(I_Q))],\tr[\rho {\cal J}_{AB\to Q}^{\dagger}(I_Q)]\}}^{\times n_{\rm tot}} \left[\rho\notin{\cal B}\left(q_{n_{\rm tot},\varepsilon_2}(\hat{n}_{\rm sift}/n_{\rm tot})\right)\right] \leq \varepsilon_2. \label{eq:example_range_states}
\end{equation}
From Eq.~\eqref{eq:precondition}, this implies that, given $\hat{n}_{\rm sift}$, we can restrict a set of density operators compatible with the observed value of $\hat{n}_{\rm sift}$ as $\{\rho\in{\cal D}({\cal H}_{\sysA\sysB}):\tr[ {\cal J}_{AB\to Q}(\rho)]\leq q_{n_{\rm tot},\varepsilon_2}(\hat{n}_{\rm sift}/n_{\rm tot})\}$, allowing the failure probability $\varepsilon_2$. 
Since this set is $\hat{\Theta}_{\rm aux}$-independent as $\hat{n}_{\rm sift}=\hat{\Xi}_{\rm pub}((\text{``s''}, 1))$, it directly plays the role of $V^j_{\epsilon_{t_j},\epsilon_{v_j}}(\hat{\Xi}_{\rm pub})$ in Eq.~\eqref{eq:respective_conditions} of Remark~\ref{rem:optimization_region} with $\epsilon_{t_j}=0$ and $\epsilon_{v_j}=\varepsilon_2$. We thus define 
\begin{equation}
    V^1_{0,\varepsilon_2}(\hat{\Xi}_{\rm pub}) \coloneqq \{\rho\in{\cal D}({\cal H}_{\sysA\sysB}):\tr[ {\cal J}_{AB\to Q}(\rho)]\leq q_{n_{\rm tot},\varepsilon_2}(\hat{n}_{\rm sift}/n_{\rm tot})\}. \label{eq:V^1}
\end{equation}
By applying similar arguments to $1-\hat{s}^{(i)}$ and $\hat{\xi}_{\rm pub}^{(i)}$, we have 
\begin{align}
    V^2_{0,\varepsilon_2}(\hat{\Xi}_{\rm pub}) &\coloneqq \left\{\rho\in{\cal D}({\cal H}_{\sysA\sysB}):1 - \tr[{\cal J}_{AB\to Q}(\rho)] \leq q_{n_{\rm tot},\varepsilon_2}(1 - \hat{n}_{\rm sift}/n_{\rm tot})\right\}, \\
    \begin{split}
    V^3_{0,\varepsilon_2}(\hat{\Xi}_{\rm pub}) &\coloneqq \left\{\rho\in{\cal D}({\cal H}_{\sysA\sysB}):\left({\cal M}_{\sysA\sysB}^{((\text{``t''},(1,0)),0)} + {\cal M}_{\sysA\sysB}^{((\text{``t''},(0,1)),0)}\right)(\rho) \leq q_{n_{\rm tot},\varepsilon_2}(\hat{n}_{\rm err}/n_{\rm tot})\right\},
    \end{split}
\end{align}
which respectively contains all the possible states except those with at most $\varepsilon_2$ realization probabilities.
We also apply Lemma~\ref{lem:set_compatible_with_outcome} to $\hat{\theta}_{\rm aux}^{(i)}$ to obtain the $\hat{\Theta}_{\rm aux}$-dependent set $\{\rho\in{\cal D}({\cal H}_{\sysA\sysB}): {\cal M}_{\sysA\sysB}^{((\text{``d''}), 1)}(\rho) \leq q_{n_{\rm tot},\varepsilon_2}(\hat{n}_-/n_{\rm tot})\}$, which contains all the relevant states except those that give at most $\varepsilon_2$ realization probability. Combining this with Eqs.~\eqref{eq:def_n_minus} and \eqref{eq:Upsilon_B92}, we can define 
\begin{align}
    V^4_{\varepsilon_1,\varepsilon_2}(\Xi) &\coloneqq \bigcup_{(\Xi,\Theta)\in\Upsilon_{\varepsilon_1}}\{\rho\in{\cal D}({\cal H}_{\sysA\sysB}): {\cal M}_{\sysA\sysB}^{((\text{``d''}), 1)}(\rho) \leq q_{n_{\rm tot},\varepsilon_2}({\Theta}_{\rm aux}(1)/n_{\rm tot})\} \\
    &= \{\rho\in{\cal D}({\cal H}_{\sysA\sysB}): {\cal M}_{\sysA\sysB}^{((\text{``d''}), 1)}(\rho) \leq q_{n_{\rm tot},\varepsilon_2}(p_{\rm d}\alpha^2+\delta_1(p_{\rm d}\alpha^2,n_{\rm tot},\varepsilon_1))\}, 
\end{align}
which is a $\hat{\Xi}_{\rm pub}$-independent convex set. Now, we define $\{V_{\varepsilon_1,4\varepsilon_2}(\Xi)\}_{\Xi}$ as 
\begin{equation}
    V_{\varepsilon_1,4\varepsilon_2}(\Xi)\coloneqq \biggl(\bigcap_{j=1}^3 V_{0,\varepsilon_2}^{j}(\Xi)\biggr) \cap V_{\varepsilon_1,\varepsilon_2}^4(\Xi), \label{eq:intersection_Vs}
\end{equation}
which then gives 
\begin{equation}
    \begin{split}
        &\max_{\rho\in{\cal D}({\cal H}_{\sysA\sysB})} \probrho_{\rho^{\otimes n_{\rm tot}}}\!\left((\hat{\Xi}_{\rm pub},\hat{\Theta}_{\rm aux})\in\Upsilon_{\varepsilon_1},\hat{\bm{P}}\notin{\cal A}\right) \\
        &\leq 4\varepsilon_2 + \max_{\rho\in{\cal D}({\cal H}_{\sysA\sysB})} \probrho_{\rho^{\otimes n_{\rm tot}}}\!\left(\rho\in V_{\varepsilon_1,4\varepsilon_2}(\hat{\Xi}_{\rm pub}),\hat{\bm{P}}\notin{\cal A}\right).
    \end{split} \label{eq:in_V_except_small}
\end{equation}

Now, for $q\in[0,1]$, let ${\cal P}$ be the set of probability vectors with five elements, and let ${\cal A}[\bm{\gamma},q]$ be its subset defined with a vector $\bm{\gamma}=(\gamma_1,\gamma_2,\gamma_3,\gamma_4,\gamma_5)$ as 
\begin{align}
    {\cal A}[\bm{\gamma},q] \coloneqq  \left\{\bm{p}\in{\cal P}: \bm{\gamma}\cdot\bm{p}  \leq 0, p_5 = 1 - q \right\}. \label{eq:def_A}
\end{align}
Then, for a given $\hat{\Xi}_{\rm pub}=\Xi$, an upper bound on the probability that the empirical probability $\hat{\bm{P}}$ is not in ${\cal A}[\bm{\gamma}, \hat{n}_{\rm sift} / n_{\rm tot}]$ is given from Sanov's theorem~\cite{Sanov1957,Csiszar1984} by 
\begin{equation}
    \forall \rho \in V_{\varepsilon_1,4\varepsilon_2}(\Xi), \quad  P_{\rho}^{\times n_{\rm tot}}\!\left(\hat{\bm{P}} \notin {\cal A}[\bm{\gamma},\hat{n}_{\rm sift} / n_{\rm tot}]\right) \leq \sup_{\bm{p}\notin {\cal A}[\bm{\gamma},\hat{n}_{\rm sift}/ n_{\rm tot}]} 2^{- n_{\rm tot}D(\bm{p} \| \bm{q}(\rho))}, \label{eq:bound_empirical_prob}
\end{equation}
where $\bm{q}(\rho)$ is given by 
\begin{equation}
    \begin{split}
    \bm{q}(\rho) = \Bigl(\tr[\rho M_{00}], \tr[\rho M_{10}], \tr[\rho M_{01}], \tr[\rho M_{11}], \tr[\rho(1 - {\cal J}_{\sysA\sysB\to Q}^{\dagger}(I_Q))] \Bigr).
    \end{split} \label{eq:def_q_rho}
\end{equation}
Let $\bm{\gamma}^*(\Xi;\varepsilon_2)$ be a solution that satisfies, for any $\Xi$,
\begin{equation}
    \max_{\rho\in V_{\varepsilon_1,4\varepsilon_2}(\Xi)} \sup_{\bm{p}\notin {\cal A}[\bm{\gamma},\hat{n}_{\rm sift}/ n_{\rm tot}]}  2^{- n_{\rm tot}D(\bm{p} \| \bm{q}(\rho))} \leq \varepsilon_2, \label{eq:solution_sanov}
\end{equation}
which is not unique.  We come back to this non-uniqueness later, but any solution is allowed for the security proof.  The right-hand side of the above equation includes a nonlinear convex SDP, but it can be solved as a nonlinear convex optimization problem due to the small size of the matrix involved.
Combining Eqs.~\eqref{eq:bound_empirical_prob} and \eqref{eq:solution_sanov}, we have
\begin{equation}
    \max_{\rho\in{\cal D}({\cal H}_{AB})}P_{\rho}^{\times n_{\rm tot}}\!\left(\rho \in V_{\varepsilon_1,4\varepsilon_2}(\hat{\Xi}_{\rm pub}),\hat{\bm{P}} \notin {\cal A}[\bm{\gamma}^*(\hat{\Xi}_{\rm pub};\varepsilon_2),\hat{n}_{\rm sift} / n_{\rm tot}]\right) \leq \varepsilon_2. \label{eq:small_by_sanov}
\end{equation}
Thus, by combining Eqs.~\eqref{eq:in_Upsilon_B92}, \eqref{eq:iid_reduction_B92}, \eqref{eq:in_V_except_small}, and \eqref{eq:small_by_sanov}, we have
\begin{equation}
    \probrho_{\rho_{\sysA^{n_{\rm tot}}\sysB^{n_{\rm tot}}}}\!\left(\hat{\bm{P}} \notin {\cal A}[\bm{\gamma}^*(\hat{\Xi}_{\rm pub};\varepsilon_2),\hat{n}_{\rm sift}/n_{\rm tot}]\right) \leq  \varepsilon_1 + 5\varepsilon_2f_q(n_{\rm tot}, 4),
\end{equation}
which then shows that ${\cal A}[\bm{\gamma}^*(\hat{\Xi}_{\rm pub};\varepsilon_2),\hat{n}_{\rm sift}/n_{\rm tot}]$ plays the role of ${\cal A}(\hat{\Xi}_{\rm pub};\epsilon)$ in Eq.~\eqref{eq:empirical_prob_high_prob_set} with $\epsilon=\varepsilon_1 + 5\varepsilon_2f_q(n_{\rm tot}, 4)$. (Notice that $\hat{n}_{\rm sift}$ can be obtained from $\hat{\Xi}_{\rm pub}$.)
Thus, from Eqs.~\eqref{eq:empirical_prob_high_prob_set}, \eqref{eq:amount_conventional_PA}, and~\eqref{eq:conventional_PER_bound}, setting $K_{\rm PA}^{\rm con}(\hat{n}_{\rm sift},\hat{\Xi}_{\rm pub})$ to 
\begin{equation}
    K_{\rm PA}^{\rm con}(\hat{n}_{\rm sift},\hat{\Xi}_{\rm pub}) = \max_{\hat{\bm{P}} \in {\cal A}[\bm{\gamma}^*(\hat{\Xi}_{\rm pub};\varepsilon_2),\hat{n}_{\rm sift}/n_{\rm tot}]} n_{\rm tot}H({\rm ph}|{\rm bit})_{\hat{\bm{P}}} + s \label{eq:key_rate_conventional}
\end{equation}
for the privacy amplification ensures $\sqrt{2(\varepsilon_1 + 5\varepsilon_2f_q(n_{\rm tot}, 4) + 2^{-s})}$-secrecy.  To obtain a better key rate, one needs to optimize the dual parameters $\{\bm{\gamma}^*(\Xi;\varepsilon_2)\}_{\Xi}$, which exploits the arbitrariness of the solution of Eq.~\eqref{eq:solution_sanov}.  One can numerically solve this final optimization problem to minimize $K_{\rm PA}^{\rm con}(\hat{n}_{\rm sift},\hat{\Xi}_{\rm pub})$, which results in maximizing the final key length, but any heuristic choice of $\bm{\gamma}^*(\hat{\Xi}_{\rm pub};\varepsilon_2)$ leads to a lower bound on the secure key rate at least.

\subsection{New phase error correction based on the universal coding} \label{sec:new_PEC}
In our newly developed phase error correction based on the universal decoding of classical source compression with quantum side information, we will follow the same procedure until Step~5 in the virtual protocol and then apply the isometry $C_{Q\to KQ}^{\otimes \hat{n}_{\rm sift}}$ to extract a system $K^{\hat{n}_{\rm sift}}$ with $\dim{\cal H}_K=2$. The estimation protocol should be modified accordingly. Note that the conventional PEC described in the previous section can also be regarded as applying $C_{Q\to KQ}^{\otimes \hat{n}_{\rm sift}}$ first and then performing the $X$-basis and $Z$-basis measurements on the subsystems $A^{\hat{n}_{\rm sift}}$ and $B'^{\hat{n}_{\rm sift}}$ of $Q^{\hat{n}_{\rm sift}}$, respectively, to estimate the $X$-basis values $\hat{\bm{x}}$ of the system $K^{\hat{n}_{\rm sift}}$ in the estimation protocol, since the $X$-basis measurement of the subsystem $A^{\hat{n}_{\rm sift}}$ in this scenario amounts to observing $\hat{\bm{x}}+\hat{\bm{x}}_A$ from Eqs.~\eqref{eq:Z_list}--\eqref{eq:Z_restriction}, and thus estimating $\hat{\bm{x}}_A$ is equivalent to estimating $\hat{\bm{x}}$.

Applying the unitary $U(H_{\hat{i}},\overline{H}_{\hat{i}})$ on $K^{\hat{n}_{\rm sift}}$ and measuring the last $\hat{n}_{\rm sift}-\hat{n}_{\rm fin}$ qubits in the $X$ bases, Alice and Bob perform universal source compression with quantum side information by the measurement outcomes and the quantum state on $Q^{\hat{n}_{\rm sift}}$ to identify $\hat{\bm{x}}$.
We use the same sets $\Upsilon_{\varepsilon_1}$ in Eq.~\eqref{eq:Upsilon_B92} and $\{V_{\varepsilon_1,4\varepsilon_2}(\Xi)\}_{\Xi}$ in Eq.~\eqref{eq:intersection_Vs} for the new analysis.
Then, from Eq.~\eqref{eq:choice_of_final_key}, the function $K_{\rm PA}(\hat{n}_{\rm sift},\hat{\Xi}_{\rm pub})$ is set to be, for an arbitrary $t\in[0,1]$,
\begin{equation}
    \begin{split}
    K_{\rm PA}(\hat{n}_{\rm sift},\hat{\Xi}_{\rm pub}) &= \left\lceil \max_{\rho \in V_{\varepsilon_1,4\varepsilon_2}(\hat{\Xi}_{\rm pub})} n_{\rm tot} H_{1-t}^{\uparrow,\leq}(X|Q)_{\mathfrak{x}_{XQ}(\rho_{\sysA\sysB})}  + 18\log(\hat{n}_{\rm sift}+1) +\frac{\log(1/\varepsilon_2)}{t} \right\rceil,
    \end{split} \label{eq:key_rate_new}
\end{equation}
where we used $p=2$, $r=1$, and $d_{Q}=4$ in this case, and
\begin{equation}
    \mathfrak{x}_{XQ}(\rho_{\sysA\sysB}) = {\cal P}_{K\to X}\circ {\cal C}_{Q\to KQ}\circ {\cal J}_{\sysA\sysB\to Q}(\rho_{\sysA\sysB}).
\end{equation}
Then, we have, from Eqs.~\eqref{eq:reinterpretation_prop}--\eqref{eq:remaining_term_bound},
\begin{equation}
    \max_{\rho\in{\cal D}({\cal H}_{AB})} \probrho_{\rho^{\otimes n}}\left(\rho\in V_{\varepsilon_1,4\varepsilon_2}(\hat{\Xi}_{\rm pub}),\hat{n}_{\rm fin} \geq 1, \hat{\bm{x}}\neq\hat{\bm{x}}^{*}\right) \leq \varepsilon_2. \label{eq:failure_source_compression_B92}
\end{equation}
We heuristically minimize the right-hand side of Eq.~\eqref{eq:key_rate_new} over $t\in[0,1]$ as mentioned earlier.  Finally, combining Eqs.~\eqref{eq:failure_phase_error_correction}, \eqref{eq:bound_by_iid}, \eqref{eq:connection_to_source_compression}, \eqref{eq:n_minus_out}, \eqref{eq:Upsilon_B92}, \eqref{eq:example_range_states}--\eqref{eq:intersection_Vs}, and \eqref{eq:failure_source_compression_B92}, we have
\begin{equation}
    \probrho_{\rho_{A^{n_{\rm tot}}B^{n_{\rm tot}}}}(\hat{n}_{\rm fin}\geq 1, \hat{\bm{x}}\neq\hat{\bm{x}}^*) \leq \varepsilon_1 + 5\varepsilon_2 f_q(n_{\rm tot}, 4),
\end{equation}
which then implies the $\sqrt{2(\varepsilon_1 + 5\varepsilon_2 f_q(n_{\rm tot}, 4))}$-secrecy of the actual protocol.

\subsection{Numerical simulation under depolarizing channel} \label{sec:numerical_simulation}
We compare the key rates obtained through Eqs.~\eqref{eq:key_rate_conventional} and \eqref{eq:key_rate_new} with the conventional and our new analyses, respectively, under the depolarizing channel ${\cal N}_p$ defined as
\begin{equation}
    {\cal N}_p(\rho_B) = (1 - p) \rho_B + p\frac{I_B}{2}. \label{eq:depolarizing_channel}
\end{equation} 
For simplicity, we set $p_{\rm s}=p_{\rm t}=p_{\rm d}=1/3$.  Furthermore, we set $\alpha=0.38$, which is a good choice for a relatively high depolarizing parameter $p\sim 4.5\%$~\cite{Tamaki2003, Renner2005, Matsumoto2013, Sasaki2015}.  The correctness and secrecy parameters $\varepsilon_{\rm cor}$ and $\varepsilon_{\rm sec}$ are both set to be $2^{-50}$.

Although the optimization problems to obtain $K_{\rm PA}^{\rm con}$ and $K_{\rm PA}$ in Eqs.~\eqref{eq:solution_sanov} and \eqref{eq:key_rate_conventional} for the conventional analysis and Eq.~\eqref{eq:key_rate_new} for the new analysis both include nonlinear convex SDP, the problems for the conventional analysis can be solved by the convex optimization package CVXPY~\cite{Steven2016} with a reasonable amount of time by reformulating them to entry-wise convex optimization problems.
To find a good choice for the dual parameters $\bm{\gamma}^*(\hat{\Xi}_{\rm pub};\varepsilon_2)$ in Eqs.~\eqref{eq:key_rate_conventional}, we first solve the following optimization problem: 
\begin{equation}
    \max_{\bm{p}:\bm{p}=\bm{q}(\rho),\exists\rho\in V_{\varepsilon_1,4\varepsilon_2}(\hat{\Xi}_{\rm pub})} H({\rm ph}|{\rm bit})_{\bm{p}},
\end{equation}
where $\bm{q}(\rho)$ is defined in Eq.~\eqref{eq:def_q_rho}. Then, the dual solution for this problem gives coefficients $\bm{\gamma}_{\rm sol}$ as in Eq.~\eqref{eq:def_A} from the equality constraint $\bm{p}=\bm{q}(\rho)$. We shift these coefficients $\bm{\gamma}_{\rm sol}$ by $c$, i.e., we set $\bm{\gamma}^* = \bm{\gamma}_{\rm sol} - (c, c, c, c, c)$, so that Eq.~\eqref{eq:solution_sanov} is satisfied. This shift of coefficients leads to a parallel shift of the affine boundary of Eq.~\eqref{eq:def_A} from the tangent hyperplane $\{\bm{p}:\bm{\gamma}_{\rm sol}\cdot\bm{p}=0\}$ of $\{\bm{p}:\bm{p}=\bm{q}(\rho),\exists\rho\in V_{\varepsilon_1,4\varepsilon_2}(\hat{\Xi}_{\rm pub})\}$ (recall that $\bm{p}\cdot(c,c,c,c,c)=c$).
While this heuristic choice of $\bm{\gamma}^*$ may not be optimal in general, it should be close to the optimal choice when $n_{\rm tot}\gg 1$.

The problem Eq.~\eqref{eq:key_rate_new} for the new analysis, on the other hand, involves nonlinear matrix functions and is not easily handled by convex optimization packages such as CVXPY~\cite{Steven2016}.  We thus use the QICS package~\cite{He2024}, which is developed for optimization problems involving the quantum relative entropy and the quantum R\'enyi divergences.

\begin{figure}[t]
    \centering
    \includegraphics[width=0.98\linewidth]{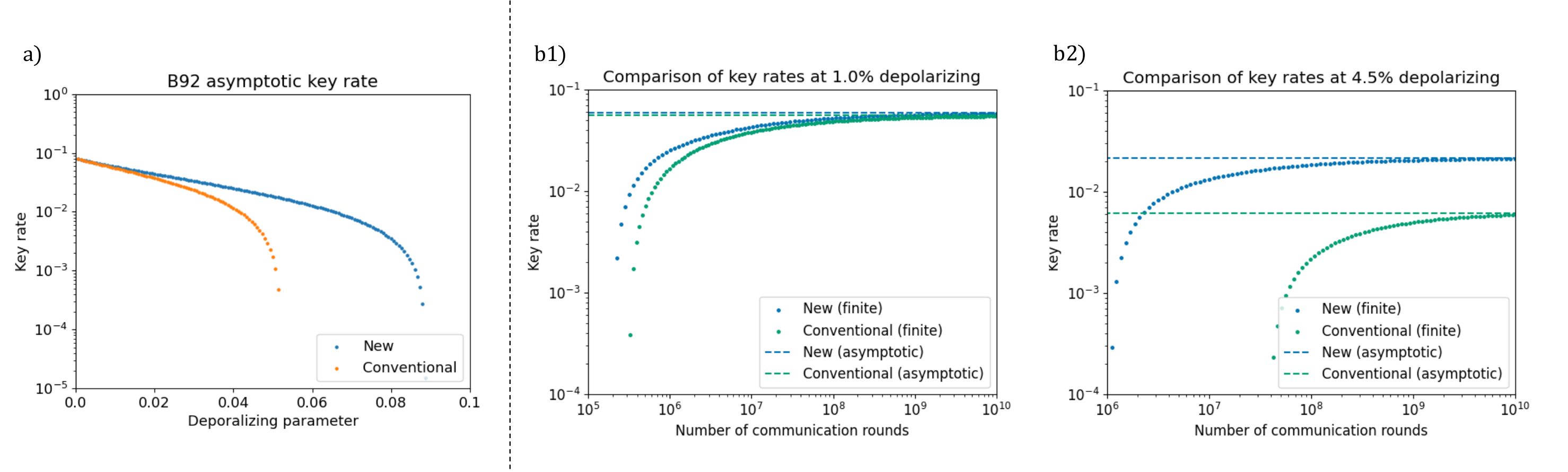}
    \caption{Comparison of the key rates with the conventional PEC-type analysis and our new analysis based on the universal coding for the B92 protocol under the depolarizing channel.  Figure~a) shows the comparison of the asymptotic key rates of the conventional analysis and our new analysis under the depolarizing channel with various depolarizing parameters $p$ in Eq.~\eqref{eq:depolarizing_channel}.  Figure~b1) and~b2) show the comparison of the finite-size key rates between the two with the depolarizing parameter $p$ taken to be 1\% in b1) and 4.5\% in b2). As the figure suggests, the asymptotic key rates differ largely when the depolarizing parameter is large, meaning that the bit-error rate is large.  Furthermore, our new analysis has finite-size key rates superior to the conventional analysis as suggested in Figs.~b1) and b2). For a large depolarizing parameter as in Fig.~b2), we have a two orders of magnitude improvement for the minimum number of rounds to generate a nonzero key.}
    \label{fig:key_rate_comparison}
\end{figure}

Figure~\ref{fig:key_rate_comparison} shows the comparison of the key rates of the conventional analysis and our new analysis in the asymptotic case against the depolarizing parameter $p$ in a), and in the finite-size case against the number of communication rounds with the depolarizing parameter $p$ taken to be 1\% in b1) and 4.5\% in b2). 
As expected from previous works~\cite{Tamaki2003, Renner2005, Matsumoto2013, Sasaki2015}, the asymptotic key rates have a large gap between the two analyses in the high bit-error regime, while the gap closes in the limit of the vanishing bit error.  (Note that a higher depolarizing parameter leads to a higher bit-error rate.) 
In Figs.~\ref{fig:key_rate_comparison} b1) and b2), our new analysis even improves finite-size key rates upon the conventional analysis, and the improvement is larger as the depolarizing parameter gets larger, as Fig.~\ref{fig:key_rate_comparison} b2) suggests.  We should note, however, that the finite-size performance of the conventional analysis has room for improvement since the i.i.d.~reduction performed in Sec.~\ref{sec:conventional_PEC} may not be the best strategy we can take as suggested in Ref.~\cite{Matsuura2024}.

\section{Discussion} \label{sec:discussion}
In this paper, we have constructed a fully as well as a partially universal decoder for classical source compression with quantum side information.  We apply this construction to the security analysis of the QKD and develop a new security-proof strategy.  
In the usual setup of universal source compression, the entropy of a state is given, and one constructs a protocol with it. However, in our new security proof using the universal decoder for classical source compression with quantum side information, we estimate the entropy during the protocol and construct a decoder with the estimated value.  This estimation procedure, combined with source compression, nicely fits the previously developed reduction from a permutation invariant state to (a mixture of) i.i.d.~states~\cite{Matsuura2024} at the cost of the polynomial overhead on the failure probability of the universal decoding. Thus, our new security proof can be applied to a wide range of protocols that can be made permutation symmetric. 

Notably, this new approach can achieve the asymptotically optimal key rate~\cite{Devetak2005}, which cannot be achieved in general by the conventional PEC-based approach as pointed out in several literature~\cite{Furrer2014, Tsurumaru2020, Tsurumaru2022}.  We numerically demonstrated the effectiveness of our new approach with the qubit B92 protocol, in which the conventional PEC-based approach fails to achieve the Devetak-Winter rate under the nonzero bit-error rate.  It has been shown that our new approach not only achieves the asymptotically optimal key rate but also has better finite-size performance than that of the conventional analysis.
This does not immediately imply that our new analysis is nearly optimal in the finite-size regime, since it is widely open whether the optimal error exponent of classical source compression with quantum side information for a known state can be achieved in universal coding as well.  (See Remark~\ref{rem:exponent}.)

The implications of our results are many.  Our result is a major step to unify the two mainstreams of security analysis, i.e., the LHL-based approach~\cite{Renner2008} and the PEC-based approach~\cite{Lo1999, Shor2000, Koashi2009, Matsuura2023}, even at the operational level, unlike the conceptual level as has been shown in Ref.~\cite{Tsurumaru2020, Tsurumaru2022}.  Comparing the performance of the LHL-based approach and our new approach in the finite-size scenario should be the next step.  We also find the condition in which the conventional PEC-based security proof can achieve the asymptotically optimal key rate for a given protocol.

Our approach here to reduce the problem of security against general attacks to that against i.i.d.~collective attacks resembles the post-selection technique~\cite{Christandl2009, Nahar2024}, which is conventionally used in the LHL-based approach. The main difference between LHL plus post-selection technique and our approach is that ours gets reduced to the problem of upper-bounding the ``failure probability'' of an information-theoretic task between Alice and Bob, which may be much more tractable than upper-bounding the distance between permutation-symmetric and i.i.d.~protocols as is done in the post-selection technique.  For example, imposing additional restrictions on Alice's and Bob's ability to carry out this information-theoretic task is allowed and still leads to a secure final key.  As a result, the reduction of the dimension of Alice's or Bob's system may be easier to justify in our new approach. 
This may open up a route for applying our technique developed here to continuous-variable QKD protocols in which we need to tackle the infinite dimensionality.

Our new PEC-based analysis developed here has limited applicability compared to the conventional PEC-based analysis since our new analysis requires permutation symmetry in the protocol.
In this regard, another open problem is to develop a classical (universal) source compression with quantum side information for more general non-i.i.d.~quantum states.  If this could be achieved, then we would not need to use the i.i.d.~reduction in the security analysis anymore, which then means we would not need to impose a permutation symmetry.
Considering a set of quantum Markovian states may be a first step towards this, which may be a counterpart in the PEC-based approach for the entropy-accumulation theorem~\cite{Dupuis2020, Metger2022, Metger2023, Arqand2024} in the LHL-based approach.

\begin{acknowledgments}
    T.~M.~thanks helpful discussion with Toyohiro Tsurumaru and Satoshi Yoshida.  This work was supported by JST, PRESTO Grant Number JPMJPR24FA, Japan; the Ministry of Internal Affairs and Communications (MIC), R\&D of ICT Priority Technology Project (grant number JPMI00316); Council for Science, Technology and Innovation (CSTI), Cross-ministerial Strategic Innovation Promotion Program (SIP), ``Promoting the application of advanced quantum technology platforms to social issues'' (Funding agency: QST).
\end{acknowledgments}

\appendix
\section{Equivalence between the actual protocol and the virtual protocol with a successive measurement} \label{sec:equivalence}
In this appendix, we prove Eq.~\eqref{eq:compatibility} for the actual and the virtual protocol introduced in Sec.~\ref{sec:virtual_protocol}. For this, we introduce two variants of the virtual protocol. The first variant, Variant~I, is given as follows.
\bigskip

\noindent --- Variant~I of the virtual protocol ---
\begin{enumerate}
    \item[1'.] Alice prepares an entangled state $\ket{\Psi}_{\sysA \sysC}$ such that there exists a POVM $\{M_{\sysA}^{a}\}_{a\in{\cal X}_{\sysA}}$ that satisfies Eq.~\eqref{eq:purified_state_compatible}, i.e., the same state preparation as that in the virtual protocol. The system $\sysC$ of $\ket{\Psi}_{\sysA \sysC}$ is sent to a quantum channel and received by Bob as a system $B$. Alice and Bob then perform measurements $\{M_{\sysA}^a\otimes M_{\sysB}^b\}_{a\in{\cal X}_{\sysA},b\in{\cal X}_{\sysB}}$ to obtain a pair $(\hat{a}, \hat{b})\in{\cal X}_{\sysA} \times {\cal X}_{\sysB}$. Alice and Bob repeat this quantum communication for $n_{\rm tot}$ rounds, where Alice may initiate each round before the previous round is completed.
    \item[2--5.] The same as those in the actual protocol.
\end{enumerate}
\medskip

From Eq.~\eqref{eq:purified_state_compatible}, it is clear that the joint states between $\bigl((\hat{a}^{(1)},\hat{b}^{(1)}),\ldots,(\hat{a}^{(n_{\rm tot})},\hat{b}^{(n_{\rm tot})})\bigr)$ and Eve's system are the same between the actual protocol and Variant~I. Thus, the probability distributions over the final key length are the same between these two protocols, and the final states $\{\rho_{K_A^n E|\hat{n}_{\rm fin}=n}^{\rm I, fin}\}_{n\in\mathbb{N}}$ for Variant~I satisfy 
\begin{equation}
    \forall n\geq 1,\qquad \rho_{K_A^n E|\hat{n}_{\rm fin}=n}^{\rm I, fin} = \rho_{K_A^n E|\hat{n}_{\rm fin}=n}^{\rm fin}.  \label{eq:equiv_actual-I}
\end{equation}
The second variant, Variant~II, is given as follows.

\bigskip 
\noindent --- Variant~II of the virtual protocol ---
\begin{enumerate}
    \item[1.] The same as that in the virtual protocol.
    \item[2'.] The same as that in the virtual protocol except that, for each $i=1,\ldots,n_{\rm tot}$, Alice measures the system $Q$ by a PVM $\{\Pi_Q[z]\}_{z\in{\cal X}_{\rm sift}}$ and obtain a sifted value $\hat{z}^{(i)}$ after the action of CP maps ${\cal M}_{\sysA \sysB}^{(\xi)}$ and ${\cal J}_{\sysA \sysB \to Q}^{(\xi)}$ to obtain $\hat{\xi}^{(i)}_{\rm pub}$.
    \item[3--5.] The same as those in the actual protocol.
\end{enumerate}
\medskip
 
It can be seen from Eqs.~\eqref{eq:failure_map} and \eqref{eq:success_map} that the joint state between the sifted values $(\hat{z}^{(1)},\ldots,\hat{z}^{(n_{\rm tot})})$, the announcements $(\hat{\xi}^{(1)}_{\rm pub},\ldots,\hat{\xi}^{(n_{\rm tot})}_{\rm pub})$ and Eve's system at the end of Step~2 in Variant~I is the same as that at the end of Step~2' of Variant~II. Thus, the probability distributions over the final key length are the same between these two protocols, and the final states $\{\rho_{K_A^n E|\hat{n}_{\rm fin}=n}^{\rm II, fin}\}_{n\in\mathbb{N}}$ for Variant~II satisfy 
\begin{equation}
    \forall n\geq 1, \qquad \rho_{K_A^n E|\hat{n}_{\rm fin}=n}^{\rm II, fin} = \rho_{K_A^n E|\hat{n}_{\rm fin}=n}^{\rm fin}. \label{eq:equiv_I-II}
\end{equation}
Finally, combining the facts that $U(H_i,\overline{H}_i)$ transforms the $Z$-basis value as in Eq.~\eqref{eq:action_on_Z_basis} and that Variant~II and the virtual protocol give Eve the same classical information through Steps~3 to~5, we have that the probability distributions over the final key length between these two protocols are the same, and that 
\begin{equation}
    \forall n\geq 1, \qquad \sum_{\bm{k}\in\mathbb{F}_p^{n}}\ketbra{\bm{k}}{\bm{k}}_{K_A^n}\rho_{K_A^nE|\hat{n}_{\rm fin}=n}^{\rm virt}\ketbra{\bm{k}}{\bm{k}}_{K_A^n} = \rho_{K_A^nE|\hat{n}_{\rm fin}=n}^{\rm II, fin}. \label{eq:equiv_II-virtual}
\end{equation}
We can thus prove Eq.~\eqref{eq:compatibility} by combining Eqs.~\eqref{eq:equiv_actual-I}, \eqref{eq:equiv_I-II}, and \eqref{eq:equiv_II-virtual}.

\section{Proof of the permutation invariance of the state in the estimation protocol} \label{sec:permutation_symmetry}
In this appendix, we show Eq.~\eqref{eq:insert_random_permutation}.
For any $\tau\in S_{n_{\rm tot}}$, let $\widetilde{\cal V}_{\tau}$ be a unitary channel representation of $\tau$ on the operators on ${\cal H}_{\sysA\sysB}^{\otimes n_{\rm tot}}$ in the same way as ${\cal V}_{\sigma}$ for $\sigma\in S_m$ on ${\cal H}_Q^{\otimes m}$ in Eq.~\eqref{eq:instrument_global}. Then, there exists $\tau_1\in S_m$ and $\tau_0\in S_{n_{\rm tot} - m}$ such that 
\begin{equation}
    {\cal T}_{{\cal I}\to [m]}\circ \widetilde{\cal V}_{\tau^{-1}} = \widetilde{\cal V}_{\tau_1^{-1}\times\tau_0^{-1}} \circ {\cal T}_{{\tau}({\cal I})\to [m]}, \label{eq:permutation_commute}
\end{equation}
where $\tau({\cal I})$ denotes an ordered set with elements $\tau(i)$ for any $i\in{\cal I}$. Note that given ${\cal T}_{{\cal I}\to [m]}$, the triple $(\tau({\cal I}), \tau_0, \tau_1)$ corresponds one-to-one to $\tau\in S_{n_{\rm tot}}$. The permutations $\tau_0$ and $\tau_1$ depend on $\tau$ and ${\cal I}$, but we omit this dependency for brevity. Next, suppose that the state $\widetilde{\cal V}_{\tau^{-1}}(\rho_{\sysA^{n_{\rm tot}}\sysB^{n_{\rm tot}}})$ is substituted to the map $\mathfrak{J}_{\sysA^{n_{\rm tot}}\sysB^{n_{\rm tot}}\to Q^m}^{(m,\Xi,\Theta)}$ in Eq.~\eqref{eq:instrument_global} instead of $\rho_{\sysA^{n_{\rm tot}}\sysB^{n_{\rm tot}}}$. Then, using Eq.~\eqref{eq:permutation_commute}, we have
\begin{align}
    &\mathfrak{J}_{\sysA^{n_{\rm tot}}\sysB^{n_{\rm tot}}\to Q^m}^{(m,\Xi,\Theta)} \circ\widetilde{\cal V}_{\tau^{-1}}(\rho_{\sysA^{n_{\rm tot}}\sysB^{n_{\rm tot}}}) \nonumber \\
    \begin{split}
        &= \sum_{\substack{{\cal I}\subseteq[n_{\rm tot}]\\ :|{\cal I}|=m}} \frac{1}{|S_{m}|}\sum_{\sigma\in S_{m}} \sum_{(\xi_1,\ldots,\xi_{n_{\rm tot}})\in \Delta_m(\Xi)}\,\sum_{\substack{(\theta_1,\ldots,\theta_{n_{\rm tot}})\\:n_{\rm tot}P_{\bm{\theta}}=\Theta}} \\
        &\qquad {\cal V}_{\sigma}\circ\left({\cal J}^{(\xi_1,\theta_1)}_{\sysA\sysB\to Q}\otimes \cdots \otimes {\cal J}^{(\xi_m,\theta_m)}_{\sysA\sysB\to Q} \otimes {\cal M}^{(\xi_{m+1},\theta_{m+1})}_{\sysA\sysB} \otimes \cdots \otimes {\cal M}^{(\xi_{n_{\rm tot}},\theta_{n_{\rm tot}})}_{\sysA\sysB}\right)\circ \widetilde{\cal V}_{\tau_1^{-1}\times\tau_0^{-1}}\circ {\cal T}_{\tau({\cal I})\to [m]}(\rho_{\sysA^{n_{\rm tot}}\sysB^{n_{\rm tot}}}) 
    \end{split}  \\
    \begin{split}
        &= \sum_{\substack{{\cal I}\subseteq[n_{\rm tot}]\\ :|{\cal I}|=m}} \frac{1}{|S_{m}|}\sum_{\sigma\in S_{m}} \sum_{(\xi_1,\ldots,\xi_{n_{\rm tot}})\in \Delta_m(\Xi)}\,\sum_{\substack{(\theta_1,\ldots,\theta_{n_{\rm tot}})\\:n_{\rm tot}P_{\bm{\theta}}=\Theta}} \\
        &\qquad  {\cal V}_{\sigma}\circ {\cal V}_{\tau_1^{-1}}\circ \left({\cal J}^{(\tau_1(\xi_1,\theta_1))}_{\sysA\sysB\to Q}\otimes \cdots \otimes {\cal J}^{(\tau_1(\xi_m,\theta_m))}_{\sysA\sysB\to Q} \right. \\
        &\hspace{3cm} \left.\otimes {\cal M}^{(\tau_0(\xi_{m+1},\theta_{m+1}))}_{\sysA\sysB} \otimes \cdots \otimes {\cal M}^{(\tau_0(\xi_{n_{\rm tot}},\theta_{n_{\rm tot}}))}_{\sysA\sysB}\right) \circ {\cal T}_{\tau({\cal I})\to [m]}(\rho_{\sysA^{n_{\rm tot}}\sysB^{n_{\rm tot}}}).
    \end{split} \label{eq:inner_permutation}
\end{align}
Noticing that the set $\Delta_m(\Xi)$ is invariant under any permutation in $S_m\times S_{n_{\rm tot}-m}$ and that the type $P_{\bm{\theta}}$ is invariant under any permutation in $S_{n_{\rm tot}}$, we have from Eqs.~\eqref{eq:instrument_global} and \eqref{eq:inner_permutation},
\begin{align}
    &\mathfrak{J}_{\sysA^{n_{\rm tot}}\sysB^{n_{\rm tot}}\to Q^m}^{(m,\Xi,\Theta)} \circ\widetilde{\cal V}_{\tau^{-1}}(\rho_{\sysA^{n_{\rm tot}}\sysB^{n_{\rm tot}}}) \nonumber\\
    \begin{split}
    &= \sum_{\substack{{\cal I}'\subseteq[n_{\rm tot}]\\ :|{\cal I}'|=m}} \frac{1}{|S_{m}| }\sum_{\sigma'\in S_{m}} \sum_{(\xi'_1,\ldots,\xi'_{n_{\rm tot}})\in \Delta_m(\Xi)}\,\sum_{\substack{(\theta'_1,\ldots,\theta'_{n_{\rm tot}})\\:n_{\rm tot}P_{\bm{\theta}}=\Theta}}\\
    &\qquad {\cal V}_{\sigma'}\circ\left({\cal J}^{(\xi'_1,\theta'_1)}_{\sysA\sysB\to Q}\otimes \cdots \otimes {\cal J}^{(\xi'_m,\theta'_m)}_{\sysA\sysB\to Q} \otimes {\cal M}^{(\xi'_{m+1},\theta'_{m+1})}_{\sysA\sysB} \otimes \cdots \otimes {\cal M}^{(\xi'_{n_{\rm tot}},\theta'_{n_{\rm tot}})}_{\sysA\sysB}\right)\circ {\cal T}_{{\cal I}'\to [m]}(\rho_{\sysA^{n_{\rm tot}}\sysB^{n_{\rm tot}}})
    \end{split} \\
    &= \mathfrak{J}_{\sysA^{n_{\rm tot}}\sysB^{n_{\rm tot}}\to Q^m}^{(m,\Xi,\Theta)}(\rho_{\sysA^{n_{\rm tot}}\sysB^{n_{\rm tot}}}),\label{eq:invariant_inner_perm}
\end{align}
and thus the map $\mathfrak{J}_{\sysA^{n_{\rm tot}}\sysB^{n_{\rm tot}}\to Q^m}^{(m,\Xi,\Theta)}$ is invariant under the permutation of an input density operator. 

Now, the random permutation CPTP map ${\cal R}_{\sysA^{n_{\rm tot}} \sysB^{n_{\rm tot}}}$ introduced in Eq.~\eqref{eq:insert_random_permutation} can explicitly be written as 
\begin{equation}
    {\cal R}_{\sysA^{n_{\rm tot}} \sysB^{n_{\rm tot}}}(\rho_{\sysA^{n_{\rm tot}} \sysB^{n_{\rm tot}}})\coloneqq \frac{1}{|S_{n_{\rm tot}}|}\sum_{\tau\in S_{n_{\rm tot}}} \widetilde{\cal V}_{\tau} (\rho_{\sysA^{n_{\rm tot}} \sysB^{n_{\rm tot}}}). \label{eq:permutation_symmetrization}
\end{equation} 
From Eq.~\eqref{eq:invariant_inner_perm}, we have 
\begin{equation}
    \mathfrak{J}_{\sysA^{n_{\rm tot}}\sysB^{n_{\rm tot}}\to Q^m}^{(m,\Xi,\Theta)} \circ {\cal R}_{\sysA^{n_{\rm tot}} \sysB^{n_{\rm tot}}}(\rho_{\sysA^{n_{\rm tot}}\sysB^{n_{\rm tot}}}) = \mathfrak{J}_{\sysA^{n_{\rm tot}}\sysB^{n_{\rm tot}}\to Q^m}^{(m,\Xi,\Theta)}(\rho_{\sysA^{n_{\rm tot}}\sysB^{n_{\rm tot}}}).
\end{equation}
Thus, from the definition of $\mathfrak{X}_{X^{m} Q^m}^{(m,\Xi,\Theta)}$ in Eq.~\eqref{eq:def_of_projected_state}, we have Eq.~\eqref{eq:insert_random_permutation}.

\bibliography{new_strategy_security_proof.bib}
\end{document}